\tikzset{>=stealth}
\newcommand{\tr}{\mathrm{Tr}}
\newcommand{\Z}{\mathbb{Z}}
\newcommand{\E}{\mathcal{E}}
\newcommand{\diff}[1]{{{#1}}}
\newcommand{\id}{\mathbb{1}}
\newcommand{\spt}{\mathrm{SPT}}
\newcommand{\triv}{\mathrm{triv.}}
\newcommand{\CZ}{{\mathrm{CZ}}}
\newtheorem{Theorem}{Theorem}
\newtheorem{Cor}{Corollary}
\theoremstyle{plain}
\newtheorem*{theorem*}{Theorem}
\newtheorem{conjecture}{Conjecture}
\newtheorem{definition}{Definition}
\begin{document}

\title{Many-body quantum catalysts for transforming between phases of matter}

\date{\today}

\author{David T. Stephen}
\affiliation{Department of Physics and Center for Theory of Quantum Matter, University of Colorado Boulder, Boulder, CO 80309, USA}
\affiliation{Department of Physics, California Institute of Technology, Pasadena, CA 91125, USA}
\affiliation{Quantinuum, 303 S Technology Ct, Broomfield, CO 80021, USA}
\author{Rahul Nandkishore}
\affiliation{Department of Physics and Center for Theory of Quantum Matter, University of Colorado Boulder, Boulder, CO 80309, USA}
\author{Jian-Hao Zhang}
\email{Sergio.Zhang@colorado.edu}
\affiliation{Department of Physics and Center for Theory of Quantum Matter, University of Colorado Boulder, Boulder, CO 80309, USA}

\begin{abstract}
A catalyst is a substance that enables otherwise impossible transformations between states of a system, without being consumed in the process. In this work, we apply the notion of catalysts to many-body quantum physics. In particular, we construct catalysts that enable transformations between different symmetry-protected topological (SPT) phases of matter using symmetric finite-depth quantum circuits. We discover a wide variety of catalysts, including GHZ-like states that spontaneously break the symmetry, gapless states with critical correlations, topological orders with symmetry fractionalization, and spin-glass states. These catalysts are all united under a single framework that has close connections to the theory of quantum anomalies, and we use this connection to put strong constraints on possible pure- and mixed-state catalysts.
We also show how the catalyst approach leads to new insights into the structure of certain phases of matter, and to new methods to efficiently prepare SPT phases with long-range interactions or projective measurements. 
\end{abstract}

\maketitle

\section{Introduction}

In chemistry, a catalyst is a substance that enables or speeds up a chemical reaction that would otherwise be impossible or slow. Furthermore, the catalyst is not consumed in the process, so it can be reused many times. The idea of catalysts has also been successfully applied to few-body quantum mechanics. For example, an entanglement catalyst is a quantum state that enables otherwise impossible transformations between two states using local operations and classical communication \cite{Jonathan1999}, and this perspective has led to new insights into entanglement theory \cite{Kondra2021}. Catalysts have also been applied to more general quantum resources theories \cite{Anshu2018, Marvian2022} and the generation of magic states for fault-tolerant quantum computation \cite{Gidney2019efficientmagicstate,fang2024surpassing}. In general, understanding the role of catalysts leads to more efficient methods of achieving a desired transformation, and also in many cases provides new insights into the fundamental properties of said transformation.

In this paper, we apply catalysts to the setting of many-body quantum physics. In this setting, spatial locality plays a crucial role, and a natural kind of transformation is that which can be generated in finite time by a local Hamiltonian. Here, ``finite'' means independent of system size, allowing us to talk about transformations between states in the thermodynamic limit. It is known that such transformations are equivalent to finite-depth quantum circuits (FDQCs) \cite{Chen2010}. Whether or not two many-body states can be related via FDQC is a question of both fundamental and practical importance. On one hand, the classification of topological phases of matter coincides exactly with the equivalence classes of many-body ground states under FDQCs, with a trivial state being one that can be prepared from an unentangled states using an FDQC \cite{Chen2010}. On the other hand, FDQCs are the operation that is most naturally implemented in a quantum computer, so any transformation implemented by FDQC is said to be ``easy''. 

\begin{figure}
    \centering
    \includegraphics[width=0.99\linewidth]{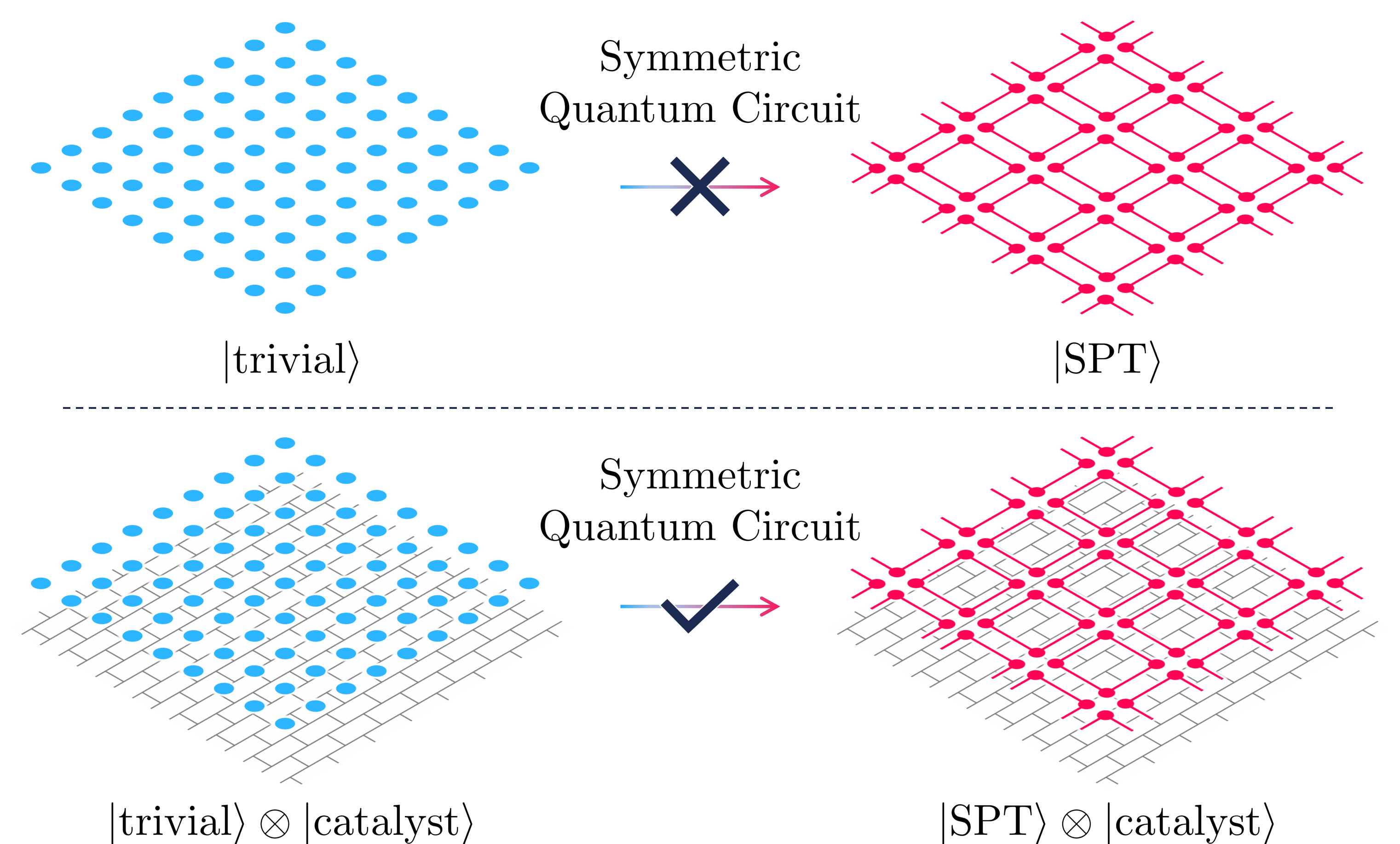}
    \caption{Certain many-body quantum states, such as those belonging to different SPT phases of matter, cannot be related using symmetric finite-depth quantum circuits. This changes in the presence of an appropriate long-range entangled many-body catalyst.}
    \label{fig:main}
\end{figure}

One can additionally impose symmetries on the Hamiltonian or circuit, and a state that can be mapped to a product state using an FDQC, but cannot be using a symmetric FDQC, is said to belong to a non-trivial symmetry-protected topological (SPT) phase of matter \cite{Chen2013}. This is the class of states that we focus on in this work. In particular, we construct catalysts that enable transformations between distinct SPT phases of matter using symmetric FDQCs, something that is impossible without a catalyst, see Fig.~\ref{fig:main}. These catalysts \diff{are themselves many-body states matching the dimensionality of the system, and they} have widely varying physical properties including symmetry-breaking states with cat-like entanglement, gapless ground states of conformal field theories, topologically ordered states with fractionalized excitations, and disordered spin-glass states. All of these types of catalyst are unified under a single framework. Namely, if a transformation between SPT phases with symmetry group $G$ can be implemented by a symmetric quantum cellular automaton $\mathcal{U}$, then any state that is invariant under both $G$ and $\mathcal{U}$ can catalyze the transformation by symmetric FDQC. Remarkably, the combination of $G$ and $\mathcal{U}$ symmetry has deep connections to the notion of quantum anomalies which, in particular, implies that pure-state catalysts must have long-range entanglement (although we show this is not true for mixed-state catalysts).

Using our framework, we construct explicit examples of catalysts and their associated symmetric FDQCs to transform between a large variety of SPT phases of matter \diff{with on-site unitary symmetry groups}. \diff{In general, pure state catalysts are at least as difficult to prepare as the SPT phases themselves. However, since the catalyst can be reused, this hard work only needs to be done once, since a single catalyst can be used to prepare many SPT states. On the other hand, we show that mixed-state catalysts can be prepared in finite time using symmetric local channels.} Furthermore in some cases, the form of the catalysts helps us to learn more about physical properties the SPT phase at hand. We also show how the catalyzed transformation can be used to construct new methods of efficient state preparation using, for example, long-range power-law decaying interactions or projective measurements. 
Overall, our results pave the way for studying quantum catalysts in the context of many-body quantum systems.

The rest of the paper is organized as follows. In Sec.~\ref{sec:catalysts}, we define our notion of many-body catalysts, and present Theorem.~\ref{thm:catalyst} which gives a generic recipe to construct catalysts for SPT phases. We then discuss a simple example of these ideas which is intimately related to the famous Lieb-Schultz-Mattis (LSM) anomaly \cite{Lieb1961}. In Sec.~\ref{sec:global_spt} we construct catalysts that transform between all SPT phases with global symmetries that lie within the so-called group cohomology classification \cite{Chen2013}. In Sec.~\ref{sec:other_symms}, we consider higher-form and subsystem symmetries. In Sec.~\ref{sec:mixed}, we then generalize our notion of catalysts to allow mixed states. We give examples of mixed-state catalysts that relate to the recent notion of strong-to-weak symmetry breaking \cite{SWSSB2024}, and we then characterize all mixed-state catalysts in one dimension. In Sec.~\ref{sec:state_prep}, we show how the catalyst framework can lead to novel methods to efficiently prepare SPT phases. Finally, in Sec.~\ref{sec:discussion}, we discuss future directions of study for many-body catalysts.

\section{Many-body catalysts} \label{sec:catalysts}

In this section, we define our notion of many-body quantum catalysts, give a systematic procedure to construct a large class of catalysts, and present a simple case study that illustrates the diversity of this class.

\subsection{Definitions}

First, we establish some terminology. A \textit{quantum cellular automaton (QCA)} is a unitary $\mathcal{U}$ that preserves locality \cite{schumacher2004,Farrelly2020reviewofquantum}. That is, for any operator $O$ supported in a ball of radius $r$, the conjugated operator $\mathcal{U}O\mathcal{U}^\dagger$ is supported in a ball of radius $r+d$ for some constant $d$. A typical example of a QCA is a \textit{finite-depth quantum circuit (FDQC)}, which is a unitary operator of the form,
\begin{equation}
    \mathcal{U} = \prod_{\ell=1}^D\left( \prod_i u_{\ell,i}\right),
\end{equation}
where each $u_{\ell,i}$, referred to as a gate, acts only on qubits within a ball of radius $r$ and the gates within a given layer $\ell$ have non-overlapping support. Both $D$ and $r$ should furthermore be independent of system size. An example of a QCA beyond FDQCs is the lattice translation operator $\mathcal{T}$. Given some symmetry operators $U(g)$ for $g$ in some group $G$, we define a \textit{$G$-symmetric QCA} to be a QCA that commutes with $U(g)$ for all $g\in G$. A \textit{$G$-symmetric FDQC} is similarly an FDQC where each gate $u_{\ell,i}$ commutes with $U(g)$ for all $g\in G$. We remark that it is possible for a symmetric QCA to be an FDQC but not a symmetric FDQC, meaning that the operator is symmetric as a whole but cannot be written in terms of local symmetric gates.

\diff{In this paper, we consider bosonic SPT phases with an on-site unitary symmetry $U(g)=\prod_i u_i(g)$ for $g\in G$ (which is not necessarily spatially uniform). The objects of interest are symmetric quantum states which satisfy $U(g)|\psi\rangle = |\psi\rangle$ for all $g\in G$.} The classification of SPT phases is based on equivalence classes of symmetric quantum states modulo symmetric FDQCs \cite{Chen2013}. That is, two symmetric states $|\psi_1\rangle$ and $|\psi_2\rangle$ are in the same SPT phase if and only if there is a symmetric FDQC $\mathcal{U}$ such that 
\begin{equation} \label{eq:spt_equiv}
    |\psi_1\rangle = \mathcal{U}|\psi_2\rangle.
\end{equation}
If we define the trivial SPT phase to be the phase containing symmetric product states, then the above definition implies that a state belonging to a non-trivial SPT phase can never be prepared from a symmetric product state using a symmetric FDQC. Instead, one requires a symmetric quantum circuit whose depth grows linearly with system size \cite{Huang2015,Chen2024sequential} or a symmetric circuit with long-range gates \cite{Stephen2024klocal}.

Typically, one also extends the above definition to allow the addition of ancillary degrees of freedom in symmetric product states to either side of Eq.~\eqref{eq:spt_equiv}. This allows the comparison of states on different system sizes, different lattices, and with different local Hilbert spaces (\textit{e.g.} spin-1/2 and spin-1 particles). One way to understand the results of this paper is that we study how the classification of SPT phases changes when non-trivial ancillae are allowed.

We define our notion of many-body catalysts as follows,
\begin{definition}[Many-body catalysts] \label{def:catalyst}
    Given a pair of $G$-symmetric states $|\psi_0\rangle$ and $|\psi_1\rangle$, we say that a $G$-symmetric state $|a\rangle$ catalyzes the transformation $|\psi_0\rangle\mapsto|\psi_1\rangle$ if there exists a $G$-symmetric FDQC $\mathcal{V}$ such that,
    \begin{equation}
        \mathcal{V}(|\psi_0\rangle\otimes |a\rangle) = |\psi_1\rangle\otimes |a\rangle.
    \end{equation}
\end{definition}
\noindent
Observe that we only consider catalysts which are themselves symmetric such that the symmetry of the whole system is preserved. \diff{Furthermore, the dimensionality of the catalyst should match that of the system that is being catalyzed, i.e. the catalyst should be defined on a constant number of ancillary degrees of freedom per site $i$. If one allows higher dimensional catalysts, for example, then the pumping construction of Ref.~\cite{Tantivasadakarn2022pump} shows that even a symmetric product state can serve as a catalyst.}

Here we considered only pure-state catalysts, but in Sec.~\ref{sec:mixed} will also allow mixed-state catalysts which are defined using a suitable generalization of Definition~\ref{def:catalyst}. Naturally, since the definition of many-body catalysts include the notion of an FDQC, it should only be applied in a scenario where the initial, final, and catalyst states can be defined on systems of increasing size, and the depth $D$ and range $r$ of the FDQC are both independent of system size.

To understand this definition operationally, we imagine a resource theory-like scenario where local, symmetric interactions are easy to implement, but other kinds of interactions are hard. Consider the case where $|\psi_0\rangle = |\triv\rangle$, a trivial symmetric product state, and $|\psi_1\rangle$ belongs to a non-trivial SPT phase. Then it is ``hard'' to transform between the two states. However, in the presence of the catalyst $|a\rangle$, the transformation is ``easy'' since it can be implemented by a symmetric FDQC. Furthermore, the catalyst remains unchanged in the process. Therefore, one can imagine that, with ample supply of product states $|\triv\rangle$, SPT states can be continually ``pulled'' out of the catalyst. The goal of this paper is to characterize catalysts for SPT phases, construct examples, relate them to the physics of SPT order, and examine the applications to state preparation in quantum devices.

This operational definition also leads to a strong constraint on the form of possible catalysts. Namely, it cannot be possible to generate a catalyst for a non-trivial SPT phase by acting with a symmetric FDQC on a product state. If this was possible, then the process of creating the catalyst, using it to catalyze the SPT phase, and destroying the catalyst describes an symmetric FDQC that creates the SPT phase from a product state, which is a contradiction. More generally, the catalyst  $|a\rangle$ cannot be symmetrically invertible \cite{SWSSB2024}, meaning that there cannot exist a state $|a^{-1}\rangle$ such that $|a\rangle \otimes |a^{-1}\rangle$ is connected to a symmetric product state via a symmetric FDQC. In other words, it must be at least as hard to make the catalyst as it is to make the SPT phase on its own. However, the advantage of using a catalyst is that the hard work only needs to be performed once, since the catalyst is not consumed after use and can therefore be re-used.

\subsection{$(G,\mathcal{U})$-symmetric catalysts}

Throughout this paper, we will primarily focus on a particular class of catalysts that we call $(G,\mathcal{U})$-symmetric catalysts. This class will be well-suited to studying all kinds of SPT phases.
Their construction is based on two facts. First, any fixed-point SPT state $|\mathrm{SPT}\rangle$ within the group cohomology classification can be created from a symmetric product state $|\triv\rangle$ from a symmetric QCA $\mathcal{U}$ called the \textit{SPT entangler} \cite{Chen2013},
\begin{equation}
    |\mathrm{SPT}\rangle = \mathcal{U}|\triv\rangle\ .
\end{equation}
It is important to note that, while $\mathcal{U}$ is a symmetric QCA, it is not a symmetric FDQC.
Second, given any symmetric QCA $\mathcal{U}$, the operator $\mathcal{U}\otimes \mathcal{U}^{-1}$ on the doubled Hilbert space can be written as a symmetric FDQC with respect to the doubled symmetry $U'(g) = U(g)\otimes U(g)$ \cite{Farrelly2020reviewofquantum,Stephen2024klocal}. To do this, define the SWAP operator $\mathcal{S}$ that exchanges the two Hilbert spaces, $\mathcal{S}(|\psi\rangle\otimes |\phi\rangle)=|\phi\rangle\otimes |\psi\rangle$. This can be written as a product of local SWAP gates, $\mathcal{S}=\prod_i s_i$ where each $s_i$ exchanges site $i$ between the two Hilbert spaces. Note that $s_i$ commutes with $U'(g)$ since $U(g)$ is a product on onsite operators. Then, we can write,
\begin{equation} \label{eq:qca_symmfdqc}
\begin{aligned}
    \mathcal{U}\otimes \mathcal{U}^{-1} 
    &=  (\mathcal{U}\otimes I) \mathcal{S}^{-1} (\mathcal{U}^{-1}\otimes I) \mathcal{S} \\
    &= \prod_i v_i \prod_i s_i,
\end{aligned}
\end{equation}
where $v_i = (\mathcal{U}\otimes I) s_i (\mathcal{U}^{-1}\otimes I)$. Since $\mathcal{U}$ is locality preserving, $v_i$ is a local operator supported near sites $i$. Finally, since both $s_i$ and $\mathcal{U}\otimes I$ commute with $U'(g)$, $v_i$ does as well. Therefore, Eq.~\eqref{eq:qca_symmfdqc} is a decomposition of $\mathcal{U}\otimes \mathcal{U}^{-1}$ into local, $U'(g)$-symmetric gates which can be suitably divided into finitely-many disjoint layers, so $\mathcal{U}\otimes \mathcal{U}^{-1}$ is a symmetric FDQC.

With the above facts, we can describe our construction of SPT catalysts, which is summarized as follows,
\begin{Theorem} \label{thm:catalyst}
    Consider an arbitrary $G$-symmetric state $|\psi\rangle$ and a $G$-symmetric QCA $\mathcal{U}$. Then, any $(G,\mathcal{U})$-symmetric state can catalyze the transformation $|\psi\rangle \mapsto \mathcal{U}|\psi\rangle$.
\end{Theorem}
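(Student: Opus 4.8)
The plan is to take the catalyzing circuit to be precisely the operator $\mathcal{V} = \mathcal{U}\otimes\mathcal{U}^{-1}$ that already appeared in Eq.~\eqref{eq:qca_symmfdqc}, with the first tensor factor acting on the Hilbert space carrying $|\psi\rangle$ and the second factor acting on the Hilbert space carrying the catalyst $|a\rangle$. For this to make sense one needs the system and the catalyst to live on a common lattice with matching local Hilbert spaces, so that the SWAP operator $\mathcal{S}=\prod_i s_i$ used in Eq.~\eqref{eq:qca_symmfdqc} is well-defined; this is part of the standing setup for many-body catalysts, so I would only flag it.

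First I would invoke the discussion surrounding Eq.~\eqref{eq:qca_symmfdqc}: because $\mathcal{U}$ is a $G$-symmetric QCA, the operator $\mathcal{V}=\mathcal{U}\otimes\mathcal{U}^{-1}=\prod_i v_i\prod_i s_i$ is a finite-depth quantum circuit, with depth and gate range controlled by the spreading constant $d$ of $\mathcal{U}$ and hence independent of system size, and moreover each gate $s_i$ and each $v_i=(\mathcal{U}\otimes I)s_i(\mathcal{U}^{-1}\otimes I)$ commutes with the doubled symmetry $U'(g)=U(g)\otimes U(g)$. Therefore $\mathcal{V}$ is a $G$-symmetric FDQC with respect to exactly the symmetry that acts on the composite state $|\psi\rangle\otimes|a\rangle$, which is what Definition~\ref{def:catalyst} demands.

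Next I would simply evaluate $\mathcal{V}(|\psi\rangle\otimes|a\rangle)=\mathcal{U}|\psi\rangle\otimes\mathcal{U}^{-1}|a\rangle$. Since $|a\rangle$ is $(G,\mathcal{U})$-symmetric it is in particular $\mathcal{U}$-invariant, $\mathcal{U}|a\rangle=|a\rangle$ (up to a global phase, which I would absorb or eliminate by taking invariance to be exact), hence also $\mathcal{U}^{-1}|a\rangle=|a\rangle$, and so $\mathcal{V}(|\psi\rangle\otimes|a\rangle)=\mathcal{U}|\psi\rangle\otimes|a\rangle$. This is exactly the catalysis condition of Definition~\ref{def:catalyst} for the transformation $|\psi\rangle\mapsto\mathcal{U}|\psi\rangle$. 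Here the $G$-invariance half of the $(G,\mathcal{U})$-symmetry is used to ensure that $|a\rangle$ is an admissible symmetric catalyst, and the fact that $\mathcal{U}$ is $G$-symmetric ensures that the target state $\mathcal{U}|\psi\rangle$ is again $G$-symmetric.

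I do not expect a genuine obstacle: the entire content of the theorem is carried by the identity Eq.~\eqref{eq:qca_symmfdqc}, so the theorem is essentially a corollary of the preceding construction. The only points needing a small amount of care are (i) confirming that the FDQC so produced really has system-size-independent depth and gate range, which follows directly from the locality-preserving property of the QCA, and (ii) the mild bookkeeping about Hilbert-space matching and global phases noted above, which I would state explicitly but not belabor.
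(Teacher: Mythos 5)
Your proposal is correct and follows essentially the same route as the paper: take $\mathcal{V}=\mathcal{U}\otimes\mathcal{U}^{-1}$, use the decomposition of Eq.~\eqref{eq:qca_symmfdqc} to see it is a $G$-symmetric FDQC, and then use $\mathcal{U}$-invariance (hence $\mathcal{U}^{-1}$-invariance) of the catalyst to conclude $\mathcal{V}(|\psi\rangle\otimes|a\rangle)=\mathcal{U}|\psi\rangle\otimes|a\rangle$. The small bookkeeping points you flag (matching Hilbert spaces, global phases) are fine and do not change the argument.
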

\begin{proof}
    Let $|a\rangle$ be a state such that $U(g)|a\rangle=|a\rangle$ for all $g\in G$ and also $\mathcal{U}|a\rangle = |a\rangle$, and let $\mathcal{V}=\mathcal{U}\otimes \mathcal{U}^{-1}$. From the above, we know that $V$ is a symmetric FDQC. Then we have,
    \begin{equation}
         \mathcal{V}(|\psi\rangle \otimes |a\rangle) = \mathcal{U}|\psi\rangle \otimes \mathcal{U}^{-1}|a\rangle = \mathcal{U}|\psi\rangle \otimes |a\rangle,
    \end{equation}
    where we used the simple observation that, if $|a\rangle$ is invariant under $\mathcal{U}$, it is also invariant under $\mathcal{U}^{-1}$.
\end{proof}
In particular, if we take $|\psi\rangle = |\triv\rangle$ and choose $\mathcal{U}$ to be the entangler of a state $|SPT\rangle$ belonging to a non-trivial SPT phase, then this result implies that any $(G,\mathcal{U})$-symmetric state can catalyze the transformation $|\triv\rangle\mapsto |SPT\rangle$.

Theorem~\ref{thm:catalyst} is an important result that will help us to construct a large class of SPT-catalysts with various physical properties. Furthermore, Theorem~\ref{thm:catalyst} establishes an important relationship between catalysts and anomalies. In the present context, a symmetry group is said to be anomalous if no short-range entangled state can be invariant under all elements of the group. A state is said to have short-range entanglement (SRE) if it can be prepared from a product state using an FDQC, and otherwise it has long-range entanglement (LRE) \cite{Chen2010}. In general, we expect that the set of symmetries $\mathcal{U}$ and $U(g)$ for all $g\in G$ is anomalous whenever $\mathcal{U}$ is an entangler for a non-trivial SPT phase. One can argue this based on the fact that these are the symmetries that emerge on the boundary of a decorated domain wall SPT-ordered state in one higher dimension \cite{Chen2014ddw}, and it is known that such states do not allow short-range entangled boundary states. For the case of abelian symmetries in 1D systems, we explicitly prove in Sec.~\ref{sec:mixed} that $(G,\mathcal{U})$-symmetric catalysts must have LRE. The fact that anomalies can ``eat'' SPT phases has previously been observed in Refs.~\cite{Hsin2020,Dumitrescu2024,choi2024noninvertible}.

In 1D, common states with LRE include symmetry-breaking states with cat-like entanglement, ground states of gapless Hamiltonians, and states consisting of long-range entangled pairs of qubits. Once we go to higher dimensions, there are other kinds of states with LRE including topologically ordered states. We will see that all of these classes of states are useful as catalysts for certain SPT phases. In Sec.~\ref{sec:mixed}, we will see that mixed-state catalysts do not need to be long-range entangled. 

\subsection{Case study: 1D SPT and LSM} \label{sec:1dlsm}

\begin{figure}
    \centering
    \includegraphics[width=0.8\linewidth]{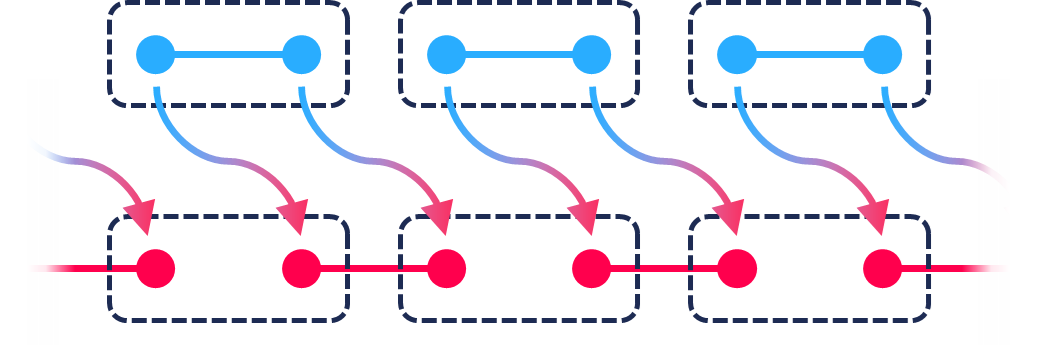}
    \caption{The trivial (top) and non-trivial (bottom) SPT states where each pair of dots connected by a line represents a Bell state. The two-qubit unit cell is indicated by the dotted boxes, and the arrows denote the action of the lattice translation operator $\mathcal{T}$ that relates the two states.}
    \label{fig:1dspt}
\end{figure}

Here, we present a case study of SPT catalysts that highlights some of the different kinds of catalysts that fall under the framework of Theorem~\ref{thm:catalyst}. In this case, the anomaly described above will be the familiar Lieb-Schultz-Mattis (LSM) anomaly \cite{Lieb1961}. The two states we consider are defined on $2N$ qubits with periodic boundary conditions as follows,
\begin{equation} \label{eq:dimer_triv}
    |\triv\rangle = \bigotimes_{i=1}^{N} |\Omega\rangle_{2i,2i+1},
\end{equation}
\begin{equation} \label{eq:dimer_triv}
    |\mathrm{SPT}\rangle = \bigotimes_{i=1}^{N} |\Omega\rangle_{2i-1,2i},
\end{equation}
where $|\Omega\rangle = (|00\rangle + |11\rangle)/\sqrt{2}$ is the Bell state. If we block sites into pairs of qubits, $|\triv\rangle$ describes a product of Bell states contained within unit cells, while $|\mathrm{SPT}\rangle$ describes a product of Bell states between unit cells, see Fig.~\ref{fig:1dspt}. Both states are invariant under a $\Z_2\times \Z_2$ symmetry generated by $\prod_i X_i$ and $\prod_i Z_i$. It is well-known that these are representative fixed-point states for the trivial and non-trivial SPT phases with this symmetry \cite{Chen2011,Schuch2011,Chen2013}. Therefore, there is no symmetric FDQC relating them.

To apply Theorem~\ref{thm:catalyst}, we need to identify the entangler $\mathcal{U}$. In this case, one particularly simple choice is to take $\mathcal{U}$ to be the lattice translation operator $\mathcal{T}$ that maps site $i$ to site $i+1$ as shown in Fig.~\ref{fig:1dspt}. We emphasize that $\mathcal{T}$ translates single qubits, rather than unit cells, so it acts non-trivially within a unit cell. The translation operator is an example of a QCA. Furthermore, it clearly commutes with both symmetry operators so it is a symmetric QCA. The fact that $\mathcal{T}\otimes \mathcal{T}^{-1}$ can be implemented as a symmetric FDQC has been discussed previously \cite{Farrelly2020reviewofquantum,Po2016,Stephen2024klocal}. To find a catalyst, we need a state that is both translationally invariant and symmetric under $\prod_i X_i$ and $\prod_i Z_i$. After finding such a state, we can apply Theorem~\ref{thm:catalyst} to construct a symmetric FDQC that translates $|\triv\rangle$ forwards by one site, transforming it into $|\mathrm{SPT}\rangle$, and translates the catalyst backward by one site, which leaves it unchanged because the catalyst is translationally invariant. This accomplishes the desired catalyzed transformation.

When we search for symmetric, translationally invariant states, we run into the well-known LSM anomaly \cite{Lieb1961}. While $\prod_i X_i$ and $\prod_i Z_i$ commute within unit-cells of size two, they anti-commute within a single site since $XZ=-ZX$. Therefore, a translationally-invariant catalyst must carry a fractional charge, \textit{i.e.} a projective representation, per unit cell. The LSM theorem and its later generalizations state that no short-range entangled state can satisfy this constraint \cite{Lieb1961,Schuch2011}. Typical ways to avoid the LSM theorem come from long-range order originating from spontaneous symmetry breaking of the internal or translation symmetry, or critical correlations found in ground states of gapless Hamiltonians. We will find that each of these cases outlines a generic recipe for constructing catalysts.

First, the catalyst can spontaneously break (part of) the internal $\Z_2\times \Z_2$ symmetry. An example of such a catalyst in the Greenberger-Horne-Zeilinger (GHZ) state,
\begin{equation}
    |a\rangle = |\mathrm{GHZ}\rangle = \frac{1}{\sqrt{2}}(|00\cdots 0\rangle + |11\cdots 1\rangle),
\end{equation}
which is indeed symmetric and translationally invariant. Observe that this state has a long-range order for operators charged under the symmetry $\prod_i X_i$. 

Second, the catalyst can spontaneously break translation symmetry. An example of this is obtained by simply forming the superposition 
\begin{equation}
    |a\rangle = \frac{1}{\mathcal{N}}(|\triv\rangle + |\mathrm{SPT}\rangle),
\end{equation}
for some normalization constant $\mathcal{N}$. This state, which is a superposition of dimer-like coverings, has a structure reminiscent of the famous Majumdar-Ghosh state \cite{majumdar1969next}.

Third, the catalyst can be the ground state of a gapless Hamiltonian. To construct such an example, note that the trivial and SPT states are ground states of the following gapped Hamiltonians,
\begin{align}
    H_{\triv} &= -\sum_i X_{2i}X_{2i+1} - \sum_i Z_{2i}Z_{2i+1}, \\
    H_{\mathrm{SPT}} &= -\sum_i X_{2i-1}X_{2i} - \sum_i Z_{2i-1}Z_{2i}.
\end{align}
Then, the we define the Hamiltonian,
\begin{equation} \label{eq:hxy}
    H_a = H_{\triv} + H_{\mathrm{SPT}},
\end{equation}
which is gapless and has a unique ground state with quasi long-range correlations (i.e. two-point correlations decaying as a power law) \cite{Thorngren_2021}. The ground state of this Hamiltonian will be $(G,\mathcal{U})$-symmetric and therefore a valid catalyst.

Finally, we give an example that does not fit into the usual LSM framework since it is not the ground state of a local Hamiltonian. Namely, we can construct the product of long-range Bell states as,
\begin{equation} \label{eq:lr_bell}
    |a\rangle = \bigotimes_{i=1}^N |\Omega\rangle_{i,i+N}.
\end{equation}
This example is interesting because it leads to a novel method to prepare SPT phases with long-range interactions, as discussed in Sec.~\ref{sec:state_prep}. \diff{Although this example does not fit into the LSM framework, it is constrained by a more general condition that we describe in Sec.~\ref{sec:mixed} (Theorem \ref{thm:loc}).}

\section{Catalysts for SPT phases with global symmetries} \label{sec:global_spt}

Here we construct catalysts for all SPT phases that lie within the group cohomology classification \cite{Chen2013}. These phases can be represented by fixed-point states called \textit{cocycle states}, which are defined on a triangulation of a $d$-dimensional space with onsite the Hilbert spaces $\mathbb{C}^{|G|}$ spanned by the states $|g\rangle$ for $g\in G$. The global symmetry is defined by $U(g)=u(g)^{\otimes N}$ where $u(g)|h\rangle = |gh\rangle$. The states in $d$-dimensions are defined by a $d+1$-cocycle which is a function $\nu_{d+1}:G^{d+2} \mapsto U(1)$ satisfying certain conditions, see Ref.~\cite{Chen2013} for more details. Cocycles can be divided into certain equivalence classes called cohomology classes which are in a one-to-one correspondence with the ``in-cohomology'' SPT phases \cite{Chen2013}. These cover all SPT phases in low dimensions \cite{Else2014}, but certain ``beyond-cohomology'' phases can exist in four dimensions and higher \cite{kapustin2014,Fidkowski2020} (or three dimensions if time-reversal symmetry is allowed \cite{Burnell2014}).

Given a cocycle $\nu_d$, we can define a finite-depth quantum circuit $\mathcal{U}(\nu_d)$ such that,
\begin{equation} \label{eq:cocycle_state}
    |\psi(\nu_d)\rangle = \mathcal{U}(\nu_d)|+_G+_G\cdots +_G\rangle,
\end{equation}
is a fixed-point state belonging to the corresponding SPT phase, where $|+_G\rangle = \frac{1}{\sqrt{|G|}}\sum_{g\in G}|g\rangle$ \cite{Chen2013}. The circuit $\mathcal{U}(\nu_d)$ commutes with the symmetry $U(g)$ as a whole. However, it is only a symmetric FDQC (meaning the individual gates are symmetric) if $\nu_d$ belongs to a trivial cohomology class. Without loss of generality, we can always assume that $\mathcal{U}(\nu_d)$ has finite order, meaning that $\mathcal{U}(\nu_d)^L=I$ for some $L$. This is physically related to the fact that the set of SPT phases with symmetry $G$ forms a finite group, see Appendix~\ref{sec:cocycle_details} for details.

To constructs catalysts for cocycle states, we follow Theorem~\ref{thm:catalyst} and search for $(G,\mathcal{U}(\nu_d))$-symmetric states.
The first approach is to construct a state that spontaneously breaks the internal symmetry $G$,
\begin{equation} \label{eq:cocycle_catalyst_1}
    |a\rangle = \frac{1}{\sqrt{|G|}}\sum_{g\in G} |gg\dots g\rangle.
\end{equation}
This state is $G$-symmetric by construction. In Appendix~\ref{sec:cocycle_details}, we show that $\nu_d$ can again be chosen such that $\mathcal{U}(\nu_d)|gg\dots g\rangle = |gg\dots g\rangle$ for all $g\in G$, so the state is also $\mathcal{U}(\nu_d)$-symmetric. This state completely breaks the $G$ symmetry. More generally, it is sufficient to only break the symmetry down to a subgroup $H$ which has the property that the SPT phase is trivial when only $H$ symmetry is enforced.

Alternatively, we can create a state that spontaneously breaks $\mathcal{U}(\nu_d)$ symmetry,
\begin{equation} \label{eq:cocycle_catalyst_2}
    |a\rangle = \frac{1}{\mathcal{N}} \sum_{\ell=1}^{L} \mathcal{U}(\nu_d)^\ell |+_G+_G\cdots +_G\rangle,
\end{equation}
for a normalization constant $\mathcal{N}$. This state is $(G,\mathcal{U}(\nu_d))$-symmetric by construction.

The final construction uses a Hamiltonian perspective, and often produces gapless catalysts (\textit{i.e.} states that are the ground states of gapless Hamiltonians). Consider the trivial paramagnetic Hamiltonian $H_0$ which has $|+_G+_G\cdots +_G\rangle$ as its unique gapped ground state. Then we construct the following Hamiltonian,
\begin{equation} \label{eq:ham_cat}
    H_a = \sum_{\ell=1}^L \mathcal{U}(\nu_d)^\ell H_{\triv} \mathcal{U}(\nu_d)^{\ell\dagger}.
\end{equation}
Since this Hamiltonian commutes with $G$ and $\mathcal{U}$ by construction, its ground state will be $(G,\mathcal{U})$-symmetric. If there are multiple ground states, we can always find a symmetric state within the ground space. For the special case where $\ell = 2$, Eq.~\eqref{eq:ham_cat} can be interpreted as being part of a continuous family of Hamiltonians,
\begin{equation}
    H_a(\alpha) = \alpha H_{\triv} + (1-\alpha) H_{\spt},
\end{equation}
where we defined $H_{\spt}=\mathcal{U}(\nu_d) H_{\triv} \mathcal{U}(\nu_d)^{\dagger}$ and $H_a\equiv H_a(1/2)$. This Hamiltonian is self-dual since the two terms are exchanged after conjugation by $\mathcal{U}(\nu_d)$. 
In some cases, $\alpha=1/2$ corresponds to a second-order phase transition such that $H_a$ is gapless and has a unique ground state with critical correlations, as in Eq.~\eqref{eq:hxy}. However, this is not always the case, and $\alpha=1/2$ may instead correspond to a first-order phase transition or belong to an intermediate symmetry-breaking phase, as discussed in Refs.~\cite{Tsui2015, Bultinck2019, Tantivasadakarn2023building, Tantivasadakarn2023pivot}.

Finally, we remark that Theorem~\ref{thm:catalyst} can be applied to certain beyond-cohomology phases as well. For example, the 4D lattice model described in Ref.~\cite{Fidkowski2020} is a beyond-cohomology SPT, but it was shown that there still exists a symmetric QCA $\mathcal{U}$ that creates it from a symmetric product state. Therefore, we can still search for $(G,\mathcal{U})$-symmetric states to catalyse this phase.

\subsection{Example: 1D cluster state} \label{sec:cluster}

As the simplest example of a cocycle state, we can consider the 1D cluster state \cite{Raussendorf2003}. Strictly speaking, the cluster state is not technically a cocycle state, but it can be straightforwardly re-interpreted as one, as discussed in Ref.~\cite{Miller2018}. For accessibility of the discussion, we will use the standard presentation of the cluster state. The state is defined on a ring of even length $N$ as,
\begin{equation}
    |\psi_C\rangle = \mathcal{U}_\CZ |++\cdots +\rangle,
\end{equation}
where $|+\rangle = (|0\rangle + |1\rangle)/\sqrt{2}$ 
 and the entangler $\mathcal{U}_\CZ$ is defined as,
\begin{equation} \label{eq:1dcluster_entangler}
    \mathcal{U}_\CZ = \prod_{i=1}^N \CZ_{i,i+1},
\end{equation}
where the two-qubit controlled-$Z$ operator is defined as $\CZ_{i,j} = |0\rangle\langle 0|_i\otimes I_j + |1\rangle\langle 1|_i \otimes Z_j$. The state has a $\\Z_2\times \Z_2$ symmetry generated by $U_e = \prod_{i} X_{2i}$ and $U_o = \prod_{i} X_{2i+1}$. 

Let us explore the various kinds of catalysts constructed in the previous section for the case of the cluster state. The symmetry-breaking catalyst defined in Eq.~\eqref{eq:cocycle_catalyst_1} corresponds to a pair of GHZ states on the even and odd sublattices. In this case, the SPT order can be trivialized by spontaneously breaking the $\Z_2\times \Z_2$ symmetry to any $\Z_2$ subgroup. Accordingly, a state made of a GHZ state on one sublattice and a trivial product state on the other is also a valid $(G,\mathcal{U}_{\CZ})$-symmetric catalyst.
Next, since $\mathcal{U}_\CZ^2 = \id$, the catalyst associated to spontaneously breaking the entangler symmetry is $(|++\cdots +\rangle + |\psi_C\rangle)/\mathcal{N}$. This state has recently received attention since it is dual to a state with non-invertible symmetry \cite{Seifnashri2024}.
Finally, the family of Hamiltonians,
\begin{equation}
    H(\alpha) = -\alpha\sum_i X_i -(1-\alpha)\sum_i Z_{i-1}X_iZ_{i+1},
\end{equation}
commutes with $\mathcal{U}_\CZ$ at the $\alpha=1/2$ point. At this point, the Hamiltonian is gapless \cite{Levin2012}, and the ground state is critical with central charge $c=1$. 

\subsection{Topologically ordered catalysts in $d>1$} \label{sec:2dspt}

So far, we have primarily studied catalysts with long-range or algebraically decaying correlations. This is because these are the kinds of states that can be symmetric under an anomalous symmetry (``match the anomaly'') in 1D. In higher dimensions, another type of catalyst becomes possible, since it is known that in higher dimensions anomalies can be matched using topologically ordered states with symmetry fractionalization \cite{Chen2015anomalous}. Accordingly, we will find that such states can also be catalysts for higher-dimensional SPT phases with global symmetries. We first argue this from the viewpoint of the algebraic theory of anyons, and then we give an explicit lattice example.

\subsubsection*{Algebraic argument}

To demonstrate the algebraic argument, let us first explain how one could argue the equivalence of two anyon models, which are characterized by a set of anyons having certain fusion rules and braiding statistics. As an example, consider the equivalence between a stack of two double semion (DS)) models and a stack of one toric code (TC) model and one DS model. While these two systems initially appear to have different anyon content, it turns out that, by relabelling the anyons in a way that preserves the fusion structure, the two systems can be made equivalent \cite{Shirley2019fractional}. This implies that the two systems are equivalent as topological phases, and therefore suggests the presence of a FDQC relating them. Such an FDQC was explicitly constructed in Ref.~\cite{Shirley2019fractional}.

To establish the corresponding argument for SPT phases, one can try to ``ungauge'' the topological order in the first layer of the stacks, which maps the TC and DS phases to the trivial and non-trivial SPT phases with $\Z_2$ symmetry, respectively \cite{Levin2012}. To discuss the algebraic perspective in the presence of symmetry, we enter the regime of symmetry-enriched topological (SET) phases, where we need to consider the joint algebraic theory containing both the anyons and the extrinsic symmetry defects \cite{Barkeshli2019}. When we compare a DS model stacked with a trivial $\Z_2$ paramagnet to a DS model stacked with a $\Z_2$ SPT, the only difference is that the extrinsic $\Z_2$ symmetry defect has trivial self-statistics in the former case while it has semionic statistics in the latter case (since this defect becomes the semion after gauging) \cite{Levin2012}. However, we can fix this by binding the semion from the DS model to the symmetry flux in one model, akin to relabelling anyons above. This binding can again be implemented using an FDQC and, since it does not change how the symmetry defects interact with the symmetry charges, one can choose this FDQC to be symmetric. Therefore, the argument suggests that the DS model can serve as a catalyst for the $\Z_2$ SPT order. 

In general, one can apply the above algebraic perspective to search for SET phases that can ``absorb'' a given SPT phase.
While the above SET models have trivial symmetry fractionalization, meaning that the $\Z_2$ symmetry acts trivially on the anyons, more general SET catalysts may require non-trivial symmetry fractionalization. 
The fact that SET phases can ``absorb'' SPT phases, or equivalently the fact that SET phases can have gapped interfaces with SPT phases, has also previously been discussed in Refs.~\cite{Wang2013weak, Lu2014gapped, Cheng2020relative}. Here, we are reinterpreting this fact from the perspective of many-body catalysts. 

\subsubsection*{Explicit catalyst for $\Z_2$ SPT}

We can confirm the validity of the algebraic argument above using an explicit lattice model. We consider the modified form of the DS model presented in Ref.~\cite{Dauphinais2019} (see also \cite{vonKeyserlingk2013, MagdalenadelaFuente2021nonpauli}). This model is defined on a triangular lattice with qubits on the edges $e\in E$, and has the following Hamiltonian,
\begin{equation}
    H_{DS} = -\sum_p B_p -\sum_v A_v,
\end{equation}
where $B_p = \prod_{e \in p} Z_e$ is the product of $Z$'s around a triangular plaquette, $A_v = \beta_v \prod_{e \ni v} X_e$ is a product of $X$'s on all edges emanating from the vertex $v$ times a certain diagonal unitary $\beta_v$ that acts on the 12 edges nearest the vertex $v$. The precise form of $\beta_v$ is not important here, but we note the properties (i) $[A_v,A_{v'}]=0$ for all $v,v'$, (ii) $A_v^2=I$, and (iii) $\prod_v A_v = I$ (on periodic boundaries) \cite{Dauphinais2019}. Importantly, this model has the same topological order as the original DS model. To equip this model with a $\Z_2$ symmetry, we add decoupled qubits on the vertices $v\in V$ of the lattice and define the $\Z_2$ symmetry $U_{Z_2}=\prod_v X_v$ and the modified symmetric Hamiltonian $\widetilde{H}_{DS} = H_{DS} - \sum_v X_v$, such that $\widetilde{H}_{DS}$ has ground states of the form $|\widetilde{DS}\rangle = |DS\rangle \otimes |+\rangle^{\otimes |V|}$ where $|DS\rangle$ is a ground state of $H_{DS}$ and we introduced the notation $|\psi\rangle^{\otimes |\Gamma|} = \bigotimes_{i\in \Gamma} |\psi\rangle_i$.

The $\Z_2$ SPT we will catalyze also takes an unconventional form. Inspired by the construction in Ref.~\cite{Webster2022xpstabiliser}, we consider the model,
\begin{equation}
    H_{\spt} = -\sum_p B_p - \sum_v A_v X_v - \sum_e Z_eZ_{v_1}Z_{v_2},
\end{equation}
where $v_{1,2}$ are the two vertices connected to edge $e$. This model again commutes with $U_{Z_2}$. To confirm that this model belongs to the non-trivial $\Z_2$ SPT phase, observe that gauging the $\Z_2$ symmetry results in a model equivalent to $H_{DS}$ up to an FDQC, which implies the non-trivial SPT order \cite{Levin2012}. The ground state of $H_{\spt}$ can be written in the following way,
\begin{equation}
    |\spt\rangle = \mathcal{U}_{\spt} \left(|0\rangle^{\otimes |E|} \otimes |+\rangle^{\otimes |V|}\right),
\end{equation}
where we define the SPT entangler,
\begin{equation}
    \mathcal{U}_{\spt} = \prod_{v} \mathrm{ctrl}-A_v,
\end{equation}
where the operator $\mathrm{ctrl}-A_v$ is defined to act on the vertex $v$ and the twelve edges closest to $v$ as $\mathrm{ctrl}-A_v = |0\rangle\langle 0|_v\otimes I + |1\rangle\langle 1|_v\otimes A_v$. 
To confirm that $|\spt\rangle$ is indeed the ground state of $H_{\spt}$, we note that $\mathcal{U}_{\spt} X_v \mathcal{U}_{\spt}^\dagger = X_v A_v$ (owing to the fact that $A_v^2 = 1$) which leads to, 
\begin{equation}
    H_{\triv} := \mathcal{U}_{\spt} H_{\spt} \mathcal{U}_{\spt}^\dagger = -\sum_p B_p - \sum_v X_v - \sum_e Z_e.
\end{equation}
This Hamiltonian has a unique ground state $|\triv\rangle = |0\rangle^{\otimes |E|} \otimes |+\rangle^{\otimes |V|}$, so $H_{\spt}$ must have the unique ground state $|\spt\rangle = \mathcal{U}_{\spt}|\triv\rangle$. The SPT entangler $\mathcal{U}_{\spt}$ can be written as an FDQC because the different $\mathrm{ctrl}-A_v$ gates all commute (owing to the fact that the $A_v$ all commute), and $U_{Z_2}\mathcal{U}_{\spt}U_{Z_2}^\dagger = \mathcal{U}_{\spt}\prod_v A_v = \mathcal{U}_{\spt}$ so it is a symmetric QCA. 

Finally, we show that the ground states $|\widetilde{DS}\rangle$ of $\widetilde{H}_{DS}$ are $(G,\mathcal{U}_{\spt})$-symmetric for $G=Z_2$, and therefore serve as catalysts for $|\spt\rangle$ according to Theorem \ref{thm:catalyst}. For any ground state $|\widetilde{DS}\rangle$, we have,
\begin{equation}
    \mathcal{U}_{\spt}|\widetilde{DS}\rangle= \sum_{\vec{i}} \left(\prod_{v} A_v^{i_v}\right) |DS\rangle \otimes |\vec{i}\rangle = |\widetilde{DS}\rangle,
\end{equation}
where $|\vec{i}\rangle = \bigotimes_v |i_v\rangle$ with $i_v=0,1$ and we used the fact that $A_v|DS\rangle = |DS\rangle$ for all $v$ since $|DS\rangle$ is a ground state. Therefore, $|\widetilde{DS}\rangle$, interpreted as a SET model with DS topological order and a trivial $\Z_2$ symmetry fractionalization, is a catalyst for the non-trivial $\Z_2$ SPT.

\section{Catalysts for other types of symmetries} \label{sec:other_symms}

In all examples until this point, we considered only global onsite symmetries of the form $U(g)=u(g)^{\otimes N}$. However, Theorem~\ref{thm:catalyst} applies equally well to cases where the action of the symmetries is not uniform across space. This occurs in many cases of physical interest, particularly higher-form symmetries and subsystem symmetries. 
In this section, we show what new kinds of catalysts emerge when considering these more general symmetries, and what this teaches us about the corresponding phases of matter.

\subsection{Higher-form symmetries} \label{sec:higherform}

A $q$-form symmetry operator is defined as one which is supported on a codimension-$q$ subset of the lattice, such that a $0$-form symmetry is a global symmetry. 
As an example with higher-form symmetries, we consider the 2D cluster state with qubits on the edges $e\in E$ and vertices $v\in V$ of a square lattice (the ``Lieb lattice'') \cite{Yoshida2016,verresen2024higgs}. The parent Hamiltonian of the 2D cluster state is,
\begin{equation}
H_{\mathrm{Lieb}}=-\sum_v\left(X_v\prod_{e\ni v}Z_e\right)-\sum_e\left(X_e\prod_{v\in e}Z_v\right),
\end{equation}
where subscripts $e$ and $v$ depict edge and vertex, respectively. All terms in the Hamiltonian commute with each other, and the 2D cluster state is exactly the ground state. The ground state can be written as,
\begin{equation}
    |\psi_{\mathrm{Lieb}}\rangle = \mathcal{U}_{\mathrm{Lieb}}\left(|+\rangle^{\otimes |E|} \otimes |+\rangle^{\otimes |V|}\right),
\end{equation}
where the entangler is defined as,
\begin{equation}
    \mathcal{U}_{\mathrm{Lieb}} = \prod_{\langle e,v\rangle} CZ_{e,v}.
\end{equation}
This state has both a 0-form $\Z_2^{(0)}$ symmetry and a 1-form $\Z_2^{(1)}$ symmetry, with the following symmetry generators:
\begin{align}
U^{(0)}=\prod_vX_v,\quad U^{(1)}_\gamma=\prod_{e\in\gamma}X_e,
\end{align}
where $\gamma$ is an arbitrary closed loop along the bonds of the square lattice. 

To construct catalysts for this state, we focus on symmetry-breaking states. Depending on which symmetry we spontaneously break, these states will look very different. For example, the following GHZ state breaks $\Z_2^{(0)}$ symmetry,
\begin{equation}
    |a\rangle = |+\rangle^{\otimes |E|} \otimes \left(|0\rangle^{\otimes |V|} + |1\rangle^{\otimes |V|} \right),
\end{equation}
and one can straightforwardly confirm that is is invariant under $\mathcal{U}_{\mathrm{Lieb}}$ and is therefore a valid catalyst. We can instead spontaneously break the $\Z_2^{(1)}$ symmetry. Recalling that a spontaneously broken 1-form symmetry is synonymous with topological order \cite{Nussinov2009}, we can define the state,
\begin{equation}
    |a\rangle = \left( \prod_p \frac{1 + B_p}{2} |0\rangle^{\otimes |E|} \right)\otimes |+\rangle^{\otimes |V|},
\end{equation}
where $B_p = \prod_{e\in p} X_e$ is a product of four $X$ operators around the plaquette $p$. The state on the edge degrees of freedom is precisely the ground state of the 2D toric code. This state is also invariant under $\mathcal{U}_{\mathrm{Lieb}}$ and is therefore a valid catalyst. 
The fact that we can catalyze $|\psi_{\mathrm{Lieb}}\rangle$ by spontaneously breaking only one of the $\Z_2^{(0)}$ or $\Z_2^{(1)}$ symmetries implies that it should belong to an trivial SPT phase when only one of these symmetries is enforced. This can be confirmed by constructing explicit symmetric FDQCs. Thus, we find that the catalyst perspective can be potentially useful in determining when an SPT is trivial or not.

For higher-form symmetries, topologically ordered phases can serve as catalysts due to their interpretation as states with spontaneously broken higher-form symmetry. This is distinct from Sec.~\ref{sec:2dspt} where the presence of symmetry-fractionalized excitations in topological phases is what enabled their use as catalysts. 

The above example is protected by a combination of $0$-form and $1$-form symmetries, but we can similarly analyze models that have only higher-form symmetries. For example, the 3D cluster state defined in Ref.~\cite{Raussendorf2005} has $\Z_2^{(1)}\times \Z_2^{(1)}$ symmetry \cite{Roberts2017} and can be catalyzed using ground states of 3D toric codes \cite{Hamma2005}.

\subsection{Subsystem symmetries} \label{sec:subsystem}

Subsystem symmetries are similar to higher-form symmetries in that they involve symmetry operators that act on lower-dimensional subsets of the lattice. They are however distinguished from higher-form symmetries by the fact that these subsets are rigid, i.e. non-deformable.  Here, we consider line-like symmetries, but similar constructions can be made for systems \cite{VHF} with fractal or planar symmetries as well. 

A SPT phase that is protected by subsystem symmetries is called a subsystem SPT (SSPT) phase of matter \cite{You2018}. As the simplest example, we again consider a 2D cluster state, but this time with qubits only living on the vertices $v$ of a square lattice. The parent Hamiltonian is,
\begin{equation}
H_{\mathrm{square}}=-\sum_v X_v \left(\prod_{v'\in n(v)}Z_{v'}\right),
\end{equation}
where $n(v)$ contains all neighbouring vertices of $v$. The ground state is,
\begin{equation}
    |\psi_{\mathrm{square}}\rangle = \mathcal{U}_{\mathrm{square}}|+\rangle^{|V|},
\end{equation}
where,
\begin{equation}
    \mathcal{U}_{\mathrm{square}} = \prod_{\langle v,v'\rangle} \CZ_{v,v'}.
\end{equation}
The subsystem symmetries of this state act along diagonal lines in both directions,
\begin{equation}
    U_{c,\pm} = \prod_i X_{i,c\pm i},
\end{equation}
where each value of $c$ describes a different vertical line and $\pm$ denote the two directions (diagonal and anti-diagonal).

This cluster state belongs to a non-trivial subsystem SPT (SSPT) phase of matter \cite{You2018}. In the classification of SSPT phases with line-like symmetries, there is a distinction between strong and weak phases, where the latter is equivalent to a stack of 1D SPT phases aligned with the direction of the symmetries and the former is not. From this definition, it is clear that a suitable catalyst for a weak SSPT can be constructed using a stack of 1D states, each being able to catalyze a 1D SPT phase. Conversely, we conjecture that a strong SSPT cannot be catalyzed by a stack of 1D states. This would give another way to distinguish strong and weak SSPT phases, adding to the approaches discussed in Refs.~\cite{You2018,Devakul2018strong}. 

The justification for this conjecture is the following. As we have seen, and as is emphasized by Theorem~\ref{thm:catalyst}, catalysts allow one to act with symmetric QCA on the target system. For this argument, we assume this is essentially all that we can accomplish with catalysts. Now, suppose we embed a 1D catalyst along some line of the 2D SSPT. By acting with a symmetric FDQC, we expect that this allows us to modify our system by a 1D symmetric QCA supported along the region near the catalyst. By stacking many 1D catalysts, we are able to act with many 1D symmetric QCA on different 1D lines spanning the 2D state. If we stack a constant number of such catalysts per generator of the subsystem symmetry, we can overall apply a finite-depth circuit of 1D symmetric QCA. This kind of transformation is exactly the basis of the classification of strong SSPTs given in Ref.~\cite{Devakul2018strong}. In particular, it was argued that a strong SSPT cannot be created from a symmetric product state using such a transformation. Therefore, we conjecture that a stack of 1D catalysts cannot catalyze a strong 2D SSPT.

For an example of a symmetry-breaking state that \textit{can} be a catalyst for $|\psi_{\mathrm{square}}\rangle$, we have the symmetric ground state of the plaquette Ising model,
\begin{equation}
    H_{PIM} = -\sum_v \left(\prod_{v'\in n(v)}Z_{v'}\right).
\end{equation}
This actually defines two plaquette Ising models, one for each sublattice. This is an intrinsically 2D symmetry-breaking model. The symmetric ground state of $H_{PIM}$ that is invariant under all $U_{c,\pm}$ is a subsystem symmetry-breaking state that can catalyze $|\psi_{\mathrm{square}}\rangle$.

\section{Mixed state catalysts} \label{sec:mixed}

In this section, we go beyond pure state catalysts and consider general mixed states. We will see that this allows us to define catalysts which much different properties that pure state catalysts. In particular, mixed state catalysts need not have long-range correlations. When using mixed states to catalyze transformations between pure states, we restrict to strongly-symmetric catalysts which satisfy,
\begin{equation} \label{eq:strong_symm}
    U(g)\rho = \rho U(g) = \rho, \quad \forall g\in G.
\end{equation}
This means that the state $\rho$ is a mixture of symmetric states, rather than only being symmetric ``on average'', such that the symmetry of the system is always preserved. If we allow weakly-symmetric states which don't satisfy Eq.~\eqref{eq:strong_symm} but do satisfy $U(g)\rho U(g)^\dagger = \rho$, then it is trivial to construct catalysts, the maximally mixed state being one example. \diff{This matches the known fact that SPT order is not stable under weakly-symmetric channels \cite{de_Groot_2022}.}
We then define a mixed state catalyst in a similar fashion as for pure states,
\begin{definition}[Mixed state catalysts] \label{def:catalyst}
    Given a pair of $G$-symmetric states $|\psi_0\rangle$ and $|\psi_1\rangle$, we say that a mixed state $\rho$ \emph{catalyzes} the transformation $|\psi_0\rangle \mapsto |\psi_1\rangle$ if $\rho$ is strongly $G$-symmetric and there exists a $G$-symmetric FDQC $\mathcal{V}$ such that,
    \begin{equation}
        \mathcal{V}(|\psi_0\rangle\langle \psi_0| \otimes \rho)\mathcal{V}^\dagger  = |\psi_1\rangle\langle \psi_1| \otimes \rho.
    \end{equation}
\end{definition}
Note that we still only drive the catalyzed transformation using an FDQC. We could further generalize this by asking that $|\psi_0\rangle\langle \psi_0| \otimes \rho$ and $|\psi_1\rangle\langle \psi_1| \otimes \rho$ are two-way connected via symmetric local channels, akin to recent definitions of SPT phases for mixed-states \cite{Average_2023},  but we will find that our definition is sufficient to allow for interesting examples of mixed state catalysts.  

The notion of a $(G,\mathcal{U})$-symmetric catalyst still holds for mixed states. If $\rho$ is weakly symmetric under $\mathcal{U}$, meaning that $\mathcal{U}\rho\mathcal{U}^\dagger = \rho$, then a straightforward generalization of the proof of Theorem \ref{thm:catalyst} shows that it can catalyze the transformation $|\psi\rangle \mapsto \mathcal{U}|\psi\rangle$ for all $|\psi\rangle$. We therefore still use the terminology $(G,\mathcal{U})$-symmetric catalysts for mixed states, with the understanding that the state should be strongly $G$-symmetric but need only be weakly $\mathcal{U}$-symmetric. 

\subsection{Catalysts from strong-to-weak symmetry breaking}

An interesting class of mixed state catalysts can be constructed using the notion of strong-to-weak SSB (SW-SSB) \cite{Lee_2023, Average_2023, SWSSB2024, sala2024spontaneous, huang2024hydro, xu2024average, moharramipour2024symmetry, gu2024spontaneous, guo2024new, Kuno_2024, sala2024decoherence, zhang2024FDT}. We showcase this for the example of the 1D cluster state discussed in Sec.~\ref{sec:cluster}. Consider the following mixed state,
\begin{equation} \label{eq:swssb_cat}
    \rho = \frac{1}{2^N}(\id + U_e)(\id + U_o).
\end{equation}
This state is clearly strongly $G$-symmetric for $G=\Z_2\times \Z_2$. It is also weakly $\mathcal{U}_\CZ$-symmetric since,
\begin{align}
&2^NU_{\CZ}\rho U_{\CZ}\nonumber\\
&=\left(\id+\prod_jZ_{2j-1}X_{2j}Z_{2j+1}\right)\left(\id+\prod_jZ_{2j}X_{2j+1}Z_{2j+2}\right)\nonumber\\
&=\left(\id+U_e\right)\left(\id+U_o\right).
\end{align}
Therefore, it is a valid catalyst.  In contrast with pure state catalysts in 1D which always had (quasi) long-range correlations, this state only has short-ranged two-point correlations of the form $\langle O_i O_j\rangle = \tr(\rho O_iO_j)$. This follows from the fact that the state $\rho$ can also be written as a convex sum of product states, namely a sum of all rank-one projectors of the form $|+-+-+\cdots\rangle\langle +-+-+\cdots|$ containing an even number of $|-\rangle$ states on both the even and odd sublattices.
Because of this, it is also infinite-partite separable and therefore has no long-range entanglement \cite{lessa2024anomaly}. In fact, it can be prepared from a symmetric product state with a symmetric local quantum channel (although the reverse transformation is not possible), as discussed further in Sec.~\ref{sec:state_prep} in the context of efficient preparation of catalysts. Thus, its properties starkly contrast with those of pure state catalysts. 

The general framework to which this state belongs is that of strong-to-weak SSB (SW-SSB) \cite{SWSSB2024}. This describes the scenario where a density matrix is strongly $G$-symmetric as a whole, but it is a mixture of a finite number of weakly $G$-symmetric states. More precisely, an SW-SSB mixed state is characterized by a long-ranged fidelity correlator of the local charged operator $O_i$, namely 
\begin{equation} \label{eq:fidelity_corr}
F\left(\rho,O_iO_j^\dag\rho O_jO_i^\dag\right)\sim O(1),
\end{equation}
where $F(\rho,\sigma)=\tr\sqrt{\sqrt{\rho}\sigma\sqrt{\rho}}$ is the fidelity of two density matrices $\rho$ and $\sigma$. This is the analog of the long-range correlator $\tr(\rho O_iO_j)$ that characterizes conventional SSB. One can easily verify that the state $\rho$ defined in Eq.~\eqref{eq:swssb_cat} has $F\left(\rho,Z_iZ_j^\dag\rho Z_jZ_i^\dag\right)=1$ whenever $i$ and $j$ belong to the same sublattice, so it indeed has SW-SSB.

We see that, when mixed-state catalysts are allowed, (quasi) long-range correlations are not necessary, and long-range fidelity correlators are sufficient. The fidelity correlator is closely related to the Edwards-Anderson correlator \cite{SWSSB2024}. Physically, this implies that we can construct catalysts that spontaneously break the symmetry in the sense of a spin-glass does rather than a ferromagnet.

The generalization of the state in Eq.~\eqref{eq:swssb_cat} to general symmetry groups is,
\begin{equation}
    \rho \propto \sum_{g\in G} U(g),
\end{equation}
which will always be $(G,\mathcal{U})$-symmetric for any $G$-symmetric QCA $\mathcal{U}$. This is therefore a valid SW-SSB catalyst for all examples discussed prior to this point. 

As a particularly interesting example, consider the higher-form SPT $|\psi_{\mathrm{Lieb}}\rangle$ discussed in Sec.~\ref{sec:higherform}. We can define the density matrix,
\begin{equation}
    \rho=\left(\frac{\id + \prod_v X_v}{2}\right)\prod_p\left(\frac{\id+B_p}{2}\right).
\end{equation}
This state is invariant under the entangler $\mathcal{U}_{\mathrm{Lieb}}$. It also has SW-SSB of the $\Z_2^{(0)}\times\Z_2^{(1)}$ symmetry. For qubits on vertices, there is SW-SSB of $\Z_2^{(0)}$ captured by a long-ranged fidelity correlator,
\begin{align}
F\left(\rho, Z_vZ_{v'}\rho Z_vZ_{v'}\right)=1.
\end{align}
For the 1-form symmetry generated by $U^{(1)}_\gamma$, the relevant dual operators whose expectation values diagnose symmetry breaking are products of $Z$ operators $W_\lambda = \prod_{e\in\lambda} Z_e$ where $\lambda$ is an closed path on the dual lattice. We can easily check that $\rho$ has SW-SSB of the 1-form $\Z_2^{(1)}$ symmetry in the sense of a vanishing expectation value and a non-vanishing fidelity expectation value of $W_\lambda$,
\begin{align}
\begin{gathered}
\tr\left(\rho W_\lambda\right)=0\\
F\left(\rho, W_\lambda \rho W_\lambda\right)=1.
\end{gathered}
\end{align}
Furthermore, we can check the disorder parameters defined by truncating the 1-form symmetry operators $U^{(1)}_\gamma$ to open strings $\bar{\gamma}$,
\begin{align}
\begin{gathered}
\tr\left(\rho U^{(1)}_{\bar{\gamma}}\right)=0\\
F\left(\rho, U^{(1)}_{\bar{\gamma}} \rho U^{(1)}_{\bar{\gamma}}\right)=1.
\end{gathered}
\end{align}
This pattern of expectation values matches the recent characterization of 1-form SW-SSB given in Ref.~\cite{zhang2024strong}. We remark that, in a state with SW-SSB of a 1-form symmetry, the edge qubits form a \textit{mixed-state topological order} that can restore a classical bit \cite{Dennis_2002, Lee_2023, bao2023mixed, sohal2024noisy, ellison2024classification, zhang2024strong}. 

\subsection{Characterizing 1D catalysts}

We have seen that catalysts in spatial dimension $d>1$ can belong to various different classes with different properties including topological order. In $d=1$, however, the situation is relatively simple, and here we derive strong constraints on possible catalysts in 1D. We focus on $(G,\mathcal{U})$-symmetric catalysts, but we believe that these findings should hold more generally. 

Our constraints are phrased in terms of the concept of symmetry localization. Recall that we are considering mixed-state catalysts in the general case, and a $(G,\mathcal{U})$-symmetric catalyst is then a mixed state $\rho$ such that $U(g)\rho = \rho U(g) = \rho$ and $\mathcal{U}\rho\mathcal{U}^\dagger = \rho$. We can then define two notions of symmetry localization. First, given any finite region $\Gamma = [a,b]$ of the lattice, define $U_{\Gamma,g} = \prod_{i\in \Gamma} u_i(g)$ to be the symmetry truncated to $R$. Then, \textit{strong localization} of the symmetry means that there exist operators $L_{\Gamma,g}$ and  $R_{\Gamma,g}$ that supported in a finite region around sites $a$ and $b$ such that the following holds (perhaps up to exponentially small corrections in $|\Gamma|$),
\begin{equation} \label{eq:strong_loc}
     \rho U_{\Gamma,g}  =  \rho  (L_{\Gamma,g}\otimes R_{\Gamma,g}).
\end{equation}
In the context of SPT phases, the operator $U_{\Gamma,g} (L_{\Gamma,g}^\dagger \otimes R_{\Gamma,g}^\dagger)$  is called a string-order parameter \cite{Pollmann2012}. 

\begin{Theorem} \label{thm:loc}
    Suppose $\rho$ is a 1D state that is strongly $G$-symmetric for a finite abelian group $G$ and satisfies $\mathcal{U}\rho\mathcal{U}^\dagger=\rho$ where $\mathcal{U}$ is an entangler for a non-trivial SPT phase. Then $\rho$ cannot exhibit strong localization of the $G$ symmetry.
\end{Theorem}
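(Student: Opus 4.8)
The plan is to argue by contradiction, exploiting the fact that strong localization equips the strongly $G$-symmetric state $\rho$ with a well-defined boundary cohomology class in $H^2(G,U(1))$, while conjugation by an SPT entangler shifts that class by the (non-trivial) class of the entangler. First I would record the relevant structure of $\mathcal{U}$: since $\mathcal{U}$ is a $G$-symmetric QCA and the truncated symmetry $U_{\Gamma,g}=\prod_{i\in\Gamma}u_i(g)$ is supported on an interval $\Gamma=[a,b]$, the operators $\mathcal{U}U_{\Gamma,g}\mathcal{U}^\dagger$ and $U_{\Gamma,g}$ agree except within $O(1)$ of the endpoints, so $\mathcal{U}U_{\Gamma,g}\mathcal{U}^\dagger = U_{\Gamma,g}\,(\ell_{\Gamma,g}\otimes r_{\Gamma,g})$ with $\ell_{\Gamma,g}$ supported near $a$ and $r_{\Gamma,g}$ near $b$. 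That $\mathcal{U}$ is an entangler for a non-trivial group-cohomology SPT phase means exactly that the left-endpoint operators $\ell_{\Gamma,g}$, dressed by the part of $U_{\Gamma,g}$ supported near $a$, realize a projective representation of $G$ with non-trivial class $[\omega_{\mathcal{U}}]\in H^2(G,U(1))$ \cite{Chen2013}; for finite abelian $G$ this says that some pair of these endpoint operators fails to commute by a root of unity, as for the $\mathcal{U}_{\CZ}$ of Sec.~\ref{sec:cluster}.

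Next I would extract a boundary invariant from strong localization itself. Assume for contradiction that $\rho$ exhibits strong localization, $\rho\,U_{\Gamma,g}=\rho\,(L_{\Gamma,g}\otimes R_{\Gamma,g})$ up to corrections exponentially small in $|\Gamma|$. Following the standard analysis of the string-order parameter \cite{Pollmann2012}, one matches the two evaluations of $\rho\,U_{\Gamma,g}U_{\Gamma,h}=\rho\,U_{\Gamma,gh}$ — one using $U_{\Gamma,g}U_{\Gamma,h}=U_{\Gamma,gh}$ directly, the other by inserting $\rho U_{\Gamma,g}=\rho(L_{\Gamma,g}\otimes R_{\Gamma,g})$ and commuting the remaining symmetry through the endpoint operators. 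This forces the left-endpoint operators, dressed by the near-endpoint symmetry, to compose projectively on $\rho$, $V_{\Gamma,g}V_{\Gamma,h}=\omega_\rho(g,h)V_{\Gamma,gh}$, and one checks that the resulting class $[\omega_\rho]\in H^2(G,U(1))$ is independent of the phase ambiguity in $L_{\Gamma,g}$ and of the interval $\Gamma$ (on which it depends only through operators near $a$). This is the strongly-symmetric, mixed-state analog of the Pollmann--Turner SPT string-order index.

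Then I would close the argument by additivity. Conjugating the strong-localization relation by $\mathcal{U}$ and using $\mathcal{U}\rho\mathcal{U}^\dagger=\rho$ (equivalently $\mathcal{U}\rho=\rho\mathcal{U}$), the state $\mathcal{U}\rho\mathcal{U}^\dagger$ still exhibits strong localization, with endpoint operators $\mathcal{U}(L_{\Gamma,g}\otimes R_{\Gamma,g})\mathcal{U}^\dagger$ that remain localized near $a,b$ because $\mathcal{U}$ is a QCA. Evaluating the same object a second way gives $\rho\,\mathcal{U}U_{\Gamma,g}\mathcal{U}^\dagger = \rho\,U_{\Gamma,g}(\ell_{\Gamma,g}\otimes r_{\Gamma,g}) = \rho\,(L_{\Gamma,g}\ell_{\Gamma,g}\otimes R_{\Gamma,g}r_{\Gamma,g})$, so the left-endpoint operators of $\mathcal{U}\rho\mathcal{U}^\dagger$ are the products $L_{\Gamma,g}\ell_{\Gamma,g}$, and hence its boundary cocycle is $\omega_\rho\cdot\omega_{\mathcal{U}}$ (projective classes add under stacking of edge representations). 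Since $\mathcal{U}\rho\mathcal{U}^\dagger=\rho$, this class must also equal $[\omega_\rho]$, so $[\omega_{\mathcal{U}}]$ is trivial, contradicting the non-triviality of the SPT phase entangled by $\mathcal{U}$. This proves Theorem~\ref{thm:loc}; combined with the remark that a strongly-symmetric short-range-entangled state always has strongly localized symmetry, it also recovers the claim that $(G,\mathcal{U})$-symmetric catalysts in 1D must be long-range entangled.

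The hard part is the second step above. Because $\rho$ is mixed, the strong-localization relation holds only up to $e^{-c|\Gamma|}$ errors, so one must show these do not accumulate when composing truncated symmetries or comparing different intervals, and that $[\omega_\rho]$ is nonetheless a sharp $H^2(G,U(1))$-valued invariant — i.e.\ one must carry the Pollmann--Turner argument through in the approximate, strongly-symmetric, mixed-state setting, and check that the additivity used to get $\omega_\rho\cdot\omega_\mathcal{U}$ survives the approximations. A lesser subtlety is to formulate ``entangler for a non-trivial SPT phase'' so that non-triviality of $[\omega_{\mathcal{U}}]$ is automatic; restricting attention to the group-cohomology classification, as elsewhere in the paper, this is standard.
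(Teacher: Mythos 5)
Your overall strategy (contradiction via an endpoint/boundary invariant that shifts by the entangler's class) is close in spirit to the paper, but the step you yourself flag as ``the hard part'' is a genuine gap, not a technicality, and the paper's proof is structured precisely to avoid it. Your argument needs the lemma that strong localization alone endows an arbitrary strongly $G$-symmetric \emph{mixed} state with a sharp, choice-independent class $[\omega_\rho]\in H^2(G,U(1))$ obeying the additivity you use in the last step. For a mixed $\rho$ this is unjustified: the relation $\rho\,U_{\Gamma,g}=\rho\,(L_{\Gamma,g}\otimes R_{\Gamma,g})$ determines the endpoint operators only up to anything annihilated by $\rho$ from the left (since $\rho A=\rho B$ does not give $A\propto B$ when $\rho$ is non-invertible), and composing two such relations yields $\rho\,(L_{\Gamma,gh}\otimes R_{\Gamma,gh})=\rho\,(L_{\Gamma,h}\tilde L_{\Gamma,g}\otimes R_{\Gamma,h}\tilde R_{\Gamma,g})$, from which you cannot extract a well-defined left phase $\omega_\rho(g,h)$ without the clustering/injectivity structure that underlies the Pollmann--Turner argument for pure states (an MPS-type factorization). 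The theorem is deliberately stated for arbitrary mixed states, including long-range-correlated ones, exactly because Corollary~1 is then derived from it; so you cannot assume the structure needed to make $[\omega_\rho]$ well defined, and the additivity $\omega_\rho\cdot\omega_{\mathcal U}$ inherits the same ambiguity (the conjugated relation only shows that $\mathcal{U}L_{\Gamma,g}\mathcal{U}^\dagger\,\ell_{\Gamma,g}^{-1}$ is \emph{another} valid choice of left endpoint operator for the same $\rho$, which is vacuous unless the class is already known to be choice-independent).

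The paper sidesteps this entirely: it never defines an invariant of $\rho$. Instead it uses the entangler invariant of Zhang et al., defined by $\mathcal{U}^\dagger U_{A,g}\mathcal{U}\,U_{B,h}\,\mathcal{U}^\dagger U_{A,g}^\dagger\mathcal{U}\,U_{B,h}^\dagger=c_{g,h}\id$ for overlapping intervals $A,B$, multiplies this identity by $\rho$, takes the trace, and uses the strong-localization relation only to absorb truncated symmetries into endpoint operators inside the trace. The commutation phases $c_{g,h}$ (known and sharply defined properties of $\mathcal{U}$, not of $\rho$) then pop out twice, giving $c_{g,h}=c_{g,h}c_{h^{-1},g}$ and hence $c_{h^{-1},g}=1$ for all $g,h$, which contradicts non-triviality of the entangler. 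If you want to salvage your route, you would either have to prove the missing well-definedness lemma in the approximate, mixed-state setting, or reformulate your ``boundary class of $\rho$'' as a statement about the entangler's phases evaluated against $\rho$ --- at which point you essentially reproduce the paper's trace computation.
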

\begin{proof}
    We prove this using the invariant for SPT entanglers defined in Ref.~\cite{Zhang2023topological}. We use this approach because these invariants are defined for non-abelian and higher-dimensional SPT phases as well, meaning it is likely that this proof can be extended to those cases as well (as opposed to a proof based on string-order parameters, say). For abelian $G$, the invariant of Ref.~\cite{Zhang2023topological} is defined via the identity,
    \begin{equation} \label{eq:invariant}
        \mathcal{U}^\dagger U_{A,g}\mathcal{U}U_{B,h}\mathcal{U}^\dagger U_{A,g}^\dagger\mathcal{U}U_{B,h}^\dagger = c_{g,h}\id,
    \end{equation}
    where $A = [a,c]$ and $B=[b,d]$ are two overlapping intervals with $a\ll b\ll c\ll d$. The set of phases $c_{g,h}\in U(1)$ for $g,h\in G$ completely specify $\mathcal{U}$ up to symmetric FDQCs. In particular, $\mathcal{U}$ cannot entangle a non-trivial SPT if $c_{g,h}=1$ for all $g,h\in G$. 
    
    We prove that $c_{g,h}=1$ whenever there is a state $\rho$ that exhibits strong symmetry localization and satisfies $\rho=\mathcal{U}\rho\mathcal{U}^\dagger$.
    Assume that such a state $\rho$ exists.  We define the shorthand notations,
    \begin{align}
        W_{\Gamma,g} &= L_{\Gamma,g}\otimes R_{\Gamma,g} \\
        \mathcal{W}_{\Gamma,g} &= \mathcal{U} W_{\Gamma,g} \mathcal{U}^\dagger \\
        \tilde{\mathcal{W}}_{\Gamma,g} &= \mathcal{U}^\dagger W_{\Gamma,g} \mathcal{U}.
    \end{align}
    By multiplying both sides of Eq.~\eqref{eq:invariant} by $\rho$ and taking the trace, we have,
    \begin{equation} \label{eq:invariant_proof_1}
        \begin{aligned}
        c_{g,h} &=
             \tr(\mathcal{U}^\dagger U_{A,g}\mathcal{U}U_{B,h}\mathcal{U}^\dagger U_{A,g}^\dagger\mathcal{U}U_{B,h}^\dagger\rho) \\
             &=\tr(\mathcal{U}^\dagger U_{A,g}\mathcal{U}U_{B,h} \mathcal{U}^\dagger U_{A,g}^\dagger\mathcal{U}W_{B,h}^\dagger\rho ) \\
             &=\tr(\mathcal{U}^\dagger U_{A,g}\mathcal{U}U_{B,h} \mathcal{U}^\dagger U_{A,g}^\dagger \mathcal{W}_{B,h}^\dagger\rho \mathcal{U} ) \\
             &=c_{g,h} \tr(\mathcal{U}^\dagger U_{A,g}\mathcal{U}U_{B,h} \mathcal{U}^\dagger  \mathcal{W}_{B,h}^\dagger U_{A,g}^\dagger\rho\mathcal{U} ), \\
        \end{aligned}
    \end{equation}
    where we used the fact that $U^\dagger_{A,g}\mathcal{W}_{B,h}^\dagger U_{A,g}\mathcal{W}_{B,h} = c_{g,h} \id$ \cite{Zhang2023topological}. Continuing, we have,
    \begin{equation} \label{eq:invariant_proof_2}
        \begin{aligned}
            c_{g,h} 
             &=c_{g,h} \tr(\mathcal{U}^\dagger U_{A,g}\mathcal{U}U_{B,h} \mathcal{U}^\dagger  \mathcal{W}_{B,h}^\dagger W_{A,g}^\dagger\rho \mathcal{U} ) \\
             &=c_{g,h} \tr(\mathcal{U}^\dagger U_{A,g}\mathcal{U}U_{B,h}   \tilde{\mathcal{W}}_{A,g}^\dagger W_{B,h}^\dagger\rho ) \\
             &=c_{g,h}c_{h^{-1},g} \tr(\mathcal{U}^\dagger U_{A,g}\mathcal{U} \tilde{\mathcal{W}}_{A,g}^\dagger\rho ). \\
        \end{aligned}
    \end{equation}
    Therein, we used the fact that $\tilde{\mathcal{W}}_{A,g}^\dagger W_{B,h}^\dagger= W_{B,h}^\dagger \tilde{\mathcal{W}}_{A,g}^\dagger$ since they are supported in disjoint regions. We 
    then noticed that $U_{B,h} = U_{B,h^{-1}}^\dagger$ (since $U_{B,h}$ is a unitary representation of $G$) and again used the fact that $ U_{B,h^{-1}}^\dagger\tilde{\mathcal{W}}_{A,g}^\dagger U_{B,h^{-1}}\tilde{\mathcal{W}}_{A,g} = c_{h^{-1},g} \id$ \cite{Zhang2023topological}. Then we used symmetry localization to write $U_{B,h}W_{b,h}^\dagger\rho = \rho$.
    Finally, we have,
    \begin{equation} \label{eq:invariant_proof_3}
        \begin{aligned}
            c_{g,h} &=c_{g,h}c_{h^{-1},g} \tr(\mathcal{U}^\dagger U_{A,g} W_{A,g}^\dagger\rho \mathcal{U} ) \\
             &=c_{g,h}c_{h^{-1},g} \tr(\mathcal{U}^\dagger \rho \mathcal{U} )\\
             &=c_{g,h}c_{h^{-1},g}.
        \end{aligned}
    \end{equation}
    Therefore, $c_{h^{-1},g}=1$ for all $g,h\in G$.
\end{proof}

Similar invariants exist for non-abelian phases and in higher dimensions \cite{Zhang2023topological}, and we believe they can be used to prove the same result in those cases.

This theorem strongly constrains pure state catalysts. This is because any pure state that has exponentially decaying correlations must have strong symmetry localization
\begin{Cor} \label{cor:pure}
    Suppose a 1D state $|\psi\rangle$ is $(G,\mathcal{U})$-symmetric where $G$ is finite-abelian and $\mathcal{U}$ is an entangler for a non-trivial SPT phase. Then $|\psi\rangle$ must have some (quasi) long-range two-point correlation functions.
\end{Cor}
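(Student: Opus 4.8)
The plan is to argue by contraposition. Suppose that $|\psi\rangle$ had \emph{no} (quasi) long-range two-point correlations, meaning that every connected correlator $\langle O_i P_j\rangle_\psi - \langle O_i\rangle_\psi\langle P_j\rangle_\psi$ of bounded local operators decays exponentially (or at least faster than any polynomial) in $|i-j|$. I would then show that $|\psi\rangle$ must exhibit strong localization of the $G$ symmetry in the sense of Eq.~\eqref{eq:strong_loc} with $\rho=|\psi\rangle\langle\psi|$. Since $|\psi\rangle$ is $(G,\mathcal{U})$-symmetric with $G$ finite abelian and $\mathcal{U}$ an entangler for a non-trivial SPT phase, the density matrix $\rho=|\psi\rangle\langle\psi|$ is strongly $G$-symmetric and obeys $\mathcal{U}\rho\mathcal{U}^\dagger=\rho$, so strong localization would contradict Theorem~\ref{thm:loc}. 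Hence $|\psi\rangle$ must carry some (quasi) long-range two-point correlation. The whole statement therefore reduces to the implication: \emph{a 1D $G$-symmetric pure state with exponentially decaying correlations exhibits strong symmetry localization.}

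To prove that implication I would use matrix-product-state (MPS) technology. By the standard results on approximating 1D states with rapidly decaying correlations, $|\psi\rangle$ is approximated, with error exponentially small in the bond dimension, by a (possibly inhomogeneous) MPS, and exponential clustering lets one take this MPS to be injective (normal). For an injective MPS, the fundamental theorem of MPS implies that applying the on-site symmetry $U_{\Gamma,g}=\prod_{i\in\Gamma}u_i(g)$ to a block $\Gamma=[a,b]$ acts as the identity on the physical legs up to a gauge transformation $V_{a-1}(g)$ on the single virtual bond at the left end of $\Gamma$ and $V_b(g)$ at the right end. Pulling these bond operators onto a bounded number of physical sites near $a$ and $b$---again at the cost of an error exponentially small in $|\Gamma|$---produces endpoint operators $L_{\Gamma,g}$ and $R_{\Gamma,g}$ satisfying $U_{\Gamma,g}^\dagger|\psi\rangle=(L_{\Gamma,g}^\dagger\otimes R_{\Gamma,g}^\dagger)|\psi\rangle$ up to exponentially small corrections; multiplying by $\langle\psi|$ and relabeling $g\mapsto g^{-1}$ gives exactly Eq.~\eqref{eq:strong_loc}. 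An MPS-free alternative would be to invoke the split property, which holds for 1D states with sufficiently fast decay of correlations, and run the symmetry-localization argument operator-algebraically; I would present whichever route is cleaner and cite the relevant structural results rather than reproving them.

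The main obstacle is precisely this implication: the MPS approximation is only approximate, the endpoint operators delivered by the fundamental theorem naturally live on virtual bonds rather than on a fixed finite block of physical sites, and one must verify that the accumulated errors and the locality and size of the pulled-in operators $L_{\Gamma,g},R_{\Gamma,g}$ can be controlled uniformly in the system size and in $|\Gamma|$---otherwise one has not actually obtained ``strong localization up to exponentially small corrections.'' Once that technical point is settled, the corollary is immediate: a strongly localized state cannot satisfy the hypotheses of Theorem~\ref{thm:loc}, so a $(G,\mathcal{U})$-symmetric 1D pure state with $G$ finite abelian and $\mathcal{U}$ a non-trivial SPT entangler cannot have purely short-ranged two-point correlations.
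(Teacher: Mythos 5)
Your proposal is correct and follows essentially the same route as the paper: argue the contrapositive, use exponential decay of correlations to get an area law and a faithful (injective) MPS representation, invoke the MPS symmetry-localization result (the content of Ref.~\cite{PerezGarcia2008}, which you re-sketch via the fundamental theorem of MPS) to obtain strong localization in the sense of Eq.~\eqref{eq:strong_loc}, and then contradict Theorem~\ref{thm:loc}. The paper simply cites the area-law, MPS-faithfulness, and symmetry-localization results rather than re-deriving them, so the technical error-control issues you flag are exactly what those cited references handle.
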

\begin{proof}
    We prove the contrapositive. If $|\psi\rangle$ has only exponentially decaying correlations, then it has area law entanglement \cite{Brandao2014,Cho2018} and can therefore be faithfully represented as a matrix product state (MPS) \cite{Verstraete2006}. But, every MPS with exponentially decaying correlations exhibits symmetry localization \cite{PerezGarcia2008}. From Theorem \ref{thm:loc}, we see that $|\psi\rangle$ cannot be $(G,\mathcal{U})$-symmetric where $\mathcal{U}$ is an entangler for a non-trivial SPT phase.
\end{proof}
This result for pure states is essentially equivalent to that of Ref.~\cite{Chen2011two}, which holds for general groups $G$, where the authors arrived at the constraint from the perspective of anomalous symmetries.

For mixed-state catalysts, the situation is different. As we have seen, the SW-SSB catalyst in Eq.~\eqref{eq:swssb_cat} has all short-range correlations (but long-range fidelity correlations). While the SW-SSB catalyst does not have strong symmetry localization, it does have weak symmetry localization. This means that, for any subregion $\Gamma$, there exist operators $L_{\Gamma,g}$ and $R_{\Gamma,g}$ such that
\begin{equation} \label{eq:weak_loc}
    U_{\Gamma,g}^\dagger\rho U_{\Gamma,g} = (L_{\Gamma,g}^\dagger\otimes R_{\Gamma,g}^\dagger)\rho(L_{\Gamma,g}\otimes R_{\Gamma,g}).
\end{equation}
This can be seen in the example of Eq.~\eqref{eq:swssb_cat} where $L_{\Gamma,g}=R_{\Gamma,g} = \id$. 

We can argue that any catalyst with weak symmetry localization must have a long-range fidelity correlator. Since $\mathcal{U}$ is symmertic and locality preserving, we can write, 
\begin{equation}
    \mathcal{U} U_{\Gamma,g} (L_{\Gamma,g}^\dagger \otimes R_{\Gamma,g}^\dagger) \mathcal{U}^\dagger = U_{\Gamma,g} (\mathcal{L}_{\Gamma,g}^\dagger \otimes \mathcal{R}_{\Gamma,g}^\dagger),
\end{equation}
where $\mathcal{L}_{\Gamma,g}$ and $\mathcal{R}_{\Gamma,g}$ are localized near the left and right endpoints of $\Gamma$. If $\mathcal{U}$ is the entangler of a non-trivial abelian SPT and $G$ is finite-abelian, then the operators $\mathcal{L}_{\Gamma,g}$ will carry different charge than $L_{\Gamma,g}$ for some $g\in G$ (likewise for the right endpoints). Then, we can write,
\begin{equation} \label{eq:weak_fid_proof}
    \begin{aligned}
        \rho &= (L_{\Gamma,g}\otimes R_{\Gamma,g})U_{\Gamma,g}^\dagger\rho U_{\Gamma,g}(L_{\Gamma,g}^\dagger\otimes R_{\Gamma,g}^\dagger) \\
        &= (\mathcal{L}_{\Gamma,g}\otimes \mathcal{R}_{\Gamma,g})U_{\Gamma,g}^\dagger\rho U_{\Gamma,g}(\mathcal{L}_{\Gamma,g}^\dagger\otimes \mathcal{R}_{\Gamma,g}^\dagger) \\
        & = (\mathcal{L}_{\Gamma,g}L_{\Gamma,g}^\dagger\otimes \mathcal{R}_{\Gamma,g}R_{\Gamma,g}^\dagger)\rho(L_{\Gamma,g}\mathcal{L}_{\Gamma,g}^\dagger\otimes R_{\Gamma,g}\mathcal{R}_{\Gamma,g}^\dagger),
    \end{aligned}
\end{equation}
where we conjugated by $\mathcal{U}$ and then applied Eq.~\eqref{eq:weak_loc}. The endpoint operators $\mathcal{L}_{\Gamma,g}L_{\Gamma,g}^\dagger$ and $\mathcal{R}_{\Gamma,g}R_{\Gamma,g}^\dagger$ carry non-trivial charge for some $g\in G$. Then Eq.~\eqref{eq:weak_fid_proof} implies a long-range fidelity correlator \eqref{eq:fidelity_corr} for charged operators, as claimed.

For general mixed-state catalysts, we make the following conjecture as an extension of Corollary \ref{cor:pure}:
\begin{conjecture}
\label{Conj1}
A catalyst $\rho$ of 1D SPT should have a (quasi) long-range correlation in the sense of R\'enyi-$n$ correlator for some $n\in\mathbb{Z}^+$ and local operator $O_i$, namely
\begin{align}
R^{(n)}(i,j)=\frac{\tr\left(\rho \sigma^{n-1}\right)}{\tr(\rho^n)},\quad \sigma=O_i^\dag O_j \rho O_j^\dag O_i.
\end{align}
\end{conjecture}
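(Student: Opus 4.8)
The plan is to prove the contrapositive in the spirit of Corollary~\ref{cor:pure}, but carried out in a replicated Hilbert space so as to reach the R\'enyi correlators. Suppose $\rho$ is a 1D $(G,\mathcal{U})$-symmetric state with $G$ finite-abelian and $\mathcal{U}$ an entangler for a non-trivial SPT phase, and suppose for contradiction that $R^{(n)}(i,j)\to 0$ as $|i-j|\to\infty$ for every $n\in\mathbb{Z}^+$ and every local $O_i$; the goal is to deduce that the invariant of Eq.~\eqref{eq:invariant} is trivial ($c_{g,h}=1$ for all $g,h\in G$), contradicting non-triviality of $\mathcal{U}$. The structural input needed is a mixed-state analogue of the step used in Corollary~\ref{cor:pure}, where exponentially decaying correlations force an area law and hence an injective MPS that automatically exhibits strong symmetry localization~\cite{PerezGarcia2008}. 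Here one wants: rapid decay of all $R^{(n)}$ lets us represent $\rho$ by a well-behaved matrix product density operator --- equivalently, lets us choose a short-range-correlated purification --- so that the truncated symmetry $U_{\Gamma,g}$ acts on $\rho$, or more generally on the $n$-fold replica $\rho^{\otimes n}$ dressed by a cyclic replica permutation (the object appearing in $R^{(n)}$), by operators localized within a finite distance of the two endpoints of $\Gamma$.

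Given such a localization, the argument organizes into a hierarchy indexed by $n$. Strong localization, Eq.~\eqref{eq:strong_loc}, is the $n=1$ case and is exactly what Theorem~\ref{thm:loc} forbids. Weak localization, Eq.~\eqref{eq:weak_loc}, is the next case: the manipulation around Eq.~\eqref{eq:weak_fid_proof} shows it forces a long-range fidelity correlator of charged operators --- hence, via the standard comparisons between fidelity, Edwards--Anderson, and R\'enyi-$2$ correlators, a long-range $R^{(2)}$ --- unless the relevant endpoint operators carry no charge under $G$, which is precisely the statement $c_{g,h}=1$. For general $n$ one expects an intermediate notion: $U_{\Gamma,g}$ localizes only after passing to $\rho^{\otimes n}$ with the appropriate cyclic permutation inserted, and then the charge mismatch between a naive endpoint operator and its $\mathcal{U}$-conjugate shows up as a long-range $R^{(n)}$ rather than $R^{(2)}$. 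One then closes the argument by a replica version of Theorem~\ref{thm:loc}: multiply the defining identity~\eqref{eq:invariant} by the appropriately replicated state, and repeat the moves of Eqs.~\eqref{eq:invariant_proof_1}--\eqref{eq:invariant_proof_3}, using the generalized localization to commute truncated symmetries past endpoint operators and absorb them into the replicated state, accumulating phases $c_{g,h},c_{h^{-1},g},\dots$ that must therefore all equal $1$. The only genuinely new feature relative to the $n=1$ proof is bookkeeping: tracking which replica and which cyclic shift each operator acts on.

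The main obstacle --- and the reason the statement remains a conjecture --- is the structural input. Unlike the pure-state case, there is no clean implication ``short-range correlations at all R\'enyi orders $\Rightarrow$ nice MPDO $\Rightarrow$ symmetry localization.'' A density matrix can have all $R^{(n)}$ short-ranged yet fail to admit a normal or finitely correlated matrix product density operator, or admit one whose transfer operator has no spectral gap, so the endpoint localization is not automatic; even formulating the correct notion of ``$n$-th order symmetry localization'' and proving that it follows from short-rangedness of every $R^{(n)}$ is the crux of the matter. Once that is in place, the anomaly-matching step is a routine generalization of Theorem~\ref{thm:loc} together with the weak-localization argument already given in the text.
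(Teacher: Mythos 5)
You should first note that the statement you were asked to prove is, in the paper, explicitly a \emph{conjecture}: the authors give no proof of it. What the paper does provide is circumstantial support — Theorem~\ref{thm:loc} rules out strong symmetry localization \eqref{eq:strong_loc} for any $(G,\mathcal{U})$-symmetric catalyst, the manipulation in Eq.~\eqref{eq:weak_fid_proof} shows that weak localization \eqref{eq:weak_loc} forces a long-range fidelity correlator, and the \emph{converse} of the conjecture is argued via symmetric invertibility: a catalyst cannot be symmetrically invertible (since invertibility implies strong localization \cite{Average_2023}), and Ref.~\cite{SWSSB2024} shows symmetrically invertible states have exponentially decaying R\'enyi-$n$ correlators for all $n$. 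Your proposal is therefore not being measured against a proof in the paper, and, as you yourself say, it is not a proof either: it is a plan whose decisive step is missing.

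The gap you flag is exactly the right one, and it is precisely why the authors left this as a conjecture rather than a theorem. In the pure-state case (Corollary~\ref{cor:pure}) the chain ``exponentially decaying correlations $\Rightarrow$ area law $\Rightarrow$ injective MPS $\Rightarrow$ symmetry localization \cite{PerezGarcia2008}'' is available; for mixed states there is no established analogue of the form ``all $R^{(n)}$ short-ranged $\Rightarrow$ well-behaved MPDO/purification $\Rightarrow$ $n$-th order symmetry localization,'' and even the correct definition of the intermediate localization notion is open. Without that lemma, the replica version of Theorem~\ref{thm:loc} you sketch (multiplying Eq.~\eqref{eq:invariant} by the replicated state and repeating the moves that extract $c_{g,h}$) has nothing to act on, so the argument does not close. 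Your hierarchy picture ($n=1$ strong localization, $n=2$ weak localization and fidelity correlators, general $n$ replica-localization) is a sensible framing consistent with the partial arguments the paper does give, and honestly isolating the missing structural input is the most useful thing one can say here — but the statement remains unproven both in your write-up and in the paper.
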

We note that the converse of this conjecture can be easily argued. First, note that a catalyst $\rho$ for a 1D SPT cannot be symmetrically-invertible, meaning there cannot exist a state $\tilde{\rho}\in \tilde{\mathcal{H}}$ and a strongly symmetric finite-depth local quantum channel $\E$, such that
\begin{align}
\E[\rho\otimes\tilde{\rho}]=\ket{0}\bra{0}\in\mathcal{H}\otimes\tilde{\mathcal{H}}.
\end{align}
This follows from the fact that symmetric invertibility implies strong symmetry localization \cite{Average_2023}, which we have shown is incompatible with being an SPT catalyst. Now, it was shown in Ref.~\cite{SWSSB2024} that all symmetrically invertible states have exponentially-decaying R\'enyi-$n$ correlators for all $n$. Our conjecture is therefore the converse, namely that any state which is not symmetrically invertible must have a long-range R\'enyi-$n$ correlator for some $n$.

\section{Applications to state preparation} \label{sec:state_prep}

Here we discuss how the many-body catalyst framework can be used to derive novel methods for preparing SPT-ordered states. Throughout this discussion, we assume that we are in a scenario where we only have access to symmetric gates or interactions since SPT phases can be easily prepared using non-symmetric FDQCs \cite{Chen2013}. Such a symmetry constraint may come from inherent properies of the underlying physical system, such as conservation of charge or particle number, or gauge symmetries in lattice gauge theories \cite{Schuch2004,Verstraete2006}.
It can also be desirable to engineer a system where only symmetric interactions occur since this can increase the robustness of certain models of quantum computation \cite{Miyake2010,Else2012,Raussendorf2019} or the efficacy of error correction (where we view the symmetry constraint as being similar to a biased-noise error model \cite{Tuckett2019,BonillaAtaides2021}).

At the most basic level, any efficient preparation unitary $\mathcal{U}_a$ for a catalyst $|a\rangle$ from a trivial product state immediately gives an efficient protocol to prepare any state $|\psi\rangle$ that it catalyzes. This is because we can simply prepare the catalyst in an ancillary system, then use it to prepare $|\psi\rangle$ via a symmetric FDQC,
\begin{equation}
    |\triv\rangle\otimes|\triv\rangle \xrightarrow{I\otimes \mathcal{U}_a} |\triv\rangle\otimes |a\rangle \xrightarrow{\mathcal{V}} |\psi\rangle\otimes |a\rangle,
\end{equation}
and we can additionally apply $I\otimes \mathcal{U}_a^{-1}$ at the end if we want the ancillary system to return to the trivial state. If $\mathcal{U}_a$ can be implemented in time $\tau$ using symmetric interactions, then $|\psi\rangle$ can be made in time $\tau + C$ where $C$ is the constant time needed to implement the FDQC $\mathcal{V}$ (or time $2\tau+C$ if we want to un-make the catalyst at the end). 

It turns out that  for the $(G,\mathcal{U})$-symmetric catalysts described in Theorem~\ref{thm:catalyst}, it is not necessary to use ancillary degrees of freedom to take advantage of the catalyst. Suppose a state $|\psi\rangle$ can be prepared using a symmetric QCA $\mathcal{U}$ and there is a catalyst $|a\rangle$ that is $(G,\mathcal{U})$-invariant. Then, consider the following process,
\begin{equation} \label{eq:4step}
    |\psi\rangle = \mathcal{U}\mathcal{U}_a^\dagger \mathcal{U}^\dagger \mathcal{U}_a |\triv\rangle.
\end{equation}
Since $\mathcal{U}$ is not generally a symmetric FDQC, this sequence of four evolutions does not describe an efficient preparation of $|\psi\rangle$. However, if we write $\mathcal{U}_a$ as a Hamiltonian evolution,
\begin{equation}
    \mathcal{U}_a = \mathcal{T} \left[e^{i\int_{0}^\tau H_a(t) dt}\right],
\end{equation}
for some time-dependent symmetric Hamiltonian $H_a(t)$, where $\mathcal{T}$ is the time-ordering operator, then we can rewrite Eq.~\eqref{eq:4step} as,
\begin{equation}
    |\psi\rangle = \mathcal{T} \left[e^{-i\int_{0}^\tau \tilde{H}_a(t) dt}\right]\mathcal{T} \left[e^{i\int_{0}^\tau H_a(t) dt}\right]|\triv\rangle.
\end{equation}
\diff{
where we have defined the modified Hamiltonian,
\begin{equation}
    \tilde{H}_a(t) = \mathcal{U} H_a(t) \mathcal{U}^\dagger.
\end{equation}
}
If $H_a(t)$ consists of symmetric terms then so will $\tilde{H}_a(t)$ since $\mathcal{U}_\spt$ is a symmetric QCA. Furthermore, $H_a(t)$ and $\tilde{H}_a(t)$ will have the same spatial locality. \diff{Note that this only holds if $H_a$ is a symmetric Hamiltonian, which is where the symmetry of the catalyst comes in.} So, if $|a\rangle$ can be prepared in time $\tau$ using a symmetric Hamiltonian evolution, then $|\psi\rangle$ can be prepared in time $2\tau$ using a symmetric Hamiltonian evolution with the same locality, \diff{without requiring ancillary degrees of freedom}. If we instead write $\mathcal{U}_a$ as a symmetric quantum circuit, then $\mathcal{U} \mathcal{U}_a^\dagger\mathcal{U}^\dagger$ will be a symmetric quantum circuit with the same depth and locality (up to a constant).

\subsection{Generating SPT phases with long-range interactions}

The above method can be applied to construct efficient methods to prepare SPT phases with long-range interactions. First, consider the case of all-to-all interactions. In Sec.~\ref{sec:1dlsm} we saw that the product of long-range Bell states defined in Eq.~\eqref{eq:lr_bell} can catalyze the 1D SPT fixed-point state. These Bell states can be prepared in constant time using a long-range two-body symmetric Hamiltonian, or equivalently using a symmetric FDQC with long-range two-qubit gates. Therefore, the 1D SPT can also be prepared in constant time using long-range symmetric interactions. A similar construction works for SPT phases in all dimensions. This reproduces the observation of Ref.~\cite{Stephen2024klocal} that all SPT phases can be prepared in constant time using long-range symmetric interactions.

We can also construct efficient methods to prepare SPT states using power-law decaying Hamiltonians. We define an $\alpha$-decaying Hamiltonian as one of the form,
\begin{equation}
    H = \sum_{i<j} h_{i,j},
\end{equation}
where $h_{i,j}$ acts on sites $i$ and $j$ and is bounded by $||h_{i,j}||<d_{i,j}^{-\alpha}$ where $d_{i,j}$ is the distance between $i$ and $j$ and $\alpha$ is some positive constant. The catalyst we consider is the GHZ state, which we have seen serves as a catalyst for SPT phases in all dimensions. Efficient protocols to prepare GHZ states using power-law interactions have been studied extensively due to their application in state-transfer, see Ref.~\cite{Chen2023speed} for a recent review. Using the resulting GHZ state as a catalyst, all of these protocols can be immediately translated protocols to prepare SPT phases~\footnote{It should be noted that the protocols used in optimal state-transfer protocols of Refs.~\cite{Eldredge2017,Tran2021} do not use symmetric Hamiltonians. However, it is straightforward to modify these protocols such that they are symmetric, without changing their efficiency. See, as a demonstration, the linear-depth circuit in Ref.~\cite{Chen2024sequential} for creating a GHZ state.}.

In general, the optimal time to make a GHZ state using an $\alpha$-decaying Hamiltonian is a complicated function of $\alpha$. Here, we mention as one example that, for $\alpha < D$ where $D$ is the spatial dimension, the time to make a GHZ state is independent of system size. Therefore, SPT phases can be made in constant time using $\alpha$-decaying symmetric Hamiltonians for $\alpha < D$. 
\diff{To the best of our knowledge, this is the first construction of symmetric long-range Hamiltonians for preparing SPT phases.}

\subsection{Generating SPT phases with local symmetric channels}

We showed in Sec.~\ref{sec:mixed} that SPT phases can be catalyzed using SW-SSB mixed states. Here we show how to prepare the mixed-states efficiently using only local symmetric measurements, thereby giving a constant-time local, symmetric channel to prepare SPT phases. Importantly, no long-range classical communication or unitary feedback is needed.

It is well-known that a GHZ state can be prepared in constant time using local measurements of the stabilizers and unitary operations conditioned on the measurement outcomes. However, those unitary corrections require long-range classical communication, and hence this protocol does not constitute a local, symmetric channel. Surprisingly, it turns out that the unitary correction is not necessarily needed to use the state as a catalyst because the post-measurement state can be $(G,\mathcal{U})$-invariant for all measurement outcomes.

Let us demonstrate this for the example of the 1D cluster state. Suppose we initialize an ancillary system in the symmetric product state $|+\rangle ^{\otimes N}$. Now, we measure the operator $Z_iZ_{i+2}$ for each $i$. Note that this measurement commutes with the $\Z_2\times \Z_2$ symmetry. The post-measurement state is,
\begin{equation}
    |a(\vec{s})\rangle = \prod_i \frac{\id+ s_i Z_iZ_{i+2}}{2}|+\rangle ^{\otimes N},
\end{equation}
where $\vec{s}=(s_1,\dots,s_N)$ and the signs $s_i=\pm 1$ are the measurement outcomes. Note that valid measurement outcomes satisfy $\prod_i s_{2i} = \prod_i s_{2i+1} = 1$ and all valid outcomes are equally likely. In the case that $s_i=1$ for all $i$, this describes a pair of GHZ states on the odd and even sublattices. For generic measurements, the post-measurement state is a GHZ state with some pattern of domain walls. These domain walls can be corrected by applying strings of local $X$ operators determined by the measurement outcomes. However, this correction step is unnecessary since the post-measurement state $|a(\vec{s})\rangle$ is a catalyst for all measurement outcomes. To see this, note that $|a(\vec{s})\rangle$ is a stabilizer state, and a minimal set of generating stabilizers with no relations between them is given by $U_e$, $U_o$, and $\{s_iZ_iZ_{i+2}, i=1,\dots,N-2\}$. Therefore, we can write,
\begin{equation}
    |a(\vec{s})\rangle\langle a(\vec{s})| = \prod_{i=1}^{N-2} \left(\frac{\id+ s_i Z_iZ_{i+2}}{2}\right)\left(\frac{\id + U_e}{2}\frac{\id + U_o}{2}\right),
\end{equation}
which is clearly invariant under conjugation by $\mathcal{U}_\CZ$. This implies that $\mathcal{U}_\CZ|a(\vec{s})\rangle = |a(\vec{s})\rangle$ up to an irrelevant phase.

Therefore, to prepare a cluster state, we can prepare a catalyst by performing symmetric local measurements on a product state and then apply the catalyzed symmetric FDQC described in Theorem~\ref{thm:catalyst}. The catalyst that we prepare on average in the process is 
\begin{equation}
\rho=\sum_{\vec{s}}p_{\vec{s}}|a(\vec{s})\rangle\langle a(\vec{s})| = \frac{1}{2^N}(\id + U_e)(\id + U_o),
\end{equation}
which is exactly the mixed-state catalyst described in Sec.~\ref{sec:mixed}. Therefore, the above process describes via an operational ``unravelling'' of what it means to use the mixed-state catalyst in practice. \diff{We stress that this mixed state is different from the pure GHZ state. In particular, it does not allow for long-range state transfer, so there is no contradiction that we are able to prepare it with a symmetric local channel.}

\section{Discussion} \label{sec:discussion}

In this work, we introduced the idea of a many-body quantum catalyst and demonstrated its richness and broad applicability in the context of SPT phases of matter. There are many avenues of future work, and we describe some here. First, it is worth noting that a majority of our results considered a particular class of catalysts that we called $(G,\mathcal{U})$-symmetric catalysts (Theorem \ref{thm:catalyst}), \diff{but we have not shown that all catalysts must belong to this class.} Can we find catalysts for SPT phases with new physical properties if we move away from this class? Or is it possible that all catalysts can be related in a natural way to a $(G,\mathcal{U})$-symmetric catalyst? In general, it is an important problem to constrain potential catalysts at a more basic level than we have done here. \diff{In this direction, Theorem \ref{thm:loc} gives constraints on potential catalysts that lie beyond the framework of Theorem \ref{thm:catalyst}.}

All of our results and examples considered bosonic systems. However, our ideas can be straightforwardly applied to fermionic systems. For example, the 1D Kitaev chain is an SPT phase protected by fermionic parity symmetry \cite{Kitaev2001}. It can be created from a trivial state using a symmetric QCA: the ``Majorana'' or ``half'' shift operator \cite{Po2017,Seiberg2024majorana}. Then, any fermionic state that is invariant under this operation can catalyze the Kitaev chain, see Refs.~\cite{Rahmani2015,OBrien2018,Seiberg2024majorana,Seiberg2024} for examples.
The discussion therefore exactly mirrors that given in Sec.~\ref{sec:1dlsm}, where the qubits therein are replaced by Majorana fermions. The fermionic case is especially interesting because the protecting parity symmetry cannot be explicitly broken. Therefore, unlike the case of bosonic SPT phases, there is no way to efficiently create a single Kitaev chain, making the catalyst approach potentially more appealing.

Another future direction involving fermions is to make a connection to the notion of {\it fragile topology} \cite{fragile}, wherein a topologically nontrivial energy band can be trivialized by the addition of trivial degrees of freedom. This represents an analog of our results for non-interacting fermions. Whether ideas introduced in our paper, possibly in combination with ideas from fragile topology, can be employed to develop an understanding of many body catalysts for interacting fermionic systems would be an interesting topic for future work. 

We have also focused on catalyzing short-range entangled phases of matter, and we suspect that catalysts will play a much different role when considering long-range entangled phases. As a preliminary comment, we remark that
some of the aspects of our story are reminiscent of the idea of foliated fracton phases of matter \cite{Shirley2018fracton}. Therein, certain 3D phases of matter called fracton phases have the property that an FDQC acting on the fracton state and a product state can turn that product state into a topologically ordered state. This process changes the 3D state by reducing its size in one direction, but preserves its phase of matter. So one can view a fracton phase as a catalyst for topologically ordered states, with the caveat that it can only be used finitely many times (assuming a finite system size).

In a similar vein, catalysts can also be used to transform between gapless phases of matter \cite{Scaffidi2017, Verresen2021gapless}. For example, it is shown in Ref.~\cite{Verresen2021gapless} that two gapless topological phases with distinct patterns of symmetry enrichment can be related by applying symmetric QCA corresponding to SPT entanglers. Since the catalysts we construct allow such QCA to be implemented on arbitrary states, they can also catalyze the transformation between certain symmetry-enriched gapless phases.

While we have only considered catalyzing transformations between pure states, possibly with use of mixed-state catalysts, we can also consider transformations between mixed states. One particularly relevant subject is SPT phases of mixed states and open quantum systems \cite{Roberts2017, Coser2019classificationof, de_Groot_2022, ma2024symmetry, lee2024symmetry, xue2024tensor, sun2024holographic}, which have also been called average SPT phases of matter \cite{aspt,zhang2024strange, Average_2023, guo2024LPDO, Zhang_2023fractonic}. It will be interesting to investigate whether transformations between such phases can be catalyzed, and whether they require FDQCs or more general quantum channels to implement.

Finally, we used the catalyst perspective to derive novel methods to prepare SPT phases using symmetric long-range interactions. These methods were based on known efficient methods to prepare GHZ states with power-law interactions. However, we know there are other kinds of catalysts including gapless states (and topological states in higher dimensions). Any efficient method to prepare these states would lead to efficient methods to prepare SPT phases. For gapless states in particular, we note that their preparation complexity appears similar to GHZ states with local interactions \cite{Ho2019}, but this may change with power-law interactions. Preparing gapless states with power-law interactions seems like an interesting problem with relations to Lieb-Robinson bounds \cite{Chen2023speed} and anomaly bootstrapping \cite{Lanzetta2023}.

\begin{acknowledgments}
DTS thanks Xie Chen, Aaron Friedman, Oliver Hart, Michael Hermele, Marvin Qi, Nathanan Tantivasadakarn, and Ruben Verresen for helpful discussions. JHZ thanks Yichen Xu for enlightening discussions. DTS is supported by the Simons Collaboration on Ultra-Quantum Matter, which is a grant from the Simons Foundation (651440). JHZ is supported by the U.S. Department of Energy under Award Number DE-SC0024324.
\end{acknowledgments}

\bibliography{biblio.bib}

\begin{thebibliography}{110}%
\makeatletter
\providecommand \@ifxundefined [1]{%
 \@ifx{#1\undefined}
}%
\providecommand \@ifnum [1]{%
 \ifnum #1\expandafter \@firstoftwo
 \else \expandafter \@secondoftwo
 \fi
}%
\providecommand \@ifx [1]{%
 \ifx #1\expandafter \@firstoftwo
 \else \expandafter \@secondoftwo
 \fi
}%
\providecommand \natexlab [1]{#1}%
\providecommand \enquote  [1]{``#1''}%
\providecommand \bibnamefont  [1]{#1}%
\providecommand \bibfnamefont [1]{#1}%
\providecommand \citenamefont [1]{#1}%
\providecommand \href@noop [0]{\@secondoftwo}%
\providecommand \href [0]{\begingroup \@sanitize@url \@href}%
\providecommand \@href[1]{\@@startlink{#1}\@@href}%
\providecommand \@@href[1]{\endgroup#1\@@endlink}%
\providecommand \@sanitize@url [0]{\catcode `\\12\catcode `\$12\catcode
  `\&12\catcode `\#12\catcode `\^12\catcode `\_12\catcode `\%12\relax}%
\providecommand \@@startlink[1]{}%
\providecommand \@@endlink[0]{}%
\providecommand \url  [0]{\begingroup\@sanitize@url \@url }%
\providecommand \@url [1]{\endgroup\@href {#1}{\urlprefix }}%
\providecommand \urlprefix  [0]{URL }%
\providecommand \Eprint [0]{\href }%
\providecommand \doibase [0]{http://dx.doi.org/}%
\providecommand \selectlanguage [0]{\@gobble}%
\providecommand \bibinfo  [0]{\@secondoftwo}%
\providecommand \bibfield  [0]{\@secondoftwo}%
\providecommand \translation [1]{[#1]}%
\providecommand \BibitemOpen [0]{}%
\providecommand \bibitemStop [0]{}%
\providecommand \bibitemNoStop [0]{.\EOS\space}%
\providecommand \EOS [0]{\spacefactor3000\relax}%
\providecommand \BibitemShut  [1]{\csname bibitem#1\endcsname}%
\let\auto@bib@innerbib\@empty
\bibitem [{\citenamefont {Jonathan}\ and\ \citenamefont
  {Plenio}(1999)}]{Jonathan1999}%
  \BibitemOpen
  \bibfield  {author} {\bibinfo {author} {\bibfnamefont {D.}~\bibnamefont
  {Jonathan}}\ and\ \bibinfo {author} {\bibfnamefont {M.~B.}\ \bibnamefont
  {Plenio}},\ }\href {\doibase 10.1103/PhysRevLett.83.3566} {\bibfield
  {journal} {\bibinfo  {journal} {Phys. Rev. Lett.}\ }\textbf {\bibinfo
  {volume} {83}},\ \bibinfo {pages} {3566} (\bibinfo {year}
  {1999})}\BibitemShut {NoStop}%
\bibitem [{\citenamefont {Kondra}\ \emph {et~al.}(2021)\citenamefont {Kondra},
  \citenamefont {Datta},\ and\ \citenamefont {Streltsov}}]{Kondra2021}%
  \BibitemOpen
  \bibfield  {author} {\bibinfo {author} {\bibfnamefont {T.~V.}\ \bibnamefont
  {Kondra}}, \bibinfo {author} {\bibfnamefont {C.}~\bibnamefont {Datta}}, \
  and\ \bibinfo {author} {\bibfnamefont {A.}~\bibnamefont {Streltsov}},\ }\href
  {\doibase 10.1103/PhysRevLett.127.150503} {\bibfield  {journal} {\bibinfo
  {journal} {Phys. Rev. Lett.}\ }\textbf {\bibinfo {volume} {127}},\ \bibinfo
  {pages} {150503} (\bibinfo {year} {2021})}\BibitemShut {NoStop}%
\bibitem [{\citenamefont {Anshu}\ \emph {et~al.}(2018)\citenamefont {Anshu},
  \citenamefont {Hsieh},\ and\ \citenamefont {Jain}}]{Anshu2018}%
  \BibitemOpen
  \bibfield  {author} {\bibinfo {author} {\bibfnamefont {A.}~\bibnamefont
  {Anshu}}, \bibinfo {author} {\bibfnamefont {M.-H.}\ \bibnamefont {Hsieh}}, \
  and\ \bibinfo {author} {\bibfnamefont {R.}~\bibnamefont {Jain}},\ }\href
  {\doibase 10.1103/PhysRevLett.121.190504} {\bibfield  {journal} {\bibinfo
  {journal} {Phys. Rev. Lett.}\ }\textbf {\bibinfo {volume} {121}},\ \bibinfo
  {pages} {190504} (\bibinfo {year} {2018})}\BibitemShut {NoStop}%
\bibitem [{\citenamefont {Marvian}(2022)}]{Marvian2022}%
  \BibitemOpen
  \bibfield  {author} {\bibinfo {author} {\bibfnamefont {I.}~\bibnamefont
  {Marvian}},\ }\href {\doibase 10.1038/s41567-021-01464-0} {\bibfield
  {journal} {\bibinfo  {journal} {Nature Physics}\ }\textbf {\bibinfo {volume}
  {18}},\ \bibinfo {pages} {283–289} (\bibinfo {year} {2022})}\BibitemShut
  {NoStop}%
\bibitem [{\citenamefont {Gidney}\ and\ \citenamefont
  {Fowler}(2019)}]{Gidney2019efficientmagicstate}%
  \BibitemOpen
  \bibfield  {author} {\bibinfo {author} {\bibfnamefont {C.}~\bibnamefont
  {Gidney}}\ and\ \bibinfo {author} {\bibfnamefont {A.~G.}\ \bibnamefont
  {Fowler}},\ }\href {\doibase 10.22331/q-2019-04-30-135} {\bibfield  {journal}
  {\bibinfo  {journal} {{Quantum}}\ }\textbf {\bibinfo {volume} {3}},\ \bibinfo
  {pages} {135} (\bibinfo {year} {2019})}\BibitemShut {NoStop}%
\bibitem [{\citenamefont {Fang}\ and\ \citenamefont
  {Liu}(2024)}]{fang2024surpassing}%
  \BibitemOpen
  \bibfield  {author} {\bibinfo {author} {\bibfnamefont {K.}~\bibnamefont
  {Fang}}\ and\ \bibinfo {author} {\bibfnamefont {Z.-W.}\ \bibnamefont {Liu}},\
  }\href {https://arxiv.org/abs/2410.14547} {\enquote {\bibinfo {title}
  {Surpassing the fundamental limits of distillation with catalysts},}\ }
  (\bibinfo {year} {2024}),\ \Eprint {http://arxiv.org/abs/2410.14547}
  {arXiv:2410.14547 [quant-ph]} \BibitemShut {NoStop}%
\bibitem [{\citenamefont {Chen}\ \emph {et~al.}(2010)\citenamefont {Chen},
  \citenamefont {Gu},\ and\ \citenamefont {Wen}}]{Chen2010}%
  \BibitemOpen
  \bibfield  {author} {\bibinfo {author} {\bibfnamefont {X.}~\bibnamefont
  {Chen}}, \bibinfo {author} {\bibfnamefont {Z.-C.}\ \bibnamefont {Gu}}, \ and\
  \bibinfo {author} {\bibfnamefont {X.-G.}\ \bibnamefont {Wen}},\ }\href
  {\doibase 10.1103/PhysRevB.82.155138} {\bibfield  {journal} {\bibinfo
  {journal} {Phys. Rev. B}\ }\textbf {\bibinfo {volume} {82}},\ \bibinfo
  {pages} {155138} (\bibinfo {year} {2010})}\BibitemShut {NoStop}%
\bibitem [{\citenamefont {Chen}\ \emph {et~al.}(2013)\citenamefont {Chen},
  \citenamefont {Gu}, \citenamefont {Liu},\ and\ \citenamefont
  {Wen}}]{Chen2013}%
  \BibitemOpen
  \bibfield  {author} {\bibinfo {author} {\bibfnamefont {X.}~\bibnamefont
  {Chen}}, \bibinfo {author} {\bibfnamefont {Z.-C.}\ \bibnamefont {Gu}},
  \bibinfo {author} {\bibfnamefont {Z.-X.}\ \bibnamefont {Liu}}, \ and\
  \bibinfo {author} {\bibfnamefont {X.-G.}\ \bibnamefont {Wen}},\ }\href
  {\doibase 10.1103/PhysRevB.87.155114} {\bibfield  {journal} {\bibinfo
  {journal} {Phys. Rev. B}\ }\textbf {\bibinfo {volume} {87}},\ \bibinfo
  {pages} {155114} (\bibinfo {year} {2013})}\BibitemShut {NoStop}%
\bibitem [{\citenamefont {Lieb}\ \emph {et~al.}(1961)\citenamefont {Lieb},
  \citenamefont {Schultz},\ and\ \citenamefont {Mattis}}]{Lieb1961}%
  \BibitemOpen
  \bibfield  {author} {\bibinfo {author} {\bibfnamefont {E.}~\bibnamefont
  {Lieb}}, \bibinfo {author} {\bibfnamefont {T.}~\bibnamefont {Schultz}}, \
  and\ \bibinfo {author} {\bibfnamefont {D.}~\bibnamefont {Mattis}},\ }\href
  {\doibase https://doi.org/10.1016/0003-4916(61)90115-4} {\bibfield  {journal}
  {\bibinfo  {journal} {Annals of Physics}\ }\textbf {\bibinfo {volume} {16}},\
  \bibinfo {pages} {407} (\bibinfo {year} {1961})}\BibitemShut {NoStop}%
\bibitem [{\citenamefont {Lessa}\ \emph
  {et~al.}(2024{\natexlab{a}})\citenamefont {Lessa}, \citenamefont {Ma},
  \citenamefont {Zhang}, \citenamefont {Bi}, \citenamefont {Cheng},\ and\
  \citenamefont {Wang}}]{SWSSB2024}%
  \BibitemOpen
  \bibfield  {author} {\bibinfo {author} {\bibfnamefont {L.~A.}\ \bibnamefont
  {Lessa}}, \bibinfo {author} {\bibfnamefont {R.}~\bibnamefont {Ma}}, \bibinfo
  {author} {\bibfnamefont {J.-H.}\ \bibnamefont {Zhang}}, \bibinfo {author}
  {\bibfnamefont {Z.}~\bibnamefont {Bi}}, \bibinfo {author} {\bibfnamefont
  {M.}~\bibnamefont {Cheng}}, \ and\ \bibinfo {author} {\bibfnamefont
  {C.}~\bibnamefont {Wang}},\ }\href@noop {} {\enquote {\bibinfo {title}
  {Strong-to-weak spontaneous symmetry breaking in mixed quantum states},}\ }
  (\bibinfo {year} {2024}{\natexlab{a}}),\ \Eprint
  {http://arxiv.org/abs/2405.03639} {arXiv:2405.03639 [quant-ph]} \BibitemShut
  {NoStop}%
\bibitem [{\citenamefont {Schumacher}\ and\ \citenamefont
  {Werner}(2004)}]{schumacher2004}%
  \BibitemOpen
  \bibfield  {author} {\bibinfo {author} {\bibfnamefont {B.}~\bibnamefont
  {Schumacher}}\ and\ \bibinfo {author} {\bibfnamefont {R.~F.}\ \bibnamefont
  {Werner}},\ }\href {https://arxiv.org/abs/quant-ph/0405174} {\enquote
  {\bibinfo {title} {Reversible quantum cellular automata},}\ } (\bibinfo
  {year} {2004}),\ \Eprint {http://arxiv.org/abs/quant-ph/0405174}
  {arXiv:quant-ph/0405174 [quant-ph]} \BibitemShut {NoStop}%
\bibitem [{\citenamefont {Farrelly}(2020)}]{Farrelly2020reviewofquantum}%
  \BibitemOpen
  \bibfield  {author} {\bibinfo {author} {\bibfnamefont {T.}~\bibnamefont
  {Farrelly}},\ }\href {\doibase 10.22331/q-2020-11-30-368} {\bibfield
  {journal} {\bibinfo  {journal} {{Quantum}}\ }\textbf {\bibinfo {volume}
  {4}},\ \bibinfo {pages} {368} (\bibinfo {year} {2020})}\BibitemShut {NoStop}%
\bibitem [{\citenamefont {Huang}\ and\ \citenamefont {Chen}(2015)}]{Huang2015}%
  \BibitemOpen
  \bibfield  {author} {\bibinfo {author} {\bibfnamefont {Y.}~\bibnamefont
  {Huang}}\ and\ \bibinfo {author} {\bibfnamefont {X.}~\bibnamefont {Chen}},\
  }\href {\doibase 10.1103/PhysRevB.91.195143} {\bibfield  {journal} {\bibinfo
  {journal} {Phys. Rev. B}\ }\textbf {\bibinfo {volume} {91}},\ \bibinfo
  {pages} {195143} (\bibinfo {year} {2015})}\BibitemShut {NoStop}%
\bibitem [{\citenamefont {Chen}\ \emph {et~al.}(2024)\citenamefont {Chen},
  \citenamefont {Dua}, \citenamefont {Hermele}, \citenamefont {Stephen},
  \citenamefont {Tantivasadakarn}, \citenamefont {Vanhove},\ and\ \citenamefont
  {Zhao}}]{Chen2024sequential}%
  \BibitemOpen
  \bibfield  {author} {\bibinfo {author} {\bibfnamefont {X.}~\bibnamefont
  {Chen}}, \bibinfo {author} {\bibfnamefont {A.}~\bibnamefont {Dua}}, \bibinfo
  {author} {\bibfnamefont {M.}~\bibnamefont {Hermele}}, \bibinfo {author}
  {\bibfnamefont {D.~T.}\ \bibnamefont {Stephen}}, \bibinfo {author}
  {\bibfnamefont {N.}~\bibnamefont {Tantivasadakarn}}, \bibinfo {author}
  {\bibfnamefont {R.}~\bibnamefont {Vanhove}}, \ and\ \bibinfo {author}
  {\bibfnamefont {J.-Y.}\ \bibnamefont {Zhao}},\ }\href {\doibase
  10.1103/PhysRevB.109.075116} {\bibfield  {journal} {\bibinfo  {journal}
  {Phys. Rev. B}\ }\textbf {\bibinfo {volume} {109}},\ \bibinfo {pages}
  {075116} (\bibinfo {year} {2024})}\BibitemShut {NoStop}%
\bibitem [{\citenamefont {Stephen}\ \emph {et~al.}(2024)\citenamefont
  {Stephen}, \citenamefont {Dua}, \citenamefont {Lavasani},\ and\ \citenamefont
  {Nandkishore}}]{Stephen2024klocal}%
  \BibitemOpen
  \bibfield  {author} {\bibinfo {author} {\bibfnamefont {D.~T.}\ \bibnamefont
  {Stephen}}, \bibinfo {author} {\bibfnamefont {A.}~\bibnamefont {Dua}},
  \bibinfo {author} {\bibfnamefont {A.}~\bibnamefont {Lavasani}}, \ and\
  \bibinfo {author} {\bibfnamefont {R.}~\bibnamefont {Nandkishore}},\ }\href
  {\doibase 10.1103/PRXQuantum.5.010304} {\bibfield  {journal} {\bibinfo
  {journal} {PRX Quantum}\ }\textbf {\bibinfo {volume} {5}},\ \bibinfo {pages}
  {010304} (\bibinfo {year} {2024})}\BibitemShut {NoStop}%
\bibitem [{\citenamefont {Tantivasadakarn}\ and\ \citenamefont
  {Vishwanath}(2022)}]{Tantivasadakarn2022pump}%
  \BibitemOpen
  \bibfield  {author} {\bibinfo {author} {\bibfnamefont {N.}~\bibnamefont
  {Tantivasadakarn}}\ and\ \bibinfo {author} {\bibfnamefont {A.}~\bibnamefont
  {Vishwanath}},\ }\href {\doibase 10.1103/PhysRevLett.129.090501} {\bibfield
  {journal} {\bibinfo  {journal} {Phys. Rev. Lett.}\ }\textbf {\bibinfo
  {volume} {129}},\ \bibinfo {pages} {090501} (\bibinfo {year}
  {2022})}\BibitemShut {NoStop}%
\bibitem [{\citenamefont {Chen}\ \emph {et~al.}(2014)\citenamefont {Chen},
  \citenamefont {Lu},\ and\ \citenamefont {Vishwanath}}]{Chen2014ddw}%
  \BibitemOpen
  \bibfield  {author} {\bibinfo {author} {\bibfnamefont {X.}~\bibnamefont
  {Chen}}, \bibinfo {author} {\bibfnamefont {Y.-M.}\ \bibnamefont {Lu}}, \ and\
  \bibinfo {author} {\bibfnamefont {A.}~\bibnamefont {Vishwanath}},\ }\href
  {\doibase 10.1038/ncomms4507} {\bibfield  {journal} {\bibinfo  {journal}
  {Nature Communications}\ }\textbf {\bibinfo {volume} {5}} (\bibinfo {year}
  {2014}),\ 10.1038/ncomms4507}\BibitemShut {NoStop}%
\bibitem [{\citenamefont {Hsin}\ and\ \citenamefont
  {Turzillo}(2020)}]{Hsin2020}%
  \BibitemOpen
  \bibfield  {author} {\bibinfo {author} {\bibfnamefont {P.-S.}\ \bibnamefont
  {Hsin}}\ and\ \bibinfo {author} {\bibfnamefont {A.}~\bibnamefont
  {Turzillo}},\ }\href {\doibase 10.1007/JHEP09(2020)022} {\bibfield  {journal}
  {\bibinfo  {journal} {Journal of High Energy Physics}\ }\textbf {\bibinfo
  {volume} {2020}},\ \bibinfo {pages} {22} (\bibinfo {year}
  {2020})}\BibitemShut {NoStop}%
\bibitem [{\citenamefont {Dumitrescu}\ and\ \citenamefont
  {Hsin}(2024)}]{Dumitrescu2024}%
  \BibitemOpen
  \bibfield  {author} {\bibinfo {author} {\bibfnamefont {T.~T.}\ \bibnamefont
  {Dumitrescu}}\ and\ \bibinfo {author} {\bibfnamefont {P.-S.}\ \bibnamefont
  {Hsin}},\ }\href {\doibase 10.21468/SciPostPhys.17.3.093} {\bibfield
  {journal} {\bibinfo  {journal} {SciPost Phys.}\ }\textbf {\bibinfo {volume}
  {17}},\ \bibinfo {pages} {093} (\bibinfo {year} {2024})}\BibitemShut
  {NoStop}%
\bibitem [{\citenamefont {Choi}\ \emph {et~al.}(2024)\citenamefont {Choi},
  \citenamefont {Sanghavi}, \citenamefont {Shao},\ and\ \citenamefont
  {Zheng}}]{choi2024noninvertible}%
  \BibitemOpen
  \bibfield  {author} {\bibinfo {author} {\bibfnamefont {Y.}~\bibnamefont
  {Choi}}, \bibinfo {author} {\bibfnamefont {Y.}~\bibnamefont {Sanghavi}},
  \bibinfo {author} {\bibfnamefont {S.-H.}\ \bibnamefont {Shao}}, \ and\
  \bibinfo {author} {\bibfnamefont {Y.}~\bibnamefont {Zheng}},\ }\href
  {https://arxiv.org/abs/2405.13105} {\enquote {\bibinfo {title}
  {Non-invertible and higher-form symmetries in 2+1d lattice gauge theories},}\
  } (\bibinfo {year} {2024}),\ \Eprint {http://arxiv.org/abs/2405.13105}
  {arXiv:2405.13105 [cond-mat.str-el]} \BibitemShut {NoStop}%
\bibitem [{\citenamefont {Chen}\ \emph
  {et~al.}(2011{\natexlab{a}})\citenamefont {Chen}, \citenamefont {Gu},\ and\
  \citenamefont {Wen}}]{Chen2011}%
  \BibitemOpen
  \bibfield  {author} {\bibinfo {author} {\bibfnamefont {X.}~\bibnamefont
  {Chen}}, \bibinfo {author} {\bibfnamefont {Z.-C.}\ \bibnamefont {Gu}}, \ and\
  \bibinfo {author} {\bibfnamefont {X.-G.}\ \bibnamefont {Wen}},\ }\href
  {\doibase 10.1103/PhysRevB.83.035107} {\bibfield  {journal} {\bibinfo
  {journal} {Phys. Rev. B}\ }\textbf {\bibinfo {volume} {83}},\ \bibinfo
  {pages} {035107} (\bibinfo {year} {2011}{\natexlab{a}})}\BibitemShut
  {NoStop}%
\bibitem [{\citenamefont {Schuch}\ \emph {et~al.}(2011)\citenamefont {Schuch},
  \citenamefont {P\'erez-Garc\'{\i}a},\ and\ \citenamefont
  {Cirac}}]{Schuch2011}%
  \BibitemOpen
  \bibfield  {author} {\bibinfo {author} {\bibfnamefont {N.}~\bibnamefont
  {Schuch}}, \bibinfo {author} {\bibfnamefont {D.}~\bibnamefont
  {P\'erez-Garc\'{\i}a}}, \ and\ \bibinfo {author} {\bibfnamefont
  {I.}~\bibnamefont {Cirac}},\ }\href {\doibase 10.1103/PhysRevB.84.165139}
  {\bibfield  {journal} {\bibinfo  {journal} {Phys. Rev. B}\ }\textbf {\bibinfo
  {volume} {84}},\ \bibinfo {pages} {165139} (\bibinfo {year}
  {2011})}\BibitemShut {NoStop}%
\bibitem [{\citenamefont {Po}\ \emph {et~al.}(2016)\citenamefont {Po},
  \citenamefont {Fidkowski}, \citenamefont {Morimoto}, \citenamefont {Potter},\
  and\ \citenamefont {Vishwanath}}]{Po2016}%
  \BibitemOpen
  \bibfield  {author} {\bibinfo {author} {\bibfnamefont {H.~C.}\ \bibnamefont
  {Po}}, \bibinfo {author} {\bibfnamefont {L.}~\bibnamefont {Fidkowski}},
  \bibinfo {author} {\bibfnamefont {T.}~\bibnamefont {Morimoto}}, \bibinfo
  {author} {\bibfnamefont {A.~C.}\ \bibnamefont {Potter}}, \ and\ \bibinfo
  {author} {\bibfnamefont {A.}~\bibnamefont {Vishwanath}},\ }\href {\doibase
  10.1103/PhysRevX.6.041070} {\bibfield  {journal} {\bibinfo  {journal} {Phys.
  Rev. X}\ }\textbf {\bibinfo {volume} {6}},\ \bibinfo {pages} {041070}
  (\bibinfo {year} {2016})}\BibitemShut {NoStop}%
\bibitem [{\citenamefont {Majumdar}\ and\ \citenamefont
  {Ghosh}(1969)}]{majumdar1969next}%
  \BibitemOpen
  \bibfield  {author} {\bibinfo {author} {\bibfnamefont {C.~K.}\ \bibnamefont
  {Majumdar}}\ and\ \bibinfo {author} {\bibfnamefont {D.~K.}\ \bibnamefont
  {Ghosh}},\ }\href@noop {} {\bibfield  {journal} {\bibinfo  {journal} {Journal
  of Mathematical Physics}\ }\textbf {\bibinfo {volume} {10}},\ \bibinfo
  {pages} {1388} (\bibinfo {year} {1969})}\BibitemShut {NoStop}%
\bibitem [{\citenamefont {Thorngren}\ \emph {et~al.}(2021)\citenamefont
  {Thorngren}, \citenamefont {Vishwanath},\ and\ \citenamefont
  {Verresen}}]{Thorngren_2021}%
  \BibitemOpen
  \bibfield  {author} {\bibinfo {author} {\bibfnamefont {R.}~\bibnamefont
  {Thorngren}}, \bibinfo {author} {\bibfnamefont {A.}~\bibnamefont
  {Vishwanath}}, \ and\ \bibinfo {author} {\bibfnamefont {R.}~\bibnamefont
  {Verresen}},\ }\href {\doibase 10.1103/physrevb.104.075132} {\bibfield
  {journal} {\bibinfo  {journal} {Physical Review B}\ }\textbf {\bibinfo
  {volume} {104}} (\bibinfo {year} {2021}),\
  10.1103/physrevb.104.075132}\BibitemShut {NoStop}%
\bibitem [{\citenamefont {Else}\ and\ \citenamefont {Nayak}(2014)}]{Else2014}%
  \BibitemOpen
  \bibfield  {author} {\bibinfo {author} {\bibfnamefont {D.~V.}\ \bibnamefont
  {Else}}\ and\ \bibinfo {author} {\bibfnamefont {C.}~\bibnamefont {Nayak}},\
  }\href {\doibase 10.1103/PhysRevB.90.235137} {\bibfield  {journal} {\bibinfo
  {journal} {Phys. Rev. B}\ }\textbf {\bibinfo {volume} {90}},\ \bibinfo
  {pages} {235137} (\bibinfo {year} {2014})}\BibitemShut {NoStop}%
\bibitem [{\citenamefont {Kapustin}(2014)}]{kapustin2014}%
  \BibitemOpen
  \bibfield  {author} {\bibinfo {author} {\bibfnamefont {A.}~\bibnamefont
  {Kapustin}},\ }\href {https://arxiv.org/abs/1403.1467} {\enquote {\bibinfo
  {title} {Symmetry protected topological phases, anomalies, and cobordisms:
  Beyond group cohomology},}\ } (\bibinfo {year} {2014}),\ \Eprint
  {http://arxiv.org/abs/1403.1467} {arXiv:1403.1467 [cond-mat.str-el]}
  \BibitemShut {NoStop}%
\bibitem [{\citenamefont {Fidkowski}\ \emph {et~al.}(2020)\citenamefont
  {Fidkowski}, \citenamefont {Haah},\ and\ \citenamefont
  {Hastings}}]{Fidkowski2020}%
  \BibitemOpen
  \bibfield  {author} {\bibinfo {author} {\bibfnamefont {L.}~\bibnamefont
  {Fidkowski}}, \bibinfo {author} {\bibfnamefont {J.}~\bibnamefont {Haah}}, \
  and\ \bibinfo {author} {\bibfnamefont {M.~B.}\ \bibnamefont {Hastings}},\
  }\href {\doibase 10.1103/PhysRevB.101.155124} {\bibfield  {journal} {\bibinfo
   {journal} {Phys. Rev. B}\ }\textbf {\bibinfo {volume} {101}},\ \bibinfo
  {pages} {155124} (\bibinfo {year} {2020})}\BibitemShut {NoStop}%
\bibitem [{\citenamefont {Burnell}\ \emph {et~al.}(2014)\citenamefont
  {Burnell}, \citenamefont {Chen}, \citenamefont {Fidkowski},\ and\
  \citenamefont {Vishwanath}}]{Burnell2014}%
  \BibitemOpen
  \bibfield  {author} {\bibinfo {author} {\bibfnamefont {F.~J.}\ \bibnamefont
  {Burnell}}, \bibinfo {author} {\bibfnamefont {X.}~\bibnamefont {Chen}},
  \bibinfo {author} {\bibfnamefont {L.}~\bibnamefont {Fidkowski}}, \ and\
  \bibinfo {author} {\bibfnamefont {A.}~\bibnamefont {Vishwanath}},\ }\href
  {\doibase 10.1103/PhysRevB.90.245122} {\bibfield  {journal} {\bibinfo
  {journal} {Phys. Rev. B}\ }\textbf {\bibinfo {volume} {90}},\ \bibinfo
  {pages} {245122} (\bibinfo {year} {2014})}\BibitemShut {NoStop}%
\bibitem [{\citenamefont {Tsui}\ \emph {et~al.}(2015)\citenamefont {Tsui},
  \citenamefont {Jiang}, \citenamefont {Lu},\ and\ \citenamefont
  {Lee}}]{Tsui2015}%
  \BibitemOpen
  \bibfield  {author} {\bibinfo {author} {\bibfnamefont {L.}~\bibnamefont
  {Tsui}}, \bibinfo {author} {\bibfnamefont {H.-C.}\ \bibnamefont {Jiang}},
  \bibinfo {author} {\bibfnamefont {Y.-M.}\ \bibnamefont {Lu}}, \ and\ \bibinfo
  {author} {\bibfnamefont {D.-H.}\ \bibnamefont {Lee}},\ }\href {\doibase
  https://doi.org/10.1016/j.nuclphysb.2015.04.020} {\bibfield  {journal}
  {\bibinfo  {journal} {Nuclear Physics B}\ }\textbf {\bibinfo {volume}
  {896}},\ \bibinfo {pages} {330} (\bibinfo {year} {2015})}\BibitemShut
  {NoStop}%
\bibitem [{\citenamefont {Bultinck}(2019)}]{Bultinck2019}%
  \BibitemOpen
  \bibfield  {author} {\bibinfo {author} {\bibfnamefont {N.}~\bibnamefont
  {Bultinck}},\ }\href {\doibase 10.1103/PhysRevB.100.165132} {\bibfield
  {journal} {\bibinfo  {journal} {Phys. Rev. B}\ }\textbf {\bibinfo {volume}
  {100}},\ \bibinfo {pages} {165132} (\bibinfo {year} {2019})}\BibitemShut
  {NoStop}%
\bibitem [{\citenamefont {Tantivasadakarn}\ \emph
  {et~al.}(2023{\natexlab{a}})\citenamefont {Tantivasadakarn}, \citenamefont
  {Thorngren}, \citenamefont {Vishwanath},\ and\ \citenamefont
  {Verresen}}]{Tantivasadakarn2023building}%
  \BibitemOpen
  \bibfield  {author} {\bibinfo {author} {\bibfnamefont {N.}~\bibnamefont
  {Tantivasadakarn}}, \bibinfo {author} {\bibfnamefont {R.}~\bibnamefont
  {Thorngren}}, \bibinfo {author} {\bibfnamefont {A.}~\bibnamefont
  {Vishwanath}}, \ and\ \bibinfo {author} {\bibfnamefont {R.}~\bibnamefont
  {Verresen}},\ }\href {\doibase 10.21468/SciPostPhys.14.2.013} {\bibfield
  {journal} {\bibinfo  {journal} {SciPost Phys.}\ }\textbf {\bibinfo {volume}
  {14}},\ \bibinfo {pages} {013} (\bibinfo {year}
  {2023}{\natexlab{a}})}\BibitemShut {NoStop}%
\bibitem [{\citenamefont {Tantivasadakarn}\ \emph
  {et~al.}(2023{\natexlab{b}})\citenamefont {Tantivasadakarn}, \citenamefont
  {Thorngren}, \citenamefont {Vishwanath},\ and\ \citenamefont
  {Verresen}}]{Tantivasadakarn2023pivot}%
  \BibitemOpen
  \bibfield  {author} {\bibinfo {author} {\bibfnamefont {N.}~\bibnamefont
  {Tantivasadakarn}}, \bibinfo {author} {\bibfnamefont {R.}~\bibnamefont
  {Thorngren}}, \bibinfo {author} {\bibfnamefont {A.}~\bibnamefont
  {Vishwanath}}, \ and\ \bibinfo {author} {\bibfnamefont {R.}~\bibnamefont
  {Verresen}},\ }\href {\doibase 10.21468/SciPostPhys.14.2.012} {\bibfield
  {journal} {\bibinfo  {journal} {SciPost Phys.}\ }\textbf {\bibinfo {volume}
  {14}},\ \bibinfo {pages} {012} (\bibinfo {year}
  {2023}{\natexlab{b}})}\BibitemShut {NoStop}%
\bibitem [{\citenamefont {Raussendorf}\ \emph {et~al.}(2003)\citenamefont
  {Raussendorf}, \citenamefont {Browne},\ and\ \citenamefont
  {Briegel}}]{Raussendorf2003}%
  \BibitemOpen
  \bibfield  {author} {\bibinfo {author} {\bibfnamefont {R.}~\bibnamefont
  {Raussendorf}}, \bibinfo {author} {\bibfnamefont {D.~E.}\ \bibnamefont
  {Browne}}, \ and\ \bibinfo {author} {\bibfnamefont {H.~J.}\ \bibnamefont
  {Briegel}},\ }\href {\doibase 10.1103/PhysRevA.68.022312} {\bibfield
  {journal} {\bibinfo  {journal} {Phys. Rev. A}\ }\textbf {\bibinfo {volume}
  {68}},\ \bibinfo {pages} {022312} (\bibinfo {year} {2003})}\BibitemShut
  {NoStop}%
\bibitem [{\citenamefont {Miller}\ and\ \citenamefont
  {Miyake}(2018)}]{Miller2018}%
  \BibitemOpen
  \bibfield  {author} {\bibinfo {author} {\bibfnamefont {J.}~\bibnamefont
  {Miller}}\ and\ \bibinfo {author} {\bibfnamefont {A.}~\bibnamefont
  {Miyake}},\ }\href {\doibase 10.1103/PhysRevLett.120.170503} {\bibfield
  {journal} {\bibinfo  {journal} {Phys. Rev. Lett.}\ }\textbf {\bibinfo
  {volume} {120}},\ \bibinfo {pages} {170503} (\bibinfo {year}
  {2018})}\BibitemShut {NoStop}%
\bibitem [{\citenamefont {Seifnashri}\ and\ \citenamefont
  {Shao}(2024)}]{Seifnashri2024}%
  \BibitemOpen
  \bibfield  {author} {\bibinfo {author} {\bibfnamefont {S.}~\bibnamefont
  {Seifnashri}}\ and\ \bibinfo {author} {\bibfnamefont {S.-H.}\ \bibnamefont
  {Shao}},\ }\href {\doibase 10.1103/PhysRevLett.133.116601} {\bibfield
  {journal} {\bibinfo  {journal} {Phys. Rev. Lett.}\ }\textbf {\bibinfo
  {volume} {133}},\ \bibinfo {pages} {116601} (\bibinfo {year}
  {2024})}\BibitemShut {NoStop}%
\bibitem [{\citenamefont {Levin}\ and\ \citenamefont {Gu}(2012)}]{Levin2012}%
  \BibitemOpen
  \bibfield  {author} {\bibinfo {author} {\bibfnamefont {M.}~\bibnamefont
  {Levin}}\ and\ \bibinfo {author} {\bibfnamefont {Z.-C.}\ \bibnamefont {Gu}},\
  }\href {\doibase 10.1103/PhysRevB.86.115109} {\bibfield  {journal} {\bibinfo
  {journal} {Phys. Rev. B}\ }\textbf {\bibinfo {volume} {86}},\ \bibinfo
  {pages} {115109} (\bibinfo {year} {2012})}\BibitemShut {NoStop}%
\bibitem [{\citenamefont {Chen}\ \emph {et~al.}(2015)\citenamefont {Chen},
  \citenamefont {Burnell}, \citenamefont {Vishwanath},\ and\ \citenamefont
  {Fidkowski}}]{Chen2015anomalous}%
  \BibitemOpen
  \bibfield  {author} {\bibinfo {author} {\bibfnamefont {X.}~\bibnamefont
  {Chen}}, \bibinfo {author} {\bibfnamefont {F.~J.}\ \bibnamefont {Burnell}},
  \bibinfo {author} {\bibfnamefont {A.}~\bibnamefont {Vishwanath}}, \ and\
  \bibinfo {author} {\bibfnamefont {L.}~\bibnamefont {Fidkowski}},\ }\href
  {\doibase 10.1103/PhysRevX.5.041013} {\bibfield  {journal} {\bibinfo
  {journal} {Phys. Rev. X}\ }\textbf {\bibinfo {volume} {5}},\ \bibinfo {pages}
  {041013} (\bibinfo {year} {2015})}\BibitemShut {NoStop}%
\bibitem [{\citenamefont {Shirley}\ \emph {et~al.}(2019)\citenamefont
  {Shirley}, \citenamefont {Slagle},\ and\ \citenamefont
  {Chen}}]{Shirley2019fractional}%
  \BibitemOpen
  \bibfield  {author} {\bibinfo {author} {\bibfnamefont {W.}~\bibnamefont
  {Shirley}}, \bibinfo {author} {\bibfnamefont {K.}~\bibnamefont {Slagle}}, \
  and\ \bibinfo {author} {\bibfnamefont {X.}~\bibnamefont {Chen}},\ }\href
  {\doibase https://doi.org/10.1016/j.aop.2019.167922} {\bibfield  {journal}
  {\bibinfo  {journal} {Annals of Physics}\ }\textbf {\bibinfo {volume}
  {410}},\ \bibinfo {pages} {167922} (\bibinfo {year} {2019})}\BibitemShut
  {NoStop}%
\bibitem [{\citenamefont {Barkeshli}\ \emph {et~al.}(2019)\citenamefont
  {Barkeshli}, \citenamefont {Bonderson}, \citenamefont {Cheng},\ and\
  \citenamefont {Wang}}]{Barkeshli2019}%
  \BibitemOpen
  \bibfield  {author} {\bibinfo {author} {\bibfnamefont {M.}~\bibnamefont
  {Barkeshli}}, \bibinfo {author} {\bibfnamefont {P.}~\bibnamefont
  {Bonderson}}, \bibinfo {author} {\bibfnamefont {M.}~\bibnamefont {Cheng}}, \
  and\ \bibinfo {author} {\bibfnamefont {Z.}~\bibnamefont {Wang}},\ }\href
  {\doibase 10.1103/PhysRevB.100.115147} {\bibfield  {journal} {\bibinfo
  {journal} {Phys. Rev. B}\ }\textbf {\bibinfo {volume} {100}},\ \bibinfo
  {pages} {115147} (\bibinfo {year} {2019})}\BibitemShut {NoStop}%
\bibitem [{\citenamefont {Wang}\ and\ \citenamefont
  {Levin}(2013)}]{Wang2013weak}%
  \BibitemOpen
  \bibfield  {author} {\bibinfo {author} {\bibfnamefont {C.}~\bibnamefont
  {Wang}}\ and\ \bibinfo {author} {\bibfnamefont {M.}~\bibnamefont {Levin}},\
  }\href {\doibase 10.1103/PhysRevB.88.245136} {\bibfield  {journal} {\bibinfo
  {journal} {Phys. Rev. B}\ }\textbf {\bibinfo {volume} {88}},\ \bibinfo
  {pages} {245136} (\bibinfo {year} {2013})}\BibitemShut {NoStop}%
\bibitem [{\citenamefont {Lu}\ and\ \citenamefont {Lee}(2014)}]{Lu2014gapped}%
  \BibitemOpen
  \bibfield  {author} {\bibinfo {author} {\bibfnamefont {Y.-M.}\ \bibnamefont
  {Lu}}\ and\ \bibinfo {author} {\bibfnamefont {D.-H.}\ \bibnamefont {Lee}},\
  }\href {\doibase 10.1103/PhysRevB.89.205117} {\bibfield  {journal} {\bibinfo
  {journal} {Phys. Rev. B}\ }\textbf {\bibinfo {volume} {89}},\ \bibinfo
  {pages} {205117} (\bibinfo {year} {2014})}\BibitemShut {NoStop}%
\bibitem [{\citenamefont {Cheng}\ and\ \citenamefont
  {Williamson}(2020)}]{Cheng2020relative}%
  \BibitemOpen
  \bibfield  {author} {\bibinfo {author} {\bibfnamefont {M.}~\bibnamefont
  {Cheng}}\ and\ \bibinfo {author} {\bibfnamefont {D.~J.}\ \bibnamefont
  {Williamson}},\ }\href {\doibase 10.1103/PhysRevResearch.2.043044} {\bibfield
   {journal} {\bibinfo  {journal} {Phys. Rev. Res.}\ }\textbf {\bibinfo
  {volume} {2}},\ \bibinfo {pages} {043044} (\bibinfo {year}
  {2020})}\BibitemShut {NoStop}%
\bibitem [{\citenamefont {Dauphinais}\ \emph {et~al.}(2019)\citenamefont
  {Dauphinais}, \citenamefont {Ortiz}, \citenamefont {Varona},\ and\
  \citenamefont {Martin-Delgado}}]{Dauphinais2019}%
  \BibitemOpen
  \bibfield  {author} {\bibinfo {author} {\bibfnamefont {G.}~\bibnamefont
  {Dauphinais}}, \bibinfo {author} {\bibfnamefont {L.}~\bibnamefont {Ortiz}},
  \bibinfo {author} {\bibfnamefont {S.}~\bibnamefont {Varona}}, \ and\ \bibinfo
  {author} {\bibfnamefont {M.~A.}\ \bibnamefont {Martin-Delgado}},\ }\href
  {\doibase 10.1088/1367-2630/ab1ed8} {\bibfield  {journal} {\bibinfo
  {journal} {New Journal of Physics}\ }\textbf {\bibinfo {volume} {21}},\
  \bibinfo {pages} {053035} (\bibinfo {year} {2019})}\BibitemShut {NoStop}%
\bibitem [{\citenamefont {von Keyserlingk}\ \emph {et~al.}(2013)\citenamefont
  {von Keyserlingk}, \citenamefont {Burnell},\ and\ \citenamefont
  {Simon}}]{vonKeyserlingk2013}%
  \BibitemOpen
  \bibfield  {author} {\bibinfo {author} {\bibfnamefont {C.~W.}\ \bibnamefont
  {von Keyserlingk}}, \bibinfo {author} {\bibfnamefont {F.~J.}\ \bibnamefont
  {Burnell}}, \ and\ \bibinfo {author} {\bibfnamefont {S.~H.}\ \bibnamefont
  {Simon}},\ }\href {\doibase 10.1103/PhysRevB.87.045107} {\bibfield  {journal}
  {\bibinfo  {journal} {Phys. Rev. B}\ }\textbf {\bibinfo {volume} {87}},\
  \bibinfo {pages} {045107} (\bibinfo {year} {2013})}\BibitemShut {NoStop}%
\bibitem [{\citenamefont {Magdalena de~la Fuente}\ \emph
  {et~al.}(2021)\citenamefont {Magdalena de~la Fuente}, \citenamefont
  {Tarantino},\ and\ \citenamefont {Eisert}}]{MagdalenadelaFuente2021nonpauli}%
  \BibitemOpen
  \bibfield  {author} {\bibinfo {author} {\bibfnamefont {J.~C.}\ \bibnamefont
  {Magdalena de~la Fuente}}, \bibinfo {author} {\bibfnamefont {N.}~\bibnamefont
  {Tarantino}}, \ and\ \bibinfo {author} {\bibfnamefont {J.}~\bibnamefont
  {Eisert}},\ }\href {\doibase 10.22331/q-2021-02-17-398} {\bibfield  {journal}
  {\bibinfo  {journal} {{Quantum}}\ }\textbf {\bibinfo {volume} {5}},\ \bibinfo
  {pages} {398} (\bibinfo {year} {2021})}\BibitemShut {NoStop}%
\bibitem [{\citenamefont {Webster}\ \emph {et~al.}(2022)\citenamefont
  {Webster}, \citenamefont {Brown},\ and\ \citenamefont
  {Bartlett}}]{Webster2022xpstabiliser}%
  \BibitemOpen
  \bibfield  {author} {\bibinfo {author} {\bibfnamefont {M.~A.}\ \bibnamefont
  {Webster}}, \bibinfo {author} {\bibfnamefont {B.~J.}\ \bibnamefont {Brown}},
  \ and\ \bibinfo {author} {\bibfnamefont {S.~D.}\ \bibnamefont {Bartlett}},\
  }\href {\doibase 10.22331/q-2022-09-22-815} {\bibfield  {journal} {\bibinfo
  {journal} {{Quantum}}\ }\textbf {\bibinfo {volume} {6}},\ \bibinfo {pages}
  {815} (\bibinfo {year} {2022})}\BibitemShut {NoStop}%
\bibitem [{\citenamefont {Yoshida}(2016)}]{Yoshida2016}%
  \BibitemOpen
  \bibfield  {author} {\bibinfo {author} {\bibfnamefont {B.}~\bibnamefont
  {Yoshida}},\ }\href {\doibase 10.1103/PhysRevB.93.155131} {\bibfield
  {journal} {\bibinfo  {journal} {Phys. Rev. B}\ }\textbf {\bibinfo {volume}
  {93}},\ \bibinfo {pages} {155131} (\bibinfo {year} {2016})}\BibitemShut
  {NoStop}%
\bibitem [{\citenamefont {Verresen}\ \emph {et~al.}(2024)\citenamefont
  {Verresen}, \citenamefont {Borla}, \citenamefont {Vishwanath}, \citenamefont
  {Moroz},\ and\ \citenamefont {Thorngren}}]{verresen2024higgs}%
  \BibitemOpen
  \bibfield  {author} {\bibinfo {author} {\bibfnamefont {R.}~\bibnamefont
  {Verresen}}, \bibinfo {author} {\bibfnamefont {U.}~\bibnamefont {Borla}},
  \bibinfo {author} {\bibfnamefont {A.}~\bibnamefont {Vishwanath}}, \bibinfo
  {author} {\bibfnamefont {S.}~\bibnamefont {Moroz}}, \ and\ \bibinfo {author}
  {\bibfnamefont {R.}~\bibnamefont {Thorngren}},\ }\href
  {https://arxiv.org/abs/2211.01376} {\enquote {\bibinfo {title} {Higgs
  condensates are symmetry-protected topological phases: I. discrete
  symmetries},}\ } (\bibinfo {year} {2024}),\ \Eprint
  {http://arxiv.org/abs/2211.01376} {arXiv:2211.01376 [cond-mat.str-el]}
  \BibitemShut {NoStop}%
\bibitem [{\citenamefont {Nussinov}\ and\ \citenamefont
  {Ortiz}(2009)}]{Nussinov2009}%
  \BibitemOpen
  \bibfield  {author} {\bibinfo {author} {\bibfnamefont {Z.}~\bibnamefont
  {Nussinov}}\ and\ \bibinfo {author} {\bibfnamefont {G.}~\bibnamefont
  {Ortiz}},\ }\href {\doibase 10.1016/j.aop.2008.11.002} {\bibfield  {journal}
  {\bibinfo  {journal} {Annals of Physics}\ }\textbf {\bibinfo {volume}
  {324}},\ \bibinfo {pages} {977–1057} (\bibinfo {year} {2009})}\BibitemShut
  {NoStop}%
\bibitem [{\citenamefont {Raussendorf}(2005)}]{Raussendorf2005}%
  \BibitemOpen
  \bibfield  {author} {\bibinfo {author} {\bibfnamefont {R.}~\bibnamefont
  {Raussendorf}},\ }\href {\doibase 10.1103/PhysRevA.72.052301} {\bibfield
  {journal} {\bibinfo  {journal} {Phys. Rev. A}\ }\textbf {\bibinfo {volume}
  {72}},\ \bibinfo {pages} {052301} (\bibinfo {year} {2005})}\BibitemShut
  {NoStop}%
\bibitem [{\citenamefont {Roberts}\ \emph {et~al.}(2017)\citenamefont
  {Roberts}, \citenamefont {Yoshida}, \citenamefont {Kubica},\ and\
  \citenamefont {Bartlett}}]{Roberts2017}%
  \BibitemOpen
  \bibfield  {author} {\bibinfo {author} {\bibfnamefont {S.}~\bibnamefont
  {Roberts}}, \bibinfo {author} {\bibfnamefont {B.}~\bibnamefont {Yoshida}},
  \bibinfo {author} {\bibfnamefont {A.}~\bibnamefont {Kubica}}, \ and\ \bibinfo
  {author} {\bibfnamefont {S.~D.}\ \bibnamefont {Bartlett}},\ }\href {\doibase
  10.1103/PhysRevA.96.022306} {\bibfield  {journal} {\bibinfo  {journal} {Phys.
  Rev. A}\ }\textbf {\bibinfo {volume} {96}},\ \bibinfo {pages} {022306}
  (\bibinfo {year} {2017})}\BibitemShut {NoStop}%
\bibitem [{\citenamefont {Hamma}\ \emph {et~al.}(2005)\citenamefont {Hamma},
  \citenamefont {Zanardi},\ and\ \citenamefont {Wen}}]{Hamma2005}%
  \BibitemOpen
  \bibfield  {author} {\bibinfo {author} {\bibfnamefont {A.}~\bibnamefont
  {Hamma}}, \bibinfo {author} {\bibfnamefont {P.}~\bibnamefont {Zanardi}}, \
  and\ \bibinfo {author} {\bibfnamefont {X.-G.}\ \bibnamefont {Wen}},\ }\href
  {\doibase 10.1103/PhysRevB.72.035307} {\bibfield  {journal} {\bibinfo
  {journal} {Phys. Rev. B}\ }\textbf {\bibinfo {volume} {72}},\ \bibinfo
  {pages} {035307} (\bibinfo {year} {2005})}\BibitemShut {NoStop}%
\bibitem [{\citenamefont {Vijay}\ \emph {et~al.}(2016)\citenamefont {Vijay},
  \citenamefont {Haah},\ and\ \citenamefont {Fu}}]{VHF}%
  \BibitemOpen
  \bibfield  {author} {\bibinfo {author} {\bibfnamefont {S.}~\bibnamefont
  {Vijay}}, \bibinfo {author} {\bibfnamefont {J.}~\bibnamefont {Haah}}, \ and\
  \bibinfo {author} {\bibfnamefont {L.}~\bibnamefont {Fu}},\ }\href {\doibase
  10.1103/PhysRevB.94.235157} {\bibfield  {journal} {\bibinfo  {journal} {Phys.
  Rev. B}\ }\textbf {\bibinfo {volume} {94}},\ \bibinfo {pages} {235157}
  (\bibinfo {year} {2016})}\BibitemShut {NoStop}%
\bibitem [{\citenamefont {You}\ \emph {et~al.}(2018)\citenamefont {You},
  \citenamefont {Devakul}, \citenamefont {Burnell},\ and\ \citenamefont
  {Sondhi}}]{You2018}%
  \BibitemOpen
  \bibfield  {author} {\bibinfo {author} {\bibfnamefont {Y.}~\bibnamefont
  {You}}, \bibinfo {author} {\bibfnamefont {T.}~\bibnamefont {Devakul}},
  \bibinfo {author} {\bibfnamefont {F.~J.}\ \bibnamefont {Burnell}}, \ and\
  \bibinfo {author} {\bibfnamefont {S.~L.}\ \bibnamefont {Sondhi}},\ }\href
  {\doibase 10.1103/PhysRevB.98.035112} {\bibfield  {journal} {\bibinfo
  {journal} {Phys. Rev. B}\ }\textbf {\bibinfo {volume} {98}},\ \bibinfo
  {pages} {035112} (\bibinfo {year} {2018})}\BibitemShut {NoStop}%
\bibitem [{\citenamefont {Devakul}\ \emph {et~al.}(2018)\citenamefont
  {Devakul}, \citenamefont {Williamson},\ and\ \citenamefont
  {You}}]{Devakul2018strong}%
  \BibitemOpen
  \bibfield  {author} {\bibinfo {author} {\bibfnamefont {T.}~\bibnamefont
  {Devakul}}, \bibinfo {author} {\bibfnamefont {D.~J.}\ \bibnamefont
  {Williamson}}, \ and\ \bibinfo {author} {\bibfnamefont {Y.}~\bibnamefont
  {You}},\ }\href {\doibase 10.1103/PhysRevB.98.235121} {\bibfield  {journal}
  {\bibinfo  {journal} {Phys. Rev. B}\ }\textbf {\bibinfo {volume} {98}},\
  \bibinfo {pages} {235121} (\bibinfo {year} {2018})}\BibitemShut {NoStop}%
\bibitem [{\citenamefont {de~Groot}\ \emph {et~al.}(2022)\citenamefont
  {de~Groot}, \citenamefont {Turzillo},\ and\ \citenamefont
  {Schuch}}]{de_Groot_2022}%
  \BibitemOpen
  \bibfield  {author} {\bibinfo {author} {\bibfnamefont {C.}~\bibnamefont
  {de~Groot}}, \bibinfo {author} {\bibfnamefont {A.}~\bibnamefont {Turzillo}},
  \ and\ \bibinfo {author} {\bibfnamefont {N.}~\bibnamefont {Schuch}},\ }\href
  {\doibase 10.22331/q-2022-11-10-856} {\bibfield  {journal} {\bibinfo
  {journal} {Quantum}\ }\textbf {\bibinfo {volume} {6}},\ \bibinfo {pages}
  {856} (\bibinfo {year} {2022})}\BibitemShut {NoStop}%
\bibitem [{\citenamefont {Ma}\ \emph {et~al.}(2024)\citenamefont {Ma},
  \citenamefont {Zhang}, \citenamefont {Bi}, \citenamefont {Cheng},\ and\
  \citenamefont {Wang}}]{Average_2023}%
  \BibitemOpen
  \bibfield  {author} {\bibinfo {author} {\bibfnamefont {R.}~\bibnamefont
  {Ma}}, \bibinfo {author} {\bibfnamefont {J.-H.}\ \bibnamefont {Zhang}},
  \bibinfo {author} {\bibfnamefont {Z.}~\bibnamefont {Bi}}, \bibinfo {author}
  {\bibfnamefont {M.}~\bibnamefont {Cheng}}, \ and\ \bibinfo {author}
  {\bibfnamefont {C.}~\bibnamefont {Wang}},\ }\href
  {https://arxiv.org/abs/2305.16399} {\enquote {\bibinfo {title} {Topological
  phases with average symmetries: the decohered, the disordered, and the
  intrinsic},}\ } (\bibinfo {year} {2024}),\ \Eprint
  {http://arxiv.org/abs/2305.16399} {arXiv:2305.16399 [cond-mat.str-el]}
  \BibitemShut {NoStop}%
\bibitem [{\citenamefont {Lee}\ \emph {et~al.}(2023)\citenamefont {Lee},
  \citenamefont {Jian},\ and\ \citenamefont {Xu}}]{Lee_2023}%
  \BibitemOpen
  \bibfield  {author} {\bibinfo {author} {\bibfnamefont {J.~Y.}\ \bibnamefont
  {Lee}}, \bibinfo {author} {\bibfnamefont {C.-M.}\ \bibnamefont {Jian}}, \
  and\ \bibinfo {author} {\bibfnamefont {C.}~\bibnamefont {Xu}},\ }\href
  {\doibase 10.1103/PRXQuantum.4.030317} {\bibfield  {journal} {\bibinfo
  {journal} {PRX Quantum}\ }\textbf {\bibinfo {volume} {4}},\ \bibinfo {pages}
  {030317} (\bibinfo {year} {2023})}\BibitemShut {NoStop}%
\bibitem [{\citenamefont {Sala}\ \emph
  {et~al.}(2024{\natexlab{a}})\citenamefont {Sala}, \citenamefont
  {Gopalakrishnan}, \citenamefont {Oshikawa},\ and\ \citenamefont
  {You}}]{sala2024spontaneous}%
  \BibitemOpen
  \bibfield  {author} {\bibinfo {author} {\bibfnamefont {P.}~\bibnamefont
  {Sala}}, \bibinfo {author} {\bibfnamefont {S.}~\bibnamefont
  {Gopalakrishnan}}, \bibinfo {author} {\bibfnamefont {M.}~\bibnamefont
  {Oshikawa}}, \ and\ \bibinfo {author} {\bibfnamefont {Y.}~\bibnamefont
  {You}},\ }\href {https://arxiv.org/abs/2405.02402} {\enquote {\bibinfo
  {title} {Spontaneous strong symmetry breaking in open systems: Purification
  perspective},}\ } (\bibinfo {year} {2024}{\natexlab{a}}),\ \Eprint
  {http://arxiv.org/abs/2405.02402} {arXiv:2405.02402 [quant-ph]} \BibitemShut
  {NoStop}%
\bibitem [{\citenamefont {Huang}\ \emph {et~al.}(2024)\citenamefont {Huang},
  \citenamefont {Qi}, \citenamefont {Zhang},\ and\ \citenamefont
  {Lucas}}]{huang2024hydro}%
  \BibitemOpen
  \bibfield  {author} {\bibinfo {author} {\bibfnamefont {X.}~\bibnamefont
  {Huang}}, \bibinfo {author} {\bibfnamefont {M.}~\bibnamefont {Qi}}, \bibinfo
  {author} {\bibfnamefont {J.-H.}\ \bibnamefont {Zhang}}, \ and\ \bibinfo
  {author} {\bibfnamefont {A.}~\bibnamefont {Lucas}},\ }\href
  {https://arxiv.org/abs/2407.08760} {\enquote {\bibinfo {title} {Hydrodynamics
  as the effective field theory of strong-to-weak spontaneous symmetry
  breaking},}\ } (\bibinfo {year} {2024}),\ \Eprint
  {http://arxiv.org/abs/2407.08760} {arXiv:2407.08760 [cond-mat.str-el]}
  \BibitemShut {NoStop}%
\bibitem [{\citenamefont {Xu}\ and\ \citenamefont
  {Jian}(2024)}]{xu2024average}%
  \BibitemOpen
  \bibfield  {author} {\bibinfo {author} {\bibfnamefont {Y.}~\bibnamefont
  {Xu}}\ and\ \bibinfo {author} {\bibfnamefont {C.-M.}\ \bibnamefont {Jian}},\
  }\href {https://arxiv.org/abs/2406.07417} {\enquote {\bibinfo {title}
  {Average-exact mixed anomalies and compatible phases},}\ } (\bibinfo {year}
  {2024}),\ \Eprint {http://arxiv.org/abs/2406.07417} {arXiv:2406.07417
  [cond-mat.str-el]} \BibitemShut {NoStop}%
\bibitem [{\citenamefont {Moharramipour}\ \emph {et~al.}(2024)\citenamefont
  {Moharramipour}, \citenamefont {Lessa}, \citenamefont {Wang}, \citenamefont
  {Hsieh},\ and\ \citenamefont {Sahu}}]{moharramipour2024symmetry}%
  \BibitemOpen
  \bibfield  {author} {\bibinfo {author} {\bibfnamefont {A.}~\bibnamefont
  {Moharramipour}}, \bibinfo {author} {\bibfnamefont {L.~A.}\ \bibnamefont
  {Lessa}}, \bibinfo {author} {\bibfnamefont {C.}~\bibnamefont {Wang}},
  \bibinfo {author} {\bibfnamefont {T.~H.}\ \bibnamefont {Hsieh}}, \ and\
  \bibinfo {author} {\bibfnamefont {S.}~\bibnamefont {Sahu}},\ }\href
  {https://arxiv.org/abs/2406.08542} {\enquote {\bibinfo {title} {Symmetry
  enforced entanglement in maximally mixed states},}\ } (\bibinfo {year}
  {2024}),\ \Eprint {http://arxiv.org/abs/2406.08542} {arXiv:2406.08542
  [quant-ph]} \BibitemShut {NoStop}%
\bibitem [{\citenamefont {Gu}\ \emph {et~al.}(2024)\citenamefont {Gu},
  \citenamefont {Wang},\ and\ \citenamefont {Wang}}]{gu2024spontaneous}%
  \BibitemOpen
  \bibfield  {author} {\bibinfo {author} {\bibfnamefont {D.}~\bibnamefont
  {Gu}}, \bibinfo {author} {\bibfnamefont {Z.}~\bibnamefont {Wang}}, \ and\
  \bibinfo {author} {\bibfnamefont {Z.}~\bibnamefont {Wang}},\ }\href
  {https://arxiv.org/abs/2406.19381} {\enquote {\bibinfo {title} {Spontaneous
  symmetry breaking in open quantum systems: strong, weak, and
  strong-to-weak},}\ } (\bibinfo {year} {2024}),\ \Eprint
  {http://arxiv.org/abs/2406.19381} {arXiv:2406.19381 [quant-ph]} \BibitemShut
  {NoStop}%
\bibitem [{\citenamefont {Guo}\ \emph {et~al.}(2024{\natexlab{a}})\citenamefont
  {Guo}, \citenamefont {Ding},\ and\ \citenamefont {Yang}}]{guo2024new}%
  \BibitemOpen
  \bibfield  {author} {\bibinfo {author} {\bibfnamefont {Y.}~\bibnamefont
  {Guo}}, \bibinfo {author} {\bibfnamefont {K.}~\bibnamefont {Ding}}, \ and\
  \bibinfo {author} {\bibfnamefont {S.}~\bibnamefont {Yang}},\ }\href
  {https://arxiv.org/abs/2408.03239} {\enquote {\bibinfo {title} {A new
  framework for quantum phases in open systems: Steady state of imaginary-time
  lindbladian evolution},}\ } (\bibinfo {year} {2024}{\natexlab{a}}),\ \Eprint
  {http://arxiv.org/abs/2408.03239} {arXiv:2408.03239 [quant-ph]} \BibitemShut
  {NoStop}%
\bibitem [{\citenamefont {Kuno}\ \emph {et~al.}(2024)\citenamefont {Kuno},
  \citenamefont {Orito},\ and\ \citenamefont {Ichinose}}]{Kuno_2024}%
  \BibitemOpen
  \bibfield  {author} {\bibinfo {author} {\bibfnamefont {Y.}~\bibnamefont
  {Kuno}}, \bibinfo {author} {\bibfnamefont {T.}~\bibnamefont {Orito}}, \ and\
  \bibinfo {author} {\bibfnamefont {I.}~\bibnamefont {Ichinose}},\ }\href
  {\doibase 10.1103/PhysRevB.110.094106} {\bibfield  {journal} {\bibinfo
  {journal} {Phys. Rev. B}\ }\textbf {\bibinfo {volume} {110}},\ \bibinfo
  {pages} {094106} (\bibinfo {year} {2024})}\BibitemShut {NoStop}%
\bibitem [{\citenamefont {Sala}\ \emph
  {et~al.}(2024{\natexlab{b}})\citenamefont {Sala}, \citenamefont {Alicea},\
  and\ \citenamefont {Verresen}}]{sala2024decoherence}%
  \BibitemOpen
  \bibfield  {author} {\bibinfo {author} {\bibfnamefont {P.}~\bibnamefont
  {Sala}}, \bibinfo {author} {\bibfnamefont {J.}~\bibnamefont {Alicea}}, \ and\
  \bibinfo {author} {\bibfnamefont {R.}~\bibnamefont {Verresen}},\ }\href
  {https://arxiv.org/abs/2409.12948} {\enquote {\bibinfo {title} {Decoherence
  and wavefunction deformation of $d_4$ non-abelian topological order},}\ }
  (\bibinfo {year} {2024}{\natexlab{b}}),\ \Eprint
  {http://arxiv.org/abs/2409.12948} {arXiv:2409.12948 [cond-mat.str-el]}
  \BibitemShut {NoStop}%
\bibitem [{\citenamefont {Zhang}\ \emph
  {et~al.}(2024{\natexlab{a}})\citenamefont {Zhang}, \citenamefont {Xu},\ and\
  \citenamefont {Xu}}]{zhang2024FDT}%
  \BibitemOpen
  \bibfield  {author} {\bibinfo {author} {\bibfnamefont {J.-H.}\ \bibnamefont
  {Zhang}}, \bibinfo {author} {\bibfnamefont {C.}~\bibnamefont {Xu}}, \ and\
  \bibinfo {author} {\bibfnamefont {Y.}~\bibnamefont {Xu}},\ }\href
  {https://arxiv.org/abs/2409.18944} {\enquote {\bibinfo {title}
  {Fluctuation-dissipation theorem and information geometry in open quantum
  systems},}\ } (\bibinfo {year} {2024}{\natexlab{a}}),\ \Eprint
  {http://arxiv.org/abs/2409.18944} {arXiv:2409.18944 [quant-ph]} \BibitemShut
  {NoStop}%
\bibitem [{\citenamefont {Lessa}\ \emph
  {et~al.}(2024{\natexlab{b}})\citenamefont {Lessa}, \citenamefont {Cheng},\
  and\ \citenamefont {Wang}}]{lessa2024anomaly}%
  \BibitemOpen
  \bibfield  {author} {\bibinfo {author} {\bibfnamefont {L.~A.}\ \bibnamefont
  {Lessa}}, \bibinfo {author} {\bibfnamefont {M.}~\bibnamefont {Cheng}}, \ and\
  \bibinfo {author} {\bibfnamefont {C.}~\bibnamefont {Wang}},\ }\href
  {https://arxiv.org/abs/2401.17357} {\enquote {\bibinfo {title} {Mixed-state
  quantum anomaly and multipartite entanglement},}\ } (\bibinfo {year}
  {2024}{\natexlab{b}}),\ \Eprint {http://arxiv.org/abs/2401.17357}
  {arXiv:2401.17357 [cond-mat.str-el]} \BibitemShut {NoStop}%
\bibitem [{\citenamefont {Zhang}\ \emph
  {et~al.}(2024{\natexlab{b}})\citenamefont {Zhang}, \citenamefont {Xu},
  \citenamefont {Zhang}, \citenamefont {Xu}, \citenamefont {Bi},\ and\
  \citenamefont {Luo}}]{zhang2024strong}%
  \BibitemOpen
  \bibfield  {author} {\bibinfo {author} {\bibfnamefont {C.}~\bibnamefont
  {Zhang}}, \bibinfo {author} {\bibfnamefont {Y.}~\bibnamefont {Xu}}, \bibinfo
  {author} {\bibfnamefont {J.-H.}\ \bibnamefont {Zhang}}, \bibinfo {author}
  {\bibfnamefont {C.}~\bibnamefont {Xu}}, \bibinfo {author} {\bibfnamefont
  {Z.}~\bibnamefont {Bi}}, \ and\ \bibinfo {author} {\bibfnamefont {Z.-X.}\
  \bibnamefont {Luo}},\ }\href {https://arxiv.org/abs/2409.17530} {\enquote
  {\bibinfo {title} {Strong-to-weak spontaneous breaking of 1-form symmetry and
  intrinsically mixed topological order},}\ } (\bibinfo {year}
  {2024}{\natexlab{b}}),\ \Eprint {http://arxiv.org/abs/2409.17530}
  {arXiv:2409.17530 [quant-ph]} \BibitemShut {NoStop}%
\bibitem [{\citenamefont {Dennis}\ \emph {et~al.}(2002)\citenamefont {Dennis},
  \citenamefont {Kitaev}, \citenamefont {Landahl},\ and\ \citenamefont
  {Preskill}}]{Dennis_2002}%
  \BibitemOpen
  \bibfield  {author} {\bibinfo {author} {\bibfnamefont {E.}~\bibnamefont
  {Dennis}}, \bibinfo {author} {\bibfnamefont {A.}~\bibnamefont {Kitaev}},
  \bibinfo {author} {\bibfnamefont {A.}~\bibnamefont {Landahl}}, \ and\
  \bibinfo {author} {\bibfnamefont {J.}~\bibnamefont {Preskill}},\ }\href
  {\doibase 10.1063/1.1499754} {\bibfield  {journal} {\bibinfo  {journal}
  {Journal of Mathematical Physics}\ }\textbf {\bibinfo {volume} {43}},\
  \bibinfo {pages} {4452–4505} (\bibinfo {year} {2002})}\BibitemShut
  {NoStop}%
\bibitem [{\citenamefont {Bao}\ \emph {et~al.}(2023)\citenamefont {Bao},
  \citenamefont {Fan}, \citenamefont {Vishwanath},\ and\ \citenamefont
  {Altman}}]{bao2023mixed}%
  \BibitemOpen
  \bibfield  {author} {\bibinfo {author} {\bibfnamefont {Y.}~\bibnamefont
  {Bao}}, \bibinfo {author} {\bibfnamefont {R.}~\bibnamefont {Fan}}, \bibinfo
  {author} {\bibfnamefont {A.}~\bibnamefont {Vishwanath}}, \ and\ \bibinfo
  {author} {\bibfnamefont {E.}~\bibnamefont {Altman}},\ }\href
  {https://arxiv.org/abs/2301.05687} {\enquote {\bibinfo {title} {Mixed-state
  topological order and the errorfield double formulation of
  decoherence-induced transitions},}\ } (\bibinfo {year} {2023}),\ \Eprint
  {http://arxiv.org/abs/2301.05687} {arXiv:2301.05687 [quant-ph]} \BibitemShut
  {NoStop}%
\bibitem [{\citenamefont {Sohal}\ and\ \citenamefont
  {Prem}(2024)}]{sohal2024noisy}%
  \BibitemOpen
  \bibfield  {author} {\bibinfo {author} {\bibfnamefont {R.}~\bibnamefont
  {Sohal}}\ and\ \bibinfo {author} {\bibfnamefont {A.}~\bibnamefont {Prem}},\
  }\href {https://arxiv.org/abs/2403.13879} {\enquote {\bibinfo {title} {A
  noisy approach to intrinsically mixed-state topological order},}\ } (\bibinfo
  {year} {2024}),\ \Eprint {http://arxiv.org/abs/2403.13879} {arXiv:2403.13879
  [cond-mat.str-el]} \BibitemShut {NoStop}%
\bibitem [{\citenamefont {Ellison}\ and\ \citenamefont
  {Cheng}(2024)}]{ellison2024classification}%
  \BibitemOpen
  \bibfield  {author} {\bibinfo {author} {\bibfnamefont {T.}~\bibnamefont
  {Ellison}}\ and\ \bibinfo {author} {\bibfnamefont {M.}~\bibnamefont
  {Cheng}},\ }\href {https://arxiv.org/abs/2405.02390} {\enquote {\bibinfo
  {title} {Towards a classification of mixed-state topological orders in two
  dimensions},}\ } (\bibinfo {year} {2024}),\ \Eprint
  {http://arxiv.org/abs/2405.02390} {arXiv:2405.02390 [cond-mat.str-el]}
  \BibitemShut {NoStop}%
\bibitem [{\citenamefont {Pollmann}\ and\ \citenamefont
  {Turner}(2012)}]{Pollmann2012}%
  \BibitemOpen
  \bibfield  {author} {\bibinfo {author} {\bibfnamefont {F.}~\bibnamefont
  {Pollmann}}\ and\ \bibinfo {author} {\bibfnamefont {A.~M.}\ \bibnamefont
  {Turner}},\ }\href {\doibase 10.1103/PhysRevB.86.125441} {\bibfield
  {journal} {\bibinfo  {journal} {Phys. Rev. B}\ }\textbf {\bibinfo {volume}
  {86}},\ \bibinfo {pages} {125441} (\bibinfo {year} {2012})}\BibitemShut
  {NoStop}%
\bibitem [{\citenamefont {Zhang}(2023)}]{Zhang2023topological}%
  \BibitemOpen
  \bibfield  {author} {\bibinfo {author} {\bibfnamefont {C.}~\bibnamefont
  {Zhang}},\ }\href {\doibase 10.1103/PhysRevB.107.235104} {\bibfield
  {journal} {\bibinfo  {journal} {Phys. Rev. B}\ }\textbf {\bibinfo {volume}
  {107}},\ \bibinfo {pages} {235104} (\bibinfo {year} {2023})}\BibitemShut
  {NoStop}%
\bibitem [{\citenamefont {Brandão}\ and\ \citenamefont
  {Horodecki}(2014)}]{Brandao2014}%
  \BibitemOpen
  \bibfield  {author} {\bibinfo {author} {\bibfnamefont {F.~G. S.~L.}\
  \bibnamefont {Brandão}}\ and\ \bibinfo {author} {\bibfnamefont
  {M.}~\bibnamefont {Horodecki}},\ }\href {\doibase 10.1007/s00220-014-2213-8}
  {\bibfield  {journal} {\bibinfo  {journal} {Communications in Mathematical
  Physics}\ }\textbf {\bibinfo {volume} {333}},\ \bibinfo {pages} {761–798}
  (\bibinfo {year} {2014})}\BibitemShut {NoStop}%
\bibitem [{\citenamefont {Cho}(2018)}]{Cho2018}%
  \BibitemOpen
  \bibfield  {author} {\bibinfo {author} {\bibfnamefont {J.}~\bibnamefont
  {Cho}},\ }\href {\doibase 10.1103/PhysRevX.8.031009} {\bibfield  {journal}
  {\bibinfo  {journal} {Phys. Rev. X}\ }\textbf {\bibinfo {volume} {8}},\
  \bibinfo {pages} {031009} (\bibinfo {year} {2018})}\BibitemShut {NoStop}%
\bibitem [{\citenamefont {Verstraete}\ and\ \citenamefont
  {Cirac}(2006)}]{Verstraete2006}%
  \BibitemOpen
  \bibfield  {author} {\bibinfo {author} {\bibfnamefont {F.}~\bibnamefont
  {Verstraete}}\ and\ \bibinfo {author} {\bibfnamefont {J.~I.}\ \bibnamefont
  {Cirac}},\ }\href {\doibase 10.1103/PhysRevB.73.094423} {\bibfield  {journal}
  {\bibinfo  {journal} {Phys. Rev. B}\ }\textbf {\bibinfo {volume} {73}},\
  \bibinfo {pages} {094423} (\bibinfo {year} {2006})}\BibitemShut {NoStop}%
\bibitem [{\citenamefont {P\'erez-Garc\'{\i}a}\ \emph
  {et~al.}(2008)\citenamefont {P\'erez-Garc\'{\i}a}, \citenamefont {Wolf},
  \citenamefont {Sanz}, \citenamefont {Verstraete},\ and\ \citenamefont
  {Cirac}}]{PerezGarcia2008}%
  \BibitemOpen
  \bibfield  {author} {\bibinfo {author} {\bibfnamefont {D.}~\bibnamefont
  {P\'erez-Garc\'{\i}a}}, \bibinfo {author} {\bibfnamefont {M.~M.}\
  \bibnamefont {Wolf}}, \bibinfo {author} {\bibfnamefont {M.}~\bibnamefont
  {Sanz}}, \bibinfo {author} {\bibfnamefont {F.}~\bibnamefont {Verstraete}}, \
  and\ \bibinfo {author} {\bibfnamefont {J.~I.}\ \bibnamefont {Cirac}},\ }\href
  {\doibase 10.1103/PhysRevLett.100.167202} {\bibfield  {journal} {\bibinfo
  {journal} {Phys. Rev. Lett.}\ }\textbf {\bibinfo {volume} {100}},\ \bibinfo
  {pages} {167202} (\bibinfo {year} {2008})}\BibitemShut {NoStop}%
\bibitem [{\citenamefont {Chen}\ \emph
  {et~al.}(2011{\natexlab{b}})\citenamefont {Chen}, \citenamefont {Liu},\ and\
  \citenamefont {Wen}}]{Chen2011two}%
  \BibitemOpen
  \bibfield  {author} {\bibinfo {author} {\bibfnamefont {X.}~\bibnamefont
  {Chen}}, \bibinfo {author} {\bibfnamefont {Z.-X.}\ \bibnamefont {Liu}}, \
  and\ \bibinfo {author} {\bibfnamefont {X.-G.}\ \bibnamefont {Wen}},\ }\href
  {\doibase 10.1103/PhysRevB.84.235141} {\bibfield  {journal} {\bibinfo
  {journal} {Phys. Rev. B}\ }\textbf {\bibinfo {volume} {84}},\ \bibinfo
  {pages} {235141} (\bibinfo {year} {2011}{\natexlab{b}})}\BibitemShut
  {NoStop}%
\bibitem [{\citenamefont {Schuch}\ \emph {et~al.}(2004)\citenamefont {Schuch},
  \citenamefont {Verstraete},\ and\ \citenamefont {Cirac}}]{Schuch2004}%
  \BibitemOpen
  \bibfield  {author} {\bibinfo {author} {\bibfnamefont {N.}~\bibnamefont
  {Schuch}}, \bibinfo {author} {\bibfnamefont {F.}~\bibnamefont {Verstraete}},
  \ and\ \bibinfo {author} {\bibfnamefont {J.~I.}\ \bibnamefont {Cirac}},\
  }\href {\doibase 10.1103/PhysRevA.70.042310} {\bibfield  {journal} {\bibinfo
  {journal} {Phys. Rev. A}\ }\textbf {\bibinfo {volume} {70}},\ \bibinfo
  {pages} {042310} (\bibinfo {year} {2004})}\BibitemShut {NoStop}%
\bibitem [{\citenamefont {Miyake}(2010)}]{Miyake2010}%
  \BibitemOpen
  \bibfield  {author} {\bibinfo {author} {\bibfnamefont {A.}~\bibnamefont
  {Miyake}},\ }\href {\doibase 10.1103/PhysRevLett.105.040501} {\bibfield
  {journal} {\bibinfo  {journal} {Phys. Rev. Lett.}\ }\textbf {\bibinfo
  {volume} {105}},\ \bibinfo {pages} {040501} (\bibinfo {year}
  {2010})}\BibitemShut {NoStop}%
\bibitem [{\citenamefont {Else}\ \emph {et~al.}(2012)\citenamefont {Else},
  \citenamefont {Schwarz}, \citenamefont {Bartlett},\ and\ \citenamefont
  {Doherty}}]{Else2012}%
  \BibitemOpen
  \bibfield  {author} {\bibinfo {author} {\bibfnamefont {D.~V.}\ \bibnamefont
  {Else}}, \bibinfo {author} {\bibfnamefont {I.}~\bibnamefont {Schwarz}},
  \bibinfo {author} {\bibfnamefont {S.~D.}\ \bibnamefont {Bartlett}}, \ and\
  \bibinfo {author} {\bibfnamefont {A.~C.}\ \bibnamefont {Doherty}},\ }\href
  {\doibase 10.1103/PhysRevLett.108.240505} {\bibfield  {journal} {\bibinfo
  {journal} {Phys. Rev. Lett.}\ }\textbf {\bibinfo {volume} {108}},\ \bibinfo
  {pages} {240505} (\bibinfo {year} {2012})}\BibitemShut {NoStop}%
\bibitem [{\citenamefont {Raussendorf}\ \emph {et~al.}(2019)\citenamefont
  {Raussendorf}, \citenamefont {Okay}, \citenamefont {Wang}, \citenamefont
  {Stephen},\ and\ \citenamefont {Nautrup}}]{Raussendorf2019}%
  \BibitemOpen
  \bibfield  {author} {\bibinfo {author} {\bibfnamefont {R.}~\bibnamefont
  {Raussendorf}}, \bibinfo {author} {\bibfnamefont {C.}~\bibnamefont {Okay}},
  \bibinfo {author} {\bibfnamefont {D.-S.}\ \bibnamefont {Wang}}, \bibinfo
  {author} {\bibfnamefont {D.~T.}\ \bibnamefont {Stephen}}, \ and\ \bibinfo
  {author} {\bibfnamefont {H.~P.}\ \bibnamefont {Nautrup}},\ }\href {\doibase
  10.1103/PhysRevLett.122.090501} {\bibfield  {journal} {\bibinfo  {journal}
  {Phys. Rev. Lett.}\ }\textbf {\bibinfo {volume} {122}},\ \bibinfo {pages}
  {090501} (\bibinfo {year} {2019})}\BibitemShut {NoStop}%
\bibitem [{\citenamefont {Tuckett}\ \emph {et~al.}(2019)\citenamefont
  {Tuckett}, \citenamefont {Darmawan}, \citenamefont {Chubb}, \citenamefont
  {Bravyi}, \citenamefont {Bartlett},\ and\ \citenamefont
  {Flammia}}]{Tuckett2019}%
  \BibitemOpen
  \bibfield  {author} {\bibinfo {author} {\bibfnamefont {D.~K.}\ \bibnamefont
  {Tuckett}}, \bibinfo {author} {\bibfnamefont {A.~S.}\ \bibnamefont
  {Darmawan}}, \bibinfo {author} {\bibfnamefont {C.~T.}\ \bibnamefont {Chubb}},
  \bibinfo {author} {\bibfnamefont {S.}~\bibnamefont {Bravyi}}, \bibinfo
  {author} {\bibfnamefont {S.~D.}\ \bibnamefont {Bartlett}}, \ and\ \bibinfo
  {author} {\bibfnamefont {S.~T.}\ \bibnamefont {Flammia}},\ }\href {\doibase
  10.1103/PhysRevX.9.041031} {\bibfield  {journal} {\bibinfo  {journal} {Phys.
  Rev. X}\ }\textbf {\bibinfo {volume} {9}},\ \bibinfo {pages} {041031}
  (\bibinfo {year} {2019})}\BibitemShut {NoStop}%
\bibitem [{\citenamefont {Bonilla~Ataides}\ \emph {et~al.}(2021)\citenamefont
  {Bonilla~Ataides}, \citenamefont {Tuckett}, \citenamefont {Bartlett},
  \citenamefont {Flammia},\ and\ \citenamefont {Brown}}]{BonillaAtaides2021}%
  \BibitemOpen
  \bibfield  {author} {\bibinfo {author} {\bibfnamefont {J.~P.}\ \bibnamefont
  {Bonilla~Ataides}}, \bibinfo {author} {\bibfnamefont {D.~K.}\ \bibnamefont
  {Tuckett}}, \bibinfo {author} {\bibfnamefont {S.~D.}\ \bibnamefont
  {Bartlett}}, \bibinfo {author} {\bibfnamefont {S.~T.}\ \bibnamefont
  {Flammia}}, \ and\ \bibinfo {author} {\bibfnamefont {B.~J.}\ \bibnamefont
  {Brown}},\ }\href {\doibase 10.1038/s41467-021-22274-1} {\bibfield  {journal}
  {\bibinfo  {journal} {Nature Communications}\ }\textbf {\bibinfo {volume}
  {12}} (\bibinfo {year} {2021}),\ 10.1038/s41467-021-22274-1}\BibitemShut
  {NoStop}%
\bibitem [{\citenamefont {Chen}\ \emph {et~al.}(2023)\citenamefont {Chen},
  \citenamefont {Lucas},\ and\ \citenamefont {Yin}}]{Chen2023speed}%
  \BibitemOpen
  \bibfield  {author} {\bibinfo {author} {\bibfnamefont {C.-F.~A.}\
  \bibnamefont {Chen}}, \bibinfo {author} {\bibfnamefont {A.}~\bibnamefont
  {Lucas}}, \ and\ \bibinfo {author} {\bibfnamefont {C.}~\bibnamefont {Yin}},\
  }\href {\doibase 10.1088/1361-6633/acfaae} {\bibfield  {journal} {\bibinfo
  {journal} {Reports on Progress in Physics}\ }\textbf {\bibinfo {volume}
  {86}},\ \bibinfo {pages} {116001} (\bibinfo {year} {2023})}\BibitemShut
  {NoStop}%
\bibitem [{Note1()}]{Note1}%
  \BibitemOpen
  \bibinfo {note} {It should be noted that the protocols used in optimal
  state-transfer protocols of Refs.~\cite {Eldredge2017,Tran2021} do not use
  symmetric Hamiltonians. However, it is straightforward to modify these
  protocols such that they are symmetric, without changing their efficiency.
  See, as a demonstration, the linear-depth circuit in Ref.~\cite
  {Chen2024sequential} for creating a GHZ state.}\BibitemShut {Stop}%
\bibitem [{\citenamefont {Kitaev}(2001)}]{Kitaev2001}%
  \BibitemOpen
  \bibfield  {author} {\bibinfo {author} {\bibfnamefont {A.~Y.}\ \bibnamefont
  {Kitaev}},\ }\href {\doibase 10.1070/1063-7869/44/10s/s29} {\bibfield
  {journal} {\bibinfo  {journal} {Physics-Uspekhi}\ }\textbf {\bibinfo {volume}
  {44}},\ \bibinfo {pages} {131–136} (\bibinfo {year} {2001})}\BibitemShut
  {NoStop}%
\bibitem [{\citenamefont {Po}\ \emph {et~al.}(2017)\citenamefont {Po},
  \citenamefont {Fidkowski}, \citenamefont {Vishwanath},\ and\ \citenamefont
  {Potter}}]{Po2017}%
  \BibitemOpen
  \bibfield  {author} {\bibinfo {author} {\bibfnamefont {H.~C.}\ \bibnamefont
  {Po}}, \bibinfo {author} {\bibfnamefont {L.}~\bibnamefont {Fidkowski}},
  \bibinfo {author} {\bibfnamefont {A.}~\bibnamefont {Vishwanath}}, \ and\
  \bibinfo {author} {\bibfnamefont {A.~C.}\ \bibnamefont {Potter}},\ }\href
  {\doibase 10.1103/PhysRevB.96.245116} {\bibfield  {journal} {\bibinfo
  {journal} {Phys. Rev. B}\ }\textbf {\bibinfo {volume} {96}},\ \bibinfo
  {pages} {245116} (\bibinfo {year} {2017})}\BibitemShut {NoStop}%
\bibitem [{\citenamefont {Seiberg}\ and\ \citenamefont
  {Shao}(2024)}]{Seiberg2024majorana}%
  \BibitemOpen
  \bibfield  {author} {\bibinfo {author} {\bibfnamefont {N.}~\bibnamefont
  {Seiberg}}\ and\ \bibinfo {author} {\bibfnamefont {S.-H.}\ \bibnamefont
  {Shao}},\ }\href {\doibase 10.21468/SciPostPhys.16.3.064} {\bibfield
  {journal} {\bibinfo  {journal} {SciPost Phys.}\ }\textbf {\bibinfo {volume}
  {16}},\ \bibinfo {pages} {064} (\bibinfo {year} {2024})}\BibitemShut
  {NoStop}%
\bibitem [{\citenamefont {Rahmani}\ \emph {et~al.}(2015)\citenamefont
  {Rahmani}, \citenamefont {Zhu}, \citenamefont {Franz},\ and\ \citenamefont
  {Affleck}}]{Rahmani2015}%
  \BibitemOpen
  \bibfield  {author} {\bibinfo {author} {\bibfnamefont {A.}~\bibnamefont
  {Rahmani}}, \bibinfo {author} {\bibfnamefont {X.}~\bibnamefont {Zhu}},
  \bibinfo {author} {\bibfnamefont {M.}~\bibnamefont {Franz}}, \ and\ \bibinfo
  {author} {\bibfnamefont {I.}~\bibnamefont {Affleck}},\ }\href {\doibase
  10.1103/PhysRevB.92.235123} {\bibfield  {journal} {\bibinfo  {journal} {Phys.
  Rev. B}\ }\textbf {\bibinfo {volume} {92}},\ \bibinfo {pages} {235123}
  (\bibinfo {year} {2015})}\BibitemShut {NoStop}%
\bibitem [{\citenamefont {O'Brien}\ and\ \citenamefont
  {Fendley}(2018)}]{OBrien2018}%
  \BibitemOpen
  \bibfield  {author} {\bibinfo {author} {\bibfnamefont {E.}~\bibnamefont
  {O'Brien}}\ and\ \bibinfo {author} {\bibfnamefont {P.}~\bibnamefont
  {Fendley}},\ }\href {\doibase 10.1103/PhysRevLett.120.206403} {\bibfield
  {journal} {\bibinfo  {journal} {Phys. Rev. Lett.}\ }\textbf {\bibinfo
  {volume} {120}},\ \bibinfo {pages} {206403} (\bibinfo {year}
  {2018})}\BibitemShut {NoStop}%
\bibitem [{\citenamefont {Seiberg}\ \emph {et~al.}(2024)\citenamefont
  {Seiberg}, \citenamefont {Seifnashri},\ and\ \citenamefont
  {Shao}}]{Seiberg2024}%
  \BibitemOpen
  \bibfield  {author} {\bibinfo {author} {\bibfnamefont {N.}~\bibnamefont
  {Seiberg}}, \bibinfo {author} {\bibfnamefont {S.}~\bibnamefont {Seifnashri}},
  \ and\ \bibinfo {author} {\bibfnamefont {S.-H.}\ \bibnamefont {Shao}},\
  }\href {\doibase 10.21468/SciPostPhys.16.6.154} {\bibfield  {journal}
  {\bibinfo  {journal} {SciPost Phys.}\ }\textbf {\bibinfo {volume} {16}},\
  \bibinfo {pages} {154} (\bibinfo {year} {2024})}\BibitemShut {NoStop}%
\bibitem [{\citenamefont {Po}\ \emph {et~al.}(2018)\citenamefont {Po},
  \citenamefont {Watanabe},\ and\ \citenamefont {Vishwanath}}]{fragile}%
  \BibitemOpen
  \bibfield  {author} {\bibinfo {author} {\bibfnamefont {H.~C.}\ \bibnamefont
  {Po}}, \bibinfo {author} {\bibfnamefont {H.}~\bibnamefont {Watanabe}}, \ and\
  \bibinfo {author} {\bibfnamefont {A.}~\bibnamefont {Vishwanath}},\ }\href
  {\doibase 10.1103/PhysRevLett.121.126402} {\bibfield  {journal} {\bibinfo
  {journal} {Phys. Rev. Lett.}\ }\textbf {\bibinfo {volume} {121}},\ \bibinfo
  {pages} {126402} (\bibinfo {year} {2018})}\BibitemShut {NoStop}%
\bibitem [{\citenamefont {Shirley}\ \emph {et~al.}(2018)\citenamefont
  {Shirley}, \citenamefont {Slagle}, \citenamefont {Wang},\ and\ \citenamefont
  {Chen}}]{Shirley2018fracton}%
  \BibitemOpen
  \bibfield  {author} {\bibinfo {author} {\bibfnamefont {W.}~\bibnamefont
  {Shirley}}, \bibinfo {author} {\bibfnamefont {K.}~\bibnamefont {Slagle}},
  \bibinfo {author} {\bibfnamefont {Z.}~\bibnamefont {Wang}}, \ and\ \bibinfo
  {author} {\bibfnamefont {X.}~\bibnamefont {Chen}},\ }\href {\doibase
  10.1103/PhysRevX.8.031051} {\bibfield  {journal} {\bibinfo  {journal} {Phys.
  Rev. X}\ }\textbf {\bibinfo {volume} {8}},\ \bibinfo {pages} {031051}
  (\bibinfo {year} {2018})}\BibitemShut {NoStop}%
\bibitem [{\citenamefont {Scaffidi}\ \emph {et~al.}(2017)\citenamefont
  {Scaffidi}, \citenamefont {Parker},\ and\ \citenamefont
  {Vasseur}}]{Scaffidi2017}%
  \BibitemOpen
  \bibfield  {author} {\bibinfo {author} {\bibfnamefont {T.}~\bibnamefont
  {Scaffidi}}, \bibinfo {author} {\bibfnamefont {D.~E.}\ \bibnamefont
  {Parker}}, \ and\ \bibinfo {author} {\bibfnamefont {R.}~\bibnamefont
  {Vasseur}},\ }\href {\doibase 10.1103/PhysRevX.7.041048} {\bibfield
  {journal} {\bibinfo  {journal} {Phys. Rev. X}\ }\textbf {\bibinfo {volume}
  {7}},\ \bibinfo {pages} {041048} (\bibinfo {year} {2017})}\BibitemShut
  {NoStop}%
\bibitem [{\citenamefont {Verresen}\ \emph {et~al.}(2021)\citenamefont
  {Verresen}, \citenamefont {Thorngren}, \citenamefont {Jones},\ and\
  \citenamefont {Pollmann}}]{Verresen2021gapless}%
  \BibitemOpen
  \bibfield  {author} {\bibinfo {author} {\bibfnamefont {R.}~\bibnamefont
  {Verresen}}, \bibinfo {author} {\bibfnamefont {R.}~\bibnamefont {Thorngren}},
  \bibinfo {author} {\bibfnamefont {N.~G.}\ \bibnamefont {Jones}}, \ and\
  \bibinfo {author} {\bibfnamefont {F.}~\bibnamefont {Pollmann}},\ }\href
  {\doibase 10.1103/PhysRevX.11.041059} {\bibfield  {journal} {\bibinfo
  {journal} {Phys. Rev. X}\ }\textbf {\bibinfo {volume} {11}},\ \bibinfo
  {pages} {041059} (\bibinfo {year} {2021})}\BibitemShut {NoStop}%
\bibitem [{\citenamefont {Coser}\ and\ \citenamefont
  {P{\'{e}}rez-Garc{\'{i}}a}(2019)}]{Coser2019classificationof}%
  \BibitemOpen
  \bibfield  {author} {\bibinfo {author} {\bibfnamefont {A.}~\bibnamefont
  {Coser}}\ and\ \bibinfo {author} {\bibfnamefont {D.}~\bibnamefont
  {P{\'{e}}rez-Garc{\'{i}}a}},\ }\href {\doibase 10.22331/q-2019-08-12-174}
  {\bibfield  {journal} {\bibinfo  {journal} {{Quantum}}\ }\textbf {\bibinfo
  {volume} {3}},\ \bibinfo {pages} {174} (\bibinfo {year} {2019})}\BibitemShut
  {NoStop}%
\bibitem [{\citenamefont {Ma}\ and\ \citenamefont
  {Turzillo}(2024)}]{ma2024symmetry}%
  \BibitemOpen
  \bibfield  {author} {\bibinfo {author} {\bibfnamefont {R.}~\bibnamefont
  {Ma}}\ and\ \bibinfo {author} {\bibfnamefont {A.}~\bibnamefont {Turzillo}},\
  }\href {https://arxiv.org/abs/2403.13280} {\enquote {\bibinfo {title}
  {Symmetry protected topological phases of mixed states in the doubled
  space},}\ } (\bibinfo {year} {2024}),\ \Eprint
  {http://arxiv.org/abs/2403.13280} {arXiv:2403.13280 [quant-ph]} \BibitemShut
  {NoStop}%
\bibitem [{\citenamefont {Lee}\ \emph {et~al.}(2024)\citenamefont {Lee},
  \citenamefont {You},\ and\ \citenamefont {Xu}}]{lee2024symmetry}%
  \BibitemOpen
  \bibfield  {author} {\bibinfo {author} {\bibfnamefont {J.~Y.}\ \bibnamefont
  {Lee}}, \bibinfo {author} {\bibfnamefont {Y.-Z.}\ \bibnamefont {You}}, \ and\
  \bibinfo {author} {\bibfnamefont {C.}~\bibnamefont {Xu}},\ }\href
  {https://arxiv.org/abs/2210.16323} {\enquote {\bibinfo {title} {Symmetry
  protected topological phases under decoherence},}\ } (\bibinfo {year}
  {2024}),\ \Eprint {http://arxiv.org/abs/2210.16323} {arXiv:2210.16323
  [cond-mat.str-el]} \BibitemShut {NoStop}%
\bibitem [{\citenamefont {Xue}\ \emph {et~al.}(2024)\citenamefont {Xue},
  \citenamefont {Lee},\ and\ \citenamefont {Bao}}]{xue2024tensor}%
  \BibitemOpen
  \bibfield  {author} {\bibinfo {author} {\bibfnamefont {H.}~\bibnamefont
  {Xue}}, \bibinfo {author} {\bibfnamefont {J.~Y.}\ \bibnamefont {Lee}}, \ and\
  \bibinfo {author} {\bibfnamefont {Y.}~\bibnamefont {Bao}},\ }\href
  {https://arxiv.org/abs/2403.17069} {\enquote {\bibinfo {title} {Tensor
  network formulation of symmetry protected topological phases in mixed
  states},}\ } (\bibinfo {year} {2024}),\ \Eprint
  {http://arxiv.org/abs/2403.17069} {arXiv:2403.17069 [cond-mat.str-el]}
  \BibitemShut {NoStop}%
\bibitem [{\citenamefont {Sun}\ \emph {et~al.}(2024)\citenamefont {Sun},
  \citenamefont {Zhang}, \citenamefont {Bi},\ and\ \citenamefont
  {You}}]{sun2024holographic}%
  \BibitemOpen
  \bibfield  {author} {\bibinfo {author} {\bibfnamefont {S.}~\bibnamefont
  {Sun}}, \bibinfo {author} {\bibfnamefont {J.-H.}\ \bibnamefont {Zhang}},
  \bibinfo {author} {\bibfnamefont {Z.}~\bibnamefont {Bi}}, \ and\ \bibinfo
  {author} {\bibfnamefont {Y.}~\bibnamefont {You}},\ }\href
  {https://arxiv.org/abs/2410.08205} {\enquote {\bibinfo {title} {Holographic
  view of mixed-state symmetry-protected topological phases in open quantum
  systems},}\ } (\bibinfo {year} {2024}),\ \Eprint
  {http://arxiv.org/abs/2410.08205} {arXiv:2410.08205 [quant-ph]} \BibitemShut
  {NoStop}%
\bibitem [{\citenamefont {Ma}\ and\ \citenamefont {Wang}(2023)}]{aspt}%
  \BibitemOpen
  \bibfield  {author} {\bibinfo {author} {\bibfnamefont {R.}~\bibnamefont
  {Ma}}\ and\ \bibinfo {author} {\bibfnamefont {C.}~\bibnamefont {Wang}},\
  }\href {\doibase 10.1103/PhysRevX.13.031016} {\bibfield  {journal} {\bibinfo
  {journal} {Phys. Rev. X}\ }\textbf {\bibinfo {volume} {13}},\ \bibinfo
  {pages} {031016} (\bibinfo {year} {2023})}\BibitemShut {NoStop}%
\bibitem [{\citenamefont {Zhang}\ \emph
  {et~al.}(2024{\natexlab{c}})\citenamefont {Zhang}, \citenamefont {Qi},\ and\
  \citenamefont {Bi}}]{zhang2024strange}%
  \BibitemOpen
  \bibfield  {author} {\bibinfo {author} {\bibfnamefont {J.-H.}\ \bibnamefont
  {Zhang}}, \bibinfo {author} {\bibfnamefont {Y.}~\bibnamefont {Qi}}, \ and\
  \bibinfo {author} {\bibfnamefont {Z.}~\bibnamefont {Bi}},\ }\href
  {https://arxiv.org/abs/2210.17485} {\enquote {\bibinfo {title} {Strange
  correlation function for average symmetry-protected topological phases},}\ }
  (\bibinfo {year} {2024}{\natexlab{c}}),\ \Eprint
  {http://arxiv.org/abs/2210.17485} {arXiv:2210.17485 [cond-mat.str-el]}
  \BibitemShut {NoStop}%
\bibitem [{\citenamefont {Guo}\ \emph {et~al.}(2024{\natexlab{b}})\citenamefont
  {Guo}, \citenamefont {Zhang}, \citenamefont {Zhang}, \citenamefont {Yang},\
  and\ \citenamefont {Bi}}]{guo2024LPDO}%
  \BibitemOpen
  \bibfield  {author} {\bibinfo {author} {\bibfnamefont {Y.}~\bibnamefont
  {Guo}}, \bibinfo {author} {\bibfnamefont {J.-H.}\ \bibnamefont {Zhang}},
  \bibinfo {author} {\bibfnamefont {H.-R.}\ \bibnamefont {Zhang}}, \bibinfo
  {author} {\bibfnamefont {S.}~\bibnamefont {Yang}}, \ and\ \bibinfo {author}
  {\bibfnamefont {Z.}~\bibnamefont {Bi}},\ }\href
  {https://arxiv.org/abs/2403.16978} {\enquote {\bibinfo {title} {Locally
  purified density operators for symmetry-protected topological phases in mixed
  states},}\ } (\bibinfo {year} {2024}{\natexlab{b}}),\ \Eprint
  {http://arxiv.org/abs/2403.16978} {arXiv:2403.16978 [cond-mat.str-el]}
  \BibitemShut {NoStop}%
\bibitem [{\citenamefont {Zhang}\ \emph {et~al.}(2023)\citenamefont {Zhang},
  \citenamefont {Ding}, \citenamefont {Yang},\ and\ \citenamefont
  {Bi}}]{Zhang_2023fractonic}%
  \BibitemOpen
  \bibfield  {author} {\bibinfo {author} {\bibfnamefont {J.-H.}\ \bibnamefont
  {Zhang}}, \bibinfo {author} {\bibfnamefont {K.}~\bibnamefont {Ding}},
  \bibinfo {author} {\bibfnamefont {S.}~\bibnamefont {Yang}}, \ and\ \bibinfo
  {author} {\bibfnamefont {Z.}~\bibnamefont {Bi}},\ }\href {\doibase
  10.1103/physrevb.108.155123} {\bibfield  {journal} {\bibinfo  {journal}
  {Physical Review B}\ }\textbf {\bibinfo {volume} {108}} (\bibinfo {year}
  {2023}),\ 10.1103/physrevb.108.155123}\BibitemShut {NoStop}%
\bibitem [{\citenamefont {Ho}\ and\ \citenamefont {Hsieh}(2019)}]{Ho2019}%
  \BibitemOpen
  \bibfield  {author} {\bibinfo {author} {\bibfnamefont {W.~W.}\ \bibnamefont
  {Ho}}\ and\ \bibinfo {author} {\bibfnamefont {T.~H.}\ \bibnamefont {Hsieh}},\
  }\href {\doibase 10.21468/SciPostPhys.6.3.029} {\bibfield  {journal}
  {\bibinfo  {journal} {SciPost Phys.}\ }\textbf {\bibinfo {volume} {6}},\
  \bibinfo {pages} {029} (\bibinfo {year} {2019})}\BibitemShut {NoStop}%
\bibitem [{\citenamefont {Lanzetta}\ and\ \citenamefont
  {Fidkowski}(2023)}]{Lanzetta2023}%
  \BibitemOpen
  \bibfield  {author} {\bibinfo {author} {\bibfnamefont {R.~A.}\ \bibnamefont
  {Lanzetta}}\ and\ \bibinfo {author} {\bibfnamefont {L.}~\bibnamefont
  {Fidkowski}},\ }\href {\doibase 10.1103/PhysRevB.107.205137} {\bibfield
  {journal} {\bibinfo  {journal} {Phys. Rev. B}\ }\textbf {\bibinfo {volume}
  {107}},\ \bibinfo {pages} {205137} (\bibinfo {year} {2023})}\BibitemShut
  {NoStop}%
\end{thebibliography}%

\appendix

\section{Properties of cocycle states} \label{sec:cocycle_details}

Here we prove two claims for Sec.~\ref{sec:global_spt} about cocycle states, namely that the cocycle $\nu_d$ can be chosen such that (1) $\mathcal{U}(\nu_d)^L = \id$ for some $L$ and (2) $\mathcal{U}(\nu_d)|gg\dots g\rangle = |gg\dots g\rangle$ for all $g\in G$. First, let us be more precise about what we mean by saying such a choice is possible. As discussed shortly, cocycles can be sorted into equivalence classes such that cocycle states who cocycles belong to the same class lie in the same SPT phase. 
To create a catalyst for a desired cocycle state, it is sufficient to instead find a catalyst for any other cocycle state belonging to the same phase, since that state can be mapped to the desired state by a symmetric FDQC. Our claim is that every equivalence class contains a cocycle $\nu_d$ such that conditions (1) and (2) hold.

To prove this, we need to define the circuit $\mathcal{U}(\nu_d)$ more precisely.  Given a cocycle $\nu_d$, we define the gate \cite{Chen2013},
\begin{equation} \label{eq:cocycle_gate}
    U(\nu_d) = \sum_{g_1,\dots, g_d\in G} \nu_d(e,g_1,\dots,g_d)|g_1\cdots g_d\rangle\langle g_1\cdots g_d|.
\end{equation}
Then, given a triangulation $\Lambda_d$ consisting of $d$-simplices $\Delta_d$, we have,
\begin{equation}
    \mathcal{U}(\nu_d) = \prod_{\Delta_d\in \Lambda_d} U(\nu_d)^{s(\Delta_d)},
\end{equation}
where $U(\nu_d)$ acts on the $d+1$ sites included in the simplex $\Delta_d$. The signs $s(\Delta_d) = \pm 1$ depend on the orientation of the simplex $\Delta_d$ \cite{Chen2013}. 
In 2D, for instance, we can act with the gate $U(\nu_3)^{\pm 1}$ on the three qubits around every face of a triangular lattice where the sign depends on the orientation of the triangle. The gates $U(\nu_d)$ are all diagonal and hence commute, so $\mathcal{U}(\nu_d)$ is an FDQC.

We first argue that we can choose $\nu_d$ such that $\mathcal{U}(\nu_d)^L=\id$. We recall some basic notions in group cohomology \cite{Chen2013}. Any function $\lambda_d:G^{d+1}\rightarrow U(1)$ is called a $d$-cochain and the $d$-cochains form an abelian group under pointwise multiplication. Next, there exist certain coboundary operators $\partial_d$ that map from $d$-cochains to $d+1$-cochains. A cochain $\nu_d$ is called a cocycle if $\partial_d\nu_d = 1$ and it is called a coboundary if $\nu_d = \partial_{d-1}\mu_{d-1}$ for some $d-1$-cochain $\mu_{d-1}$ (note that every coboundary is automatically a cocycle). The cohomology classes $[\nu_d]$ are equivalence classes of cocycles modulo coboundaries and they form a group $H^{d}(G,U(1))$ called a cohomology group. 

Now, it is known that the cohomology groups are finite abelian groups. Therefore, for any cocycle $\nu_d$, there exists some $L$ such that $\nu_d^L$ belongs to the trivial class, meaning $\nu_d^L = \mu_d$ for a coboundary $\mu_d$. Now define $\tilde{\nu}_d = \nu_d \mu_d^{1/L}$ such that $\tilde{\nu}_d^L = 1$. This is again a cocycle because $\mu_d^{1/L}$ is a coboundary if $\mu_d$ is. We also have $[\nu_d]=[\tilde{\nu_d}]$. Then, from the definition in Eq.~\eqref{eq:cocycle_gate} it is clear that $U(\tilde{\nu}_d)^L = \id$ so $\mathcal{U}(\tilde{\nu}_d)^L=\id$ as well.

Next we prove that we can choose $\nu_d$ such that $\mathcal{U}(\nu_d)|gg\dots g\rangle = |gg\dots g\rangle$. From the definition in Eq.~\eqref{eq:cocycle_gate}, this is true if $\nu_d(e,g,\dots,g) = 1$. This can always be made true by modifying $\nu_d$ by an appropriate coboundary, as shown in Appendix J1 of Ref.~\cite{Chen2013}, which completes the proof.

\end{document}